\newcommand{\g}{\gamma}
\newcommand{\real}{\mathbb{R}}
\newcommand{\naturals}{\mathbb{N}}
\newcommand{\complex}{\mathbb{C}}
\newcommand{\oq}{\ {\raise 7pt\hbox{${\scriptstyle\circ}$}}
	\kern -7pt{
		\hbox{$Q$}}}
\newcommand{\R}{ \mathbb R}
\newcommand {\bx}{\mathbf x}
\newcommand {\hbx}{\hat{\bx}}
\newcommand {\by}{\mathbf y}
\newcommand {\bnu}{\boldsymbol\nu}
\newcommand {\bal}{\boldsymbol\alpha}
\newcommand{\bbeta}{\boldsymbol\beta}
\newcommand {\bsig}{\boldsymbol\sigma}
\newcommand\norm[1]{\left\lVert#1\right\rVert}
\newcommand{\2}{\1\!\1}
\newtheorem{thm}{Theorem}[section]
\newtheorem{cor}[thm]{Corollary}
\newtheorem{lem}[thm]{Lemma}
\newtheorem{prop}[thm]{Proposition}
\theoremstyle{definition}
\newtheorem*{remark}{Remark}
\newtheorem{rem}[thm]{Remark}
\numberwithin{equation}{section}
\newcommand{\bee}{\begin{equation}}
	\newcommand{\ene}{\end{equation}}
\newcommand{\bees}{\begin{equation*}}
	\newcommand{\enes}{\end{equation*}}
\newcommand{\bes}{\begin{split}}
	\newcommand{\ens}{\end{split}}
\newcommand{\bet}{\begin{thm}}
	\newcommand{\ent}{\end{thm}}
\newcommand{\bel}{\begin{lem}}
	\newcommand{\enl}{\end{lem}}
\newcommand{\bec}{\begin{cor}}
	\newcommand{\enc}{\end{cor}}
\newcommand{\bep}{\begin{proof}}
	\newcommand{\enp}{\end{proof}}
\newcommand{\ber}{\begin{rem}}
	\newcommand{\enr}{\end{rem}}
\newcommand{\al}{\alpha}
\newcommand{\1}{\mathbbm 1}
\begin{document}
	\hoffset -4pc

\title
[Fifth Derivative of the One-Particle Density Matrix]
{Boundedness of the Fifth Derivative for the One-Particle Coulombic Density Matrix at the Diagonal}
\author{Peter Hearnshaw}
\address{Centre for the Mathematics of Quantum Theory\\ University of Copenhagen\\
	Universitetsparken 5\\ DK-2100 Copenhagen \O\\ Denmark}
\email{ph@math.ku.dk}

\begin{abstract}	 
Boundedness is demonstrated for the fifth derivative of the one-particle reduced density matrix for non-relativistic Coulombic wavefunctions in the vicinity of the diagonal. To prove this result, strong pointwise bounds are obtained for cluster derivatives of wavefunctions involving multiple clusters.
\end{abstract}

\subjclass{35B65, 35J10, 81V55, 81V70}

\maketitle

\section{Introduction and results}

We consider the non-relativistic quantum system of $N \ge 2$ electrons among $N_0 \ge 1$ nuclei, of fixed position, which represents the system of an atom or molecule. For simplicity we surpress spin coordinates and restrict ourselves to the case of an atom ($N_0 = 1$), although all results are applicable to the case where spin is included and also readily generalise to the molecular case. The electrons therefore have coordinates $\bx = (x_1, \dots, x_N), x_k \in \real^3$, $k=1, \dots, N$, and the nucleus has charge $Z>0$ whose position we choose fixed at the origin in $\real^3$. The corresponding Schr\"{o}dinger operator is
\begin{equation}
\label{eq:so}
H = -\Delta + V
\end{equation}
where $\Delta = \sum_{k=1}^N \Delta_{x_k}$ is the Laplacian in $\real^{3N}$, i.e. $\Delta_{x_k}$ refers to the Laplacian applied to the variable $x_k$, and $V$ is the Coulomb potential given by
\begin{equation}
\label{coulomb}
V(\bold{x}) = -\sum_{k=1}^N \frac{Z}{|x_k|} + \sum_{1\le j<k \le N}\frac{1}{|x_j - x_k|}
\end{equation}
for $\bold{x} \in \real^{3N}$. This operator acts in $L^2(\real^{3N})$ and is self-adjoint on $H^2(\real^{3N})$, see for example \cite[Theorem X.16]{rs2}. We consider solutions to the eigenvalue problem for $H$ in the operator sense, namely
\begin{equation}
\label{eq:se}
H\psi = E\psi
\end{equation}
for $\psi \in H^2(\real^{3N})$ and $E \in \real$. 

There is a long history of regularity and differentiablity properties of wavefunctions $\psi$, a few selected references include \cite{kato}, \cite{HOS_81}, \cite{densities_atoms}, \cite{yse_02}, \cite{analytic_rep}, \cite{ammann12}, \cite{coulomb_estimates}. In recent years there has been growing interest in the regularity of the reduced density matrices, which are objects derived from such wavefunctions. 
Of particular interest is the non-smoothness at the ``diagonal'', which is a signature of electron-electron interation. In previous work by A.V. Sobolev and the current author, it was shown that the density matrix remains bounded at the diagonal upon taking up to four derivatives, \cite{hearn_sob2}, and it was also suggested that this should remain true for five derivatives, see also \cite{cio20}, \cite{cio22}. The current work demonstrates this to be correct.

We now introduce notation and define precisely our objects of study. For each $j=1, \dots, N$, we represent
\begin{align*}
\hat\bx_j = (x_1, \dots, x_{j-1}, x_{j+1},\dots, x_N), \quad (x, \hat\bx_j) = (x_1, \dots, x_{j-1}, x, x_{j+1},\dots, x_N)
\end{align*}
for $x \in \real^3$. We define the \textit{one-particle reduced density matrix}, or simply \textit{density matrix}, by
\begin{align}
\label{eq:gamma}
\g(x, y) = \int_{\R^{3N-3}}
\psi(x, \hat\bx)  \overline{\psi(y, \hat\bx)} \, d\hat\bx,\ 
\quad \hat\bx = \bold{\hat{x}}_1.
\end{align}
for $x,y \in \real^3$.
More commonly, the one-particle reduced density matrix is defined as the function
\begin{align}\label{eq:den}
\tilde\g(x, y) = \sum_{j=1}^N\int\limits_{\R^{3N-3}}\psi(x, \hat\bx_j) 
 \overline{\psi(y, \hat\bx_j)}\,  d\hat\bx_j,\quad x,y\in\R^3. 
\end{align}
However, since we are interested only in regularity properties we need only study one term of (\ref{eq:den}), hence our use of the definition (\ref{eq:gamma}). In fact, we have $\tilde\g(x,y) = N\gamma(x,y)$ whenever $\psi$ is totally symmetric or antisymmetric, although in this work we will assume no symmetry of $\psi$. An important related function is the \textit{one-particle density}, or simply the \textit{density}, which is defined here as
\begin{align}\label{eq:dens}
\rho(x) = \g(x, x) = \int_{\R^{3N-3}}
|\psi(x, \hat\bx)|^2 \, d\hat\bx, \quad x \in \real^3.
\end{align}

Now suppose $\psi$ is any eigenfunction obeying (\ref{eq:se}) with $\norm{\psi}_{L^2(\real^{3N})} = 1$, and let $\rho$ and $\gamma$ be the corresponding functions as defined in (\ref{eq:dens}) and (\ref{eq:gamma}) respectively. In \cite{hearn_sob}, see also \cite{jecko22}, real analyticity was proven for $\gamma(x,y)$ as a function of two variables in the set
\begin{equation*}
\mathcal{D} = \{(x,y) \in \real^3 \times \real^3 : x \ne 0, y \ne 0, x \ne y \}.
\end{equation*}
In particular, real analyticity was not shown across the diagonal, that is where $x=y$. Despite this, real analyticity holds for the density $\rho$ on the set $\real^3\backslash\{0\}$ as shown in \cite{density_analytic}, see also \cite{jecko10}. Therefore, there is real analyticity of $\gamma$ in one variable along the diagonal $x=y$, excluding the point $x=y=0$. However, there is no smoothness of $\gamma$ across the diagonal, as discussed in \cite[Remark 1.2(7)]{hearn_sob2} in relation to the polynomial decay of eigenvalues of the density matrix operator in \cite{sob_estimates} and \cite{sob_asymptotics}, see also \cite{sob_kinetic}. We also note that non-smoothness has also been demonstrated directly for the ($N-1$)-particle reduced density matrix in \cite{jecko23}.

The first indication of non-smoothness at the diagonal seems to be from \cite{cio20}, see also \cite{cio22}. Here, quantum chemistry calculations indicated the existence of a fifth-order cusp, of the form $|x-y|^5$, at the diagonal. To elucidate the structure at the diagonal rigorously, pointwise derivative estimates for $\gamma$ on the set $\mathcal{D}$ were then given in \cite{hearn_sob2}. Due to its importance in the context of the current result, we state the following bounds which arise from \cite[Theorem 1.1]{hearn_sob2}. 
For $b \ge 0$, $t >0$, first define
\begin{align*}
h_b(t) =
\begin{cases}
t^{\min\{0, 5-b \}} &\text{if } b \ne 5\\
\log(t^{-1}+2) &\text{if } b = 5.
\end{cases}
\end{align*}
We denote  $\naturals_0 = \naturals \cup \{0\}$. For all $R>0$ and $\alpha,\beta \in \naturals_0^3$ with $|\alpha|,|\beta| \ge 1$ there exists $C$ such that
\begin{multline}
\label{eq:hearnsob}
|\partial^\alpha_x \partial_y^{\beta} \gamma(x,y)| \le C\big(1 + |x|^{2-|\alpha|-|\beta|} + |y|^{2-|\alpha|-|\beta|} \\+ h_{|\alpha|+|\beta|}(|x-y|)\big) \norm{\rho}^{1/2}_{L^{1}(B(x, R))} \norm{\rho}^{1/2}_{L^{1}(B(y, R))},
\end{multline}
and for all $|\alpha| \ge 1$ there exists $C$ such that
\begin{multline}
\label{eq:hearnsob2}
|\partial^\alpha_x \gamma(x,y)| + |\partial^\alpha_y \gamma(x,y)| \le C\big(1 + |x|^{1-|\alpha|} + |y|^{1-|\alpha|} \\+ h_{|\alpha|}(|x-y|)\big) \norm{\rho}^{1/2}_{L^{1}(B(x, R))} \norm{\rho}^{1/2}_{L^{1}(B(y, R))},
\end{multline}
for all $x,y \in \real^3$ with $x \ne 0$, $y \ne 0$ and $x \ne y$. The notation $\partial_{x}^{\alpha}$ refers to the $\alpha$-partial derivative in the $x$ variable. The constant $C$ depends on $\alpha, \beta, R, N$ and $Z$. The right-hand side is finite because $\psi$ is normalised and hence $\rho \in L^1(\real^3)$. The density matrix $\g$ can be seen to be locally Lipschitz continuous in view of the local boundedness of the first derivative of $\g$ on $\real^6$. In addition, there is local boundedness of up to four derivatives at the diagonal.

Finally, we remark upon a recent result by T. Jecko, \cite{jecko24}, where it is shown that a formula can be produced in the special case of the ($N-1$)-particle density matrix which involves a term with a fifth order cusp along with a smoother remainder term. Due to the method of proof involving the analytic representation of $\psi$ in \cite{analytic_rep}, this result cannot be extended in any straightforward manner to the general $k$-particle case, including the one-particle case.
 
Our main result is as follows.
\begin{thm}
\label{thm:5}
Let $\psi$ be an eigenfunction of (\ref{eq:se}). Define $m(x,y) = \min\{1, |x|, |y| \}$. Then for all $|\alpha|+|\beta| =5$ we have $C$, depending on $Z,N$ and $E$, such that
\begin{align}
\label{eq:5}
|\partial_x^{\alpha}\partial_y^{\beta}\gamma(x,y)| \le 
\begin{cases}
Cm(x,y)^{-3}
\norm{\rho}^{1/2}_{L^{1}(B(x, 1))} \norm{\rho}^{1/2}_{L^{1}(B(y, 1))} \quad &\text{if } |\al|,|\beta|\ge 1 \\
Cm(x,y)^{-4}
\norm{\rho}^{1/2}_{L^{1}(B(x, 1))} \norm{\rho}^{1/2}_{L^{1}(B(y, 1))} \quad &\text{otherwise}\\
\end{cases}
\end{align}
for all $x,y \in \real^3$ obeying $0<|x-y| \le m(x,y)/2$.
\end{thm}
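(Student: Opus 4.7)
The plan is to sharpen the general bounds (\ref{eq:hearnsob})--(\ref{eq:hearnsob2}) in the restricted regime $0<|x-y|\le m(x,y)/2$, removing the logarithmic $h_5$-contribution and absorbing it into $Cm(x,y)^{-3}$ (or $Cm(x,y)^{-4}$ in the unbalanced case). Since the terms $|x|^{2-|\alpha|-|\beta|}+|y|^{2-|\alpha|-|\beta|}$ appearing in (\ref{eq:hearnsob}) are already dominated by $m(x,y)^{-3}$ under this hypothesis, the only real task is to control the borderline term $\log(|x-y|^{-1}+2)$.

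A first observation is that under $|x-y|\le m(x,y)/2$, both $x$ and $y$ lie in a common ball $B\subset\R^3$ of radius of order $m(x,y)$ bounded away from the origin. Hence the nuclear singularity plays no role in what follows, and only the electron-electron Coulomb terms $|x_1-x_k|^{-1}$, $k\ge 2$, can produce difficulties as the integration variable $\hat{\bx}$ ranges. I combine this with the Schr\"odinger reduction $\Delta_{x_1}\psi=(V-E)\psi-\Delta_{\hat{\bx}}\psi$. Multiplying by $\overline{\psi(y,\hat{\bx})}$ and integrating in $\hat{\bx}$, the last term can be transferred by integration by parts (boundary terms vanish since $\psi\in H^2(\R^{3N})$) onto the conjugate factor. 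The net effect is to trade two $x$-derivatives of $\gamma(x,y)$ for either two $y$-derivatives or a multiplication by $V$. Iterating this identity, every $\partial_x^\alpha\partial_y^\beta\gamma$ with $|\alpha|+|\beta|=5$ can be represented as a finite sum of terms
\[
\int_{\R^{3N-3}}(\partial^{\alpha'}Q_1)(x,\hat{\bx})\,\overline{(\partial^{\beta'}Q_2)(y,\hat{\bx})}\,d\hat{\bx},
\]
where $|\alpha'|\le 1$, $|\alpha'|+|\beta'|\le 5$, and each $Q_j$ is $\psi$ or an explicit Coulombic multiple of $\psi$. For the terms without Coulomb weights, (\ref{eq:hearnsob2}) applies and, since $|\alpha'|\le 1$, the function $h_{|\alpha'|}$ carries no logarithm at all.

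The remaining and genuinely difficult issue, which is the technical heart of the paper and the content of the cluster derivative estimates announced in the abstract, concerns the terms carrying Coulombic factors. These have integrands of schematic form $|x_1-x_k|^{-1}\cdot\partial^a\psi\cdot\overline{\partial^b\psi}$ with $a+b\le 3$, and as functions of $\hat{\bx}$ are only borderline integrable, so naive estimation recovers precisely the unwanted logarithm. To beat it one needs strong pointwise bounds for derivatives of $\psi$ taken with respect to several cluster coordinates simultaneously --- for instance, uniform control in regions where $x_1$ is close to $x_k$ \emph{and} $x_k$ is close to $x_l$. The principal obstacle is exactly this: the previously available single-cluster bounds handle each pairwise coalescence in isolation, whereas the present argument requires multi-cluster estimates strong enough to supply the small pointwise gain that, when integrated against $|\psi|$-type weights, absorbs the Coulomb factor $|x_1-x_k|^{-1}$ without loss of a logarithm. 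Once such multi-cluster bounds are established, inserting them into the Schr\"odinger reduction above yields the estimate $Cm(x,y)^{-3}$ in the balanced case $|\alpha|,|\beta|\ge 1$. In the unbalanced case $|\alpha|=0$ or $|\beta|=0$ the two-sided trade described above is unavailable on one side, so the reduction can only bring the derivative order down to two rather than to one, costing an extra power of $m(x,y)^{-1}$ and producing the weaker bound $Cm(x,y)^{-4}$ stated in the theorem.
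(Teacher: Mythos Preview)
Your proposal has genuine gaps, and the overall strategy diverges substantially from the paper's.

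\textbf{Gap 1: the Schr\"odinger reduction is too coarse.} The identity $\Delta_{x_1}\psi=(V-E)\psi-\Delta_{\hat\bx}\psi$ relates only the \emph{Laplacian} in $x$ to other quantities, not an arbitrary second-order derivative $\partial_x^{e_i}\partial_x^{e_j}$. For a multiindex such as $\alpha=(5,0,0)$ there is no way to peel off $\Delta_x$ repeatedly, so the claimed reduction to $|\alpha'|\le 1$ fails for generic $\alpha$.

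\textbf{Gap 2: the non-Coulombic terms are not improved.} Even when the reduction applies, it merely shifts derivatives between $x$ and $y$: the total order $|\alpha'|+|\beta'|$ remains $5$ (not $<5$), and the relevant bound for such a term is (\ref{eq:hearnsob}), which carries $h_{|\alpha'|+|\beta'|}=h_5$ and hence the logarithm you wished to avoid. Invoking (\ref{eq:hearnsob2}) with $h_{|\alpha'|}$ is a misreading: (\ref{eq:hearnsob2}) governs pure $x$- or pure $y$-derivatives only.

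\textbf{Gap 3: the Coulombic terms are not treated.} You correctly identify that these terms are the crux, but ``once such multi-cluster bounds are established'' is not a proof. The paper's Theorem~\ref{thm:ddpsi} supplies precisely these bounds, together with the crucial good/bad decomposition (\ref{eq:gp_def}); without stating and using such a result, the argument is incomplete.

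\textbf{What the paper actually does.} The paper does not use the Schr\"odinger equation on $\gamma$. Instead it (i) differentiates once under the integral in each variable to obtain $\gamma_{l,m}$, (ii) introduces a partition of unity by cutoffs $\Phi_{\delta,\epsilon}$ encoding \emph{three} clusters $P,S,Q$ (for $x$-, $y$-, and $(x+y)$-derivatives, respectively), (iii) converts the remaining $\partial_x,\partial_y,\partial_{x+y}$ into cluster derivatives $D_P,D_S,D_Q$ under the integral by a translational change of variables (Lemma~\ref{lem:dglm}), and (iv) bounds the resulting integrals via the multicluster estimate Theorem~\ref{thm:ddpsi}. The decomposition $D^{\bal}_{\bold P}\nabla\psi=G^{\bal}_{\bold P}+\psi\,D^{\bal}_{\bold P}\nabla F_c$ isolates the dangerous piece, which is then handled by explicit integration by parts in the $\hat\bx$ variables (Lemmas~\ref{lem:ff_int}, \ref{lem:ff_holder}). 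The unbalanced case $\beta=0$ is reduced to the balanced case plus an $(x+y)$-derivative term via the identity $\partial_x^{l+m}=\partial_{x+y}^{l+m}-\partial_y^{l+m}-\partial_x^l\partial_y^m-\partial_x^m\partial_y^l$; the $(x+y)$-term contributes only powers of $\epsilon\sim m(x,y)$, not of $\delta\sim|x-y|$, which explains the exponent $-4$ rather than $-3$.
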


\begin{rem}
\begin{enumerate}
\item 
The bound (\ref{eq:5}) naturally complements (\ref{eq:hearnsob}) and (\ref{eq:hearnsob2}) for $|\alpha|+|\beta| \ne 5$.
\item 
As a consequence of \cite[Proposition A.1]{hearn_sob2} and Sobolev embeddings, the inequality (\ref{eq:5}) shows that $\g\in C^{4, 1}
\big((\R^3\backslash\{0\})\times(\R^3\backslash\{0\})\big)$. The definition of this space is given after Remark \ref{rem:ddpsi} below.
\item
In view of \cite[Remark 1.2(7)]{hearn_sob2}, the boundedness of the fifth derivative at the diagonal is consistent with the asymptotics of eigenvalue decay of the density matrix operator given in \cite{sob_asymptotics}. In addition, such analysis indicates the sixth derivative will in general be unbounded.
\end{enumerate}
\end{rem}

The idea of the proof is to turn partial derivatives of the density matrix $\gamma(x,y)$, with a suitable cutoffs, into certain directional derivatives, known as cluster derivatives, under the integral. The resulting integrals can then be estimated using pointwise bounds to cluster derivatives of $\psi$. This method builds upon similar strategies employed in \cite{density_smooth}, \cite{density_analytic}, \cite{coulomb_estimates} for differentiating the density $\rho$, and \cite{hearn_sob}, \cite{hearn_sob2} for differentiating the density matrix. However, proving boundedness of the fifth derivatives of the density matrix requires a significant refinement of these techniques.

We will consider derivatives of $\gamma(x,y)$, not just in $x$ and $y$ but also in the direction $x+y$. Such derivatives will turn out not to contribute to the singularity at the diagonal. To show this, we will require bounds to cluster derivatives of $\psi$ involving multiple clusters and which, crucially, will distinguish between the orders of each cluster derivative. The result is Theorem \ref{thm:ddpsi} and is proven in Section \ref{chpt:2}. Cutoffs will be introduced in Section \ref{chpt:3}, these are similar to those used in \cite{hearn_sob}, \cite{hearn_sob2} but more sophisticated since particles are grouped up, not only into different so-called clusters associated with $x$, $y$, but crucially also associated to $x+y$. In Section \ref{chpt:4}, we will show that, due to the results on the $(x+y)$-derivatives, it suffices to show boundedness at the diagonal for derivatives $\partial_x^{\alpha}\partial_y^{\beta}\gamma$ with $|\alpha|, |\beta| \ge 1$. For such derivatives we require a representation of cluster derivatives of $\psi$ in terms of a ``good'' term and a ``bad'' term, see (\ref{eq:gp_def}) of Theorem \ref{thm:ddpsi}. The bounds for the ``good'' term is stronger than in previous results, for example those obtained in \cite[Proposition 1.12]{coulomb_estimates} and \cite[Theorem 4.3]{hearn_sob2}. The additional strengthening is proven using the optimal multiplicative factors of \cite{fsho_c11}. The ``bad'' term cannot be bounded sufficiently, but it can be handled using a series of steps involving integration by parts.

\text{ }\\
\textbf{Notation.} 
%
%
%
%
As mentioned earlier, we use a standard notation whereby $\bold{x} = (x_1, \dots, x_N) \in \real^{3N}$, $x_j \in \real^3$, j=1, \dots, N, and where $N$ is the number of electrons. In addition, define for $1 \le j,k \le N$, $j \ne k$,
\begin{align}
\label{eq:bx1}
\bold{\hat{x}}_j &= (x_1, \dots, x_{j-1}, x_{j+1}, \dots, x_N)\\
\bold{\hat{x}}_{j,k} &= (x_1, \dots, x_{j-1}, x_{j+1}, \dots, x_{k-1}, x_{k+1}, \dots,  x_N)
\end{align}
with obvious modifications if either $j,k$ equals $1$ or $N$, and if $k<j$.
We define $\bold{\hat{x}} = \bold{\hat{x}}_1$, which will be used throughout. Variables placed before $\bold{\hat{x}}_j$ and $\bold{\hat{x}}_{j,k}$ will be placed in the removed slots as follows, for any $x,y \in \real^3$ we have
\begin{align}
(x,\bold{\hat{x}}_j) &= (x_1, \dots, x_{j-1}, x, x_{j+1}, \dots, x_N),\\
\label{eq:bx4}
(x,y,\bold{\hat{x}}_{j,k}) &= (x_1, \dots, x_{j-1}, x, x_{j+1}, \dots, x_{k-1}, y, x_{k+1}, \dots,  x_N).
\end{align}
In this way, $\bold{x} = (x_j, \bold{\hat{x}}_j) = (x_j,x_k,\bold{\hat{x}}_{j,k})$.

For variables $x,y \in \real^3$, the directional derivative $\partial_{x+y}$ will be defined by
\begin{align}
\label{eq:dd}
\partial^{\al}_{x+y} = \partial_x^{\al} + \partial_y^{\al}  \quad \text{for } \alpha \in \naturals_0^3,\, |\alpha|=1.
\end{align}
Higher order derivatives $\partial^{\al}_{x+y}$ are defined by successive first order derivatives.

We define a \textit{cluster} to be any subset $P \subset \{1, \dots, N\}$. Denote $P^c = \{1, \dots, N\}\backslash P$, $P^* = P\backslash\{1\}$. We will also need \textit{cluster sets}, $\bold{P} = (P_1, \dots, P_M)$, where $M \ge 1$ and $P_1, \dots, P_M$ are clusters.


First-order \textit{cluster derivatives} are defined, for a non-empty cluster $P$, by
\begin{equation}
\label{eq:cd}
D^{\alpha}_P = \sum_{j \in P} \partial_{x_j}^{\alpha} \quad \text{for } \alpha \in \naturals_0^3,\, |\alpha|=1.
\end{equation}
For $P=\emptyset$, $D^{\alpha}_P$ is defined as the identity. Higher order cluster derivatives, for $\alpha = (\alpha', \alpha'', \alpha''') \in \naturals_0^3$ with $|\alpha| \ge 2$, are defined by successive application of first-order cluster derivatives as follows,
\begin{equation}
\label{eq:cd2}
D^{\alpha}_P = (D_P^{e_1})^{\alpha'}(D_P^{e_2})^{\alpha''}(D_P^{e_3})^{\alpha'''}
\end{equation}
where $e_1, e_2, e_3$ are the standard unit basis vectors of $\real^3$. Let $\bold{P} = (P_1, \dots, P_M)$ and $\bal = (\alpha_1, \dots, \alpha_M)$, $\alpha_j \in \naturals_0^3$, $1 \le j \le M$, then we define the \textit{multicluster derivative} (often simply referred to as cluster derivative) by
\begin{equation}
D_{\bold{P}}^{\bal} = D_{P_1}^{\alpha_1} \dots D_{P_M}^{\alpha_M}.
\end{equation}
It can readily be seen that cluster derivatives obey a Leibniz rule.

Throughout, the letter $C$ refers to a positive constant whose value is unimportant and may depend on $Z$, $N$ and the eigenvalue $E$.
\text{ }\\\\
\textbf{Distance function notation and elementary results.}
For non-empty cluster $P$, define
\begin{align}
\label{sigmap2}
\Sigma_P = \Big\{ \bold{x} \in \real^{3N} : \prod_{j \in P} |x_j| \prod_{\substack{l \in P \\ m \in P^c}} |x_l - x_m| = 0 \Big\}.
\end{align}
For $P= \emptyset$ we set $\Sigma_{P} := \emptyset$. Denote $\Sigma_P^c = \real^{3N}\backslash\Sigma_P$. For each $P$ we have $\Sigma_P \subset \Sigma$ where
\begin{equation}
\Sigma = \Big\{ \bold{x} \in \real^{3N} : \prod_{1 \le j \le N} |x_j| \prod_{1 \le l<m \le N} |x_l - x_m| = 0 \Big\}
\end{equation}
is the set of singularities of the Coulomb potential $V$, (\ref{coulomb}). For any cluster $P$ we can define the following distances
\begin{align}
\label{dp_def}
d_P(\bold{x}) &:= \text{dist}( \bold{x}, \Sigma_{P} ) = \min\big\{|x_j|, \, 2^{-1/2}|x_j - x_k| : j \in P, k \in P^c \big\}\\
\label{lam_def}
\lambda_{P}(\bold{x}) &:= \min\{ 1,\, d_P(\bold{x}) \}\\
\label{eq:nup}
\nu_P(\bx) &:= \min\{1; |x_j| : j \in P\}
\end{align}
for all $\bold{x} \in \real^{3N}$.

Using the formula (\ref{dp_def}), see for example \cite[Lemma 4.2]{hearn_sob2}, it can be shown that
\begin{equation}
\label{lam_lip}
|d_P(\bold{x}) - d_P(\bold{y})|, |\lambda_P(\bold{x}) - \lambda_P(\bold{y})|, |\nu_P(\bx) - \nu_P(\by)| \le |\bold{x} - \bold{y}|
\end{equation}
for all $\bold{x}, \bold{y} \in \real^{3N}$.

Let $\bold{P} = (P_1, \dots, P_M)$ be a cluster set and $\bal = (\alpha_1, \dots, \alpha_M) \in \naturals_0^{3M}$ be a multiindex. Define
\begin{equation}
\label{eq:sigma_alpha}
\Sigma_{\boldsymbol{\alpha}} = \bigcup_{j \,:\, \alpha_j \ne 0} \Sigma_{P_j}
\end{equation}
for non-zero $\boldsymbol{\alpha}$ and when $\boldsymbol{\alpha} = 0$ we set
$\Sigma_{\boldsymbol{\alpha}} = \emptyset$. Denote $\Sigma_{\bal}^c = \real^{3N}\backslash\Sigma_{\bal}$. For non-zero $\boldsymbol{\alpha}$ we can also define the distances
\begin{align}
d_{\boldsymbol{\alpha}}(\bold{x}) &= \min\{d_{P_j}(\bold{x}) : \alpha_j \ne 0,\, j=1, \dots, M \},\\
\label{eq:lam_alpha}
\lambda_{\boldsymbol{\alpha}}(\bold{x}) &= \min\{\lambda_{P_j}(\bold{x}) : \alpha_j \ne 0,\, j=1, \dots, M \},
\end{align}
for all $\bold{x} \in \real^{3N}$. Notice we also have the identity
\begin{align*}
d_{\boldsymbol{\alpha}}(\bold{x}) = \text{dist}(\bold{x}, \Sigma_{\boldsymbol{\alpha}}).
\end{align*}
Indeed, since $\Sigma_{P_j} \subset \Sigma_{\boldsymbol{\alpha}}$ whenever $j$ is such that $\alpha_j \ne 0$, we have $\text{dist}(\bold{x}, \Sigma_{\boldsymbol{\alpha}}) \le d_{P_j}(\bold{x})$ and hence $\text{dist}(\bold{x}, \Sigma_{\boldsymbol{\alpha}}) \le d_{\boldsymbol{\alpha}}(\bold{x})$. Conversely for each $\boldsymbol{\xi} \in \Sigma_{\boldsymbol{\alpha}}$ we have $\boldsymbol{\xi} \in \Sigma_{P_j}$ for some $j$ with $\alpha_j \ne 0$, and hence $|\bold{x} - \boldsymbol{\xi}| \ge d_{P_j}(\bold{x})$. Therefore, $|\bold{x} - \boldsymbol{\xi}| \ge d_{\boldsymbol{\alpha}}(\bold{x})$. Taking infimum over $\boldsymbol{\xi} \in \Sigma_{\boldsymbol{\alpha}}$ we obtain the reverse inequality. 

\paragraph{}
We will also use the following related quantities involving maxima of relevant distances for non-zero $\boldsymbol{\alpha}$,
\begin{align}
q_{\boldsymbol{\alpha}}(\bold{x}) &= \max\{d_{P_j}(\bold{x}) : \alpha_j \ne 0,\, j=1, \dots, M \}\\
\label{eq:mu_alpha}
\mu_{\boldsymbol{\alpha}}(\bold{x}) &= \max\{\lambda_{P_j}(\bold{x}) : \alpha_j \ne 0,\, j=1, \dots, M \}.
\end{align}
For $\boldsymbol{\alpha} = 0$ we set $d_{\boldsymbol{\alpha}}, q_{\boldsymbol{\alpha}} \equiv 0$ and $\lambda_{\boldsymbol{\alpha}}, \mu_{\boldsymbol{\alpha}} \equiv 1$.

%

In order to state the results we first define the following for an arbitrary function $u$ and any $r>0$,
\begin{align}
\label{eq:f_inf_p}
f_{\infty}(\bold{x}; r; u) := \norm{\nabla u}_{L^{\infty}(B(\bold{x}, r))} + \norm{u}_{L^{\infty}(B(\bold{x}, r))}
\end{align}
for $\bold{x} \in \real^{3N}$. The ball $B(\bold{x}, r)$ is considered in $\real^{3N}$. Largely, this notation will be used for $u=\psi$, and in this case we have the notation
\begin{align}
\label{eq:f_inf}
f_{\infty}(\bold{x}; r) :=f_{\infty}(\bold{x}; r; \psi).
\end{align}
Furthermore, later we will not track the dependence on $r$, hence it is convenient to define
\begin{align}
\label{eq:f_inf2}
f_{\infty}(\bold{x}) :=f_{\infty}\Big(\bold{x}; \frac{1}{2}; \psi\Big).
\end{align}
\text{ }\\
\textbf{A pointwise cluster derivative bound.}
To prove Theorem \ref{thm:5} we will state and prove a new pointwise bound to cluster derivatives of eigenfunctions $\psi$, which itself is of independent interest. It will be shown by elliptic regularity that for all $\bal$ the weak cluster derivatives $D^{\bal}_{\bold{P}} \psi$ exist in the set $\Sigma_{\bal}^c$. We consider how such cluster derivatives behave as the set $\Sigma_{\bal}$ is approached.

Previously, S. Fournais and T. \O. S\o rensen have given bounds to local $L^p$-norms of cluster derivatives of $\psi$ for a single cluster $P$. Indeed, in \cite[Proposition 1.12]{coulomb_estimates} it is shown that for any multiindex $0 \ne \alpha \in \naturals_0^{3}$, $p \in (1, \infty]$ and any $0<r<R<1$ there exists $C$, depending on $r, R, p$ and $\alpha$, such that
\begin{equation}
\label{eq:fs_p}
\norm{D_{P}^{\alpha} \psi}_{L^{p}(B(\bold{x}, r\lambda_{P}(\bold{x})))} \le C \lambda_{P}(\bold{x})^{1-|\alpha|} \big( \norm{\nabla \psi}_{L^p(B(\bx, R\lambda_P(\bx)))} + \norm{\psi}_{L^p(B(\bx, R\lambda_P(\bx)))}\big)
\end{equation}
for all $\bold{x} \in \Sigma^c_{P}$. Notice that for every $\bx \in \Sigma_P^c$, we have $B(\bx, r\lambda_P(\bx)) \subset \Sigma^c_{P}$ by the definition of $\lambda_P(\bx)$. The exponent $1-|\al|$ is natural since both $\psi$ and $\nabla\psi$ are locally bounded, therefore we only get a singularity for $|\al| \ge 2$.

The objective of the following theorem is to extend the bounds (\ref{eq:fs_p}) in the case of $p=\infty$ and for cluster sets $\bold{P}$ which may contain more than one cluster. Bounds for cluster sets are present in \cite[Theorem 4.3]{hearn_sob2}, but here we obtain estimates which depend on the order of derivative on each cluster in $\bold{P}$. This will be vital in proving the boundedness of the fifth derivative of the density matrix at the diagonal. The symbol $\nabla$ denotes the gradient operator in $\real^{3N}$. 

\begin{thm}
\label{thm:ddpsi}
For every cluster set $\bold{P} = (P_1, \dots, P_M)$, multiindex $\boldsymbol{\alpha} \in \naturals_0^{3M}$ and any $R>0$ there exists $C$, depending on $\bal$ and $R$, such that for $k=0,1$,
\begin{equation}
\label{eq:ddpsi}
|D_{\bold{P}}^{\boldsymbol{\alpha}}\nabla^{k} \psi(\bx)| \le C \lambda_{\boldsymbol{\alpha}}(\bold{x})^{1-k} \lambda_{P_1}(\bold{x})^{-|\alpha_1|} \dots \lambda_{P_M}(\bold{x})^{-|\alpha_M|}
f_{\infty}(\bold{x}; R)
\end{equation}
for all $\bold{x} \in \Sigma^c_{\boldsymbol{\alpha}}$.

Furthermore, for each $|\bal| \ge 1$ there exists a function $G_{\bold{P}}^{\boldsymbol{\alpha}} : \Sigma^c_{\bal} \to \complex^{3N}$ such that
\begin{equation}
\label{eq:gp_def}
D_{\bold{P}}^{\boldsymbol{\alpha}}\nabla \psi = G_{\bold{P}}^{\boldsymbol{\alpha}} + \psi\, D_{\bold{P}}^{\bal}\nabla F_c,
\end{equation}
where $F_c$ is defined later in (\ref{f_def}), and for each $0\le b<1$ there exists $C$, depending on $\bal, R$ and $b$, such that
\begin{equation}
\label{eq:gp_bnd}
|G_{\bold{P}}^{\boldsymbol{\alpha}}(\bold{x})| \\\le C \mu_{\bal}(\bold{x})^b \lambda_{P_1}(\bold{x})^{-|\alpha_1|} \dots \lambda_{P_M}(\bold{x})^{-|\alpha_M|} f_{\infty}(\bold{x}; R )
\end{equation}
for all $\bold{x} \in \Sigma^c_{\boldsymbol{\alpha}}$.

\end{thm}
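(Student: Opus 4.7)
My plan is to prove (\ref{eq:ddpsi}) by induction on the number of clusters $M$ and on the total order $|\bal|$, bootstrapping from the single-cluster $L^p$ estimate (\ref{eq:fs_p}) of Fournais--Sørensen. For a single cluster $P$, the upgrade to $L^{\infty}$ uses that $D_P^{\alpha}$ commutes with $-\Delta$ (both are translation invariant), so $u := D_P^{\alpha}\psi$ satisfies the elliptic equation $(-\Delta+V-E)u = -[D_P^{\alpha},V]\psi$ on $\Sigma_P^c$. The right-hand side is handled by Leibniz together with the explicit inverse-power structure of $V$: each commutator term $(D_P^{\beta}V)(D_P^{\alpha-\beta}\psi)$ with $|\beta|\ge 1$ picks up a factor $\lambda_P^{-|\beta|-1}$ from differentiating $V$, and lower-order cluster derivatives of $\psi$ are controlled inductively. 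Interior elliptic regularity on a ball of radius $\sim\lambda_P(\bx)$ then converts (\ref{eq:fs_p}) to the desired pointwise bound; the prefactor $\lambda_P(\bx)^{1-k}$ is obtained by applying the estimate at exponent $|\alpha|$ vs $|\alpha|+1$ and reading off the Morrey--Sobolev embedding at the appropriate scale.

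For the multi-cluster case, I would successively peel off $D_{P_M}^{\alpha_M}, D_{P_{M-1}}^{\alpha_{M-1}}, \dots$ by commuting each through the Schrödinger equation and using the single-cluster argument on the intermediate function. The commutators $[D_{P_j}^{\alpha_j},V]$ introduce the factor $\lambda_{P_j}^{-|\alpha_j|}$ and, crucially, never worsen the factors associated to the other clusters, since differentiating $|x_l|^{-1}$ or $|x_l-x_m|^{-1}$ in directions corresponding to $P_j$ depends only on the distance to $\Sigma_{P_j}$. Iterating gives the product structure $\prod_j\lambda_{P_j}^{-|\alpha_j|}$ in (\ref{eq:ddpsi}); the prefactor $\lambda_{\bal}(\bx)^{1-k}=\min_j\lambda_{P_j}(\bx)^{1-k}$ emerges from the final elliptic estimate on a ball small enough to avoid every relevant singular set simultaneously.

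For the decomposition (\ref{eq:gp_def})--(\ref{eq:gp_bnd}), I would use the Fournais--Hoffmann-Ostenhof--Sørensen factorisation $\psi = e^{F_c}\phi$ from \cite{fsho_c11}, where $\phi$ is locally $C^{1,b}$ for every $b\in[0,1)$. Expanding $D_{\bold{P}}^{\bal}\nabla(e^{F_c}\phi)$ via Leibniz, the single summand in which every cluster derivative \emph{and} the outer gradient fall on $e^{F_c}$ contributes, after pulling out $e^{F_c}\phi=\psi$, precisely the ``bad'' term $\psi D_{\bold{P}}^{\bal}\nabla F_c$. All remaining summands, in which at least one derivative falls on $\phi$, are collected into $G_{\bold{P}}^{\bal}$. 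For each such term the $C^{1,b}$ regularity of $\phi$ supplies a Hölder gain: selecting the index $j$ attaining $\mu_{\bal}(\bx)=\max_k\lambda_{P_k}(\bx)$ and comparing the relevant derivative of $\phi$ to its value at a point of $\Sigma_{P_j}$ at distance $\mu_{\bal}(\bx)$ produces the factor $\mu_{\bal}(\bx)^b$. Combined with the cluster-derivative estimates already proved, this yields (\ref{eq:gp_bnd}).

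The main obstacle, I expect, is the last step: every summand of $G_{\bold{P}}^{\bal}$ must simultaneously tolerate the cluster-derivative product bound $\prod_j\lambda_{P_j}^{-|\alpha_j|}$ and the Hölder gain $\mu_{\bal}^b$, and these are associated with two competing length scales, namely the smallest $\lambda_{\bal}$ (on which the elliptic estimates live) and the largest $\mu_{\bal}$ (on which the Hölder comparison is made). Correctly distributing the Leibniz-expanded derivatives between $e^{F_c}$ and $\phi$, and then choosing the reference point on $\Sigma_{P_j}$ for the Hölder step so that the product bound on the cluster derivatives of $\phi$ still propagates, is the delicate technical heart of the argument.
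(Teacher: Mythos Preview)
Your strategy for (\ref{eq:ddpsi}) is essentially the paper's: commute cluster derivatives through an elliptic equation and apply interior regularity on balls of radius $\sim\lambda_{\bal}(\bx)$. The paper works with $\phi=e^{-F}\psi$ (so that the coefficients are bounded) rather than with $\psi$ and $V$ directly, but this is a matter of convenience, and your description of how the product $\prod_j\lambda_{P_j}^{-|\alpha_j|}$ and the prefactor $\lambda_{\bal}^{1-k}$ arise is correct.

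The gap is in your mechanism for the $\mu_{\bal}(\bx)^b$ gain in (\ref{eq:gp_bnd}). The H\"older comparison you propose cannot work: knowing $\phi\in C^{1,b}$ controls $|\nabla\phi(\bx)-\nabla\phi(\by)|$, but it does \emph{not} give pointwise bounds on second derivatives of $\phi$, and after the Leibniz expansion the term $e^{F}\,D_{\bold{P}}^{\bal}\nabla\phi$ (with all of $\bal$ landing on $\nabla\phi$) survives inside $G_{\bold{P}}^{\bal}$. Already in the base case $|\bal|=1$ this term is a second derivative of $\phi$, and the standard scaled elliptic estimate on $B(\bx,r\lambda_{P}(\bx))$ yields only $|D_P^{\eta}\nabla\phi|\le C\lambda_P^{-1}$, not $C\lambda_P^{b-1}$; comparing $\nabla\phi(\bx)$ to its value on $\Sigma_{P_j}$ bounds a difference, not the second derivative itself.

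The paper closes this gap with a genuinely stronger input: the $W^{2,\infty}$ result of \cite{fsho_c11}, which says that $e^{-F-G}\psi\in W^{2,\infty}_{loc}$ for an explicit logarithmic corrector $G$. From this one deduces (Corollary~\ref{cor:d2phi}) that $|D_P^{\eta}\nabla\phi(\bx)|\le C(1-\log\nu_P(\bx))\|\phi\|_{L^\infty}$, i.e.\ second derivatives of $\phi$ have only a \emph{logarithmic} singularity on $\Sigma_P$. Since a logarithm is dominated by any positive power, this furnishes the base case $|\nabla\phi_{\bal}|\le C\mu_{\bal}^{\,b}\lambda_{P}^{-1}$ for $|\bal|=1$, which then propagates by induction (Proposition~\ref{dd_phi_prop}) to all orders and feeds into the Leibniz bound for $G_{\bold{P}}^{\bal}$. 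Your intuition that the ``bad'' term is exactly $\psi\,D_{\bold{P}}^{\bal}\nabla F_c$ and that everything else should be $\mu_{\bal}^b$-better is right, but the missing idea is that this improvement rests on the optimal Jastrow factor of \cite{fsho_c11}, not on $C^{1,b}$ regularity of $\phi$.
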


\begin{rem}
\label{rem:ddpsi}
\begin{enumerate}
\item
In the case of a single cluster and $k=0$, the bound (\ref{eq:ddpsi}) reestablishes (\ref{eq:fs_p}) in the case of $p = \infty$, albeit with a larger radius in the $L^{\infty}$-norms on the right-hand side, which doesn't concern us.
\item
The presence of a single power of $\lambda_{\bal}(\bx)$ in the bound (\ref{eq:ddpsi}), for $k=0$, will cancel a single negative power of the smallest $\lambda_{P_j}(\bx)$ for $j$ such that $\alpha_j \ne 0$. Notice that the appropriate $j$ will depend on $\bx$.
\item
The bound in (\ref{eq:gp_bnd}) is stronger than that of (\ref{eq:ddpsi}) with $k=1$. This is because a positive power of $\mu_{\bal}(\bx)$ will partially cancel a single negative power of the largest $\lambda_{P_j}(\bx)$ for $j$ such that $\alpha_j \ne 0$. The term $\psi\, D_{\bold{P}}^{\bal}\nabla F_c$ in (\ref{eq:gp_def}) only obeys a bound as in (\ref{eq:ddpsi}) for $k=1$, therefore represents the part of $D_{\bold{P}}^{\boldsymbol{\alpha}}\nabla \psi$ which inhibits an improved derivative bound.
\end{enumerate}
\end{rem}


The proof of Theorem \ref{thm:ddpsi} will adopt a similar strategy to that used originally in \cite[Proposition 1.12]{coulomb_estimates}, and developed in \cite[Theorem 4.3]{hearn_sob2}. An additional result will be required to prove (\ref{eq:gp_bnd}). This result, \cite{fsho_c11}, shows that $\psi$ can be made $W^{2,\infty}_{loc}(\real^{3N})$ upon multiplication by a factor, universal in the sense that the factor depends only on $N$ and $Z$.

We will require an elliptic regularity result, stated below, which will be used in the proofs. Beforehand, we clarify the precise form of definitions which we will be using. Let $\Omega$ be open, $\theta \in (0,1]$ and $k=\naturals_0$. We formally define the $\theta$-H\"older seminorms for a function $f$ by
\begin{align*}
&[f]_{\theta, \Omega} = \sup_{\substack{x,y \in \Omega \\ x \ne y}} \frac{|f(x)-f(y)|}{|x-y|^{\theta}},\\
&[\nabla^k f]_{\theta, \Omega} = \sup_{|\alpha|=k}[\partial^\alpha f]_{\theta, \Omega}.
\end{align*}
The space $C^{k,\theta}(\Omega)$ is defined as all $f \in C^k(\Omega)$ where $[\nabla^k f]_{\theta, \Omega'}$ is finite for each $\Omega'$ compactly contained in $\Omega$. In addition, the space $C^{k,\theta}(\overline{\Omega})$ is defined as all $f \in C^k(\overline{\Omega})$ where $[\nabla^k f]_{\theta, \Omega}$ is finite. This space has a norm given by
\begin{align*}
\norm{f}_{C^{k,\theta}(\overline{\Omega})} = \norm{f}_{C^{k}(\overline{\Omega})} + [\nabla^k f]_{\theta, \Omega}.
\end{align*}
 
For open $\Omega \subset \real^n$ we can consider the following elliptic equation,
\begin{equation}
\label{eq:pde}
Lu := -\Delta u + \bold{c} \cdot \nabla u + du = g
\end{equation}
for some $\bold{c} : \Omega \to \complex^n$ and $d,g : \Omega \to \complex$. The corresponding bilinear form for operator $L$ is defined formally as
\begin{equation*}
\mathcal{L}(u, \chi) = \int_{\Omega} \big( \nabla u \cdot \nabla \chi + (\bold{c} \cdot \nabla u) \chi + d u \chi \big) \, dx
\end{equation*}
for all $u \in H^1_{loc}(\Omega)$ and $\chi \in C^{\infty}_c(\Omega)$. We say that a function $u \in H^{1}_{loc}(\Omega)$ is a \textit{weak solution} to the equation (\ref{eq:pde}) in $\Omega$ if $\mathcal{L}(u,\chi) = \int_{\Omega} g\chi \,dx$ for every $\chi \in C^{\infty}_c(\Omega)$.

The following theorem is a restatement of \cite[Proposition 3.1]{hearn_sob2} (\cite[Proposition A.2]{coulomb_estimates} is similar), with additional H\"older regularity which follows from the proof.
\begin{thm}
\label{thm:c1}
Let $x_0 \in \real^{n}$, $R>0$ and $\bold{c}, d, g \in L^{\infty}(B(x_0, R))$ and $u \in H^1(B(x_0, R))$ be a weak solution to (\ref{eq:pde}) then for each $\theta \in [0,1)$ we have $u \in C^{1, \theta}(B(x_0, R)) \cap H_{loc}^2(B(x_0, R))$, and for any $r \in (0,R)$ we have
\begin{equation}
\label{c1_er_eqn}
\norm{u}_{C^{1,\theta}(\overline{B(x_0, r)})} \le C(\norm{u}_{L^2(B(x_0, R))} + \norm{g}_{L^{\infty}(B(x_0, R))})
\end{equation}
for $C=C(n,  K, r, R, \theta)$ where
\begin{equation*}
\norm{\bold{c}}_{L^{\infty}(B(x_0, R))} + \norm{d}_{L^{\infty}(B(x_0, R))} \le K.
\end{equation*}
\end{thm}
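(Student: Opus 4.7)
My plan is to deduce Theorem \ref{thm:c1} from the already-established content of Proposition 3.1 of \cite{hearn_sob2} (which yields $u \in H^2_{loc} \cap C^1$ and a $C^1$ bound of the required form) by a standard Lebesgue-exponent bootstrap for interior elliptic regularity. The idea is to push the argument further to obtain $u \in W^{2,p}_{loc}(B(x_0,R))$ for $p$ as large as we please, and then invoke Morrey's embedding $W^{2,p}(B) \hookrightarrow C^{1,1-n/p}(\overline B)$ for $p > n$, choosing $p$ so that $1 - n/p > \theta$. This yields the Hölder improvement and the associated quantitative bound.

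The bootstrap is as follows. Rewrite \eqref{eq:pde} in the form $-\Delta u = \tilde g$ with $\tilde g := g - \mathbf{c}\cdot\nabla u - d u$. From $u \in H^1(B(x_0,R))$ we have $\tilde g \in L^2_{loc}$, so interior Calder\'on--Zygmund estimates for the Laplacian yield $u \in W^{2,2}_{loc}$ and, by Sobolev embedding, $\nabla u \in L^{p_1}_{loc}$ with $p_1 = 2n/(n-2)$ if $n \ge 3$ (any finite exponent if $n \le 2$). Reinserting this into $\tilde g$ raises its integrability to $L^{p_1}_{loc}$, giving $u \in W^{2,p_1}_{loc}$, and iterating this scheme a finite number of times (the number depending only on $n$) produces $\nabla u \in L^p_{loc}$ for every $p < \infty$, hence $u \in W^{2,p}_{loc}$ for every $p < \infty$. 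Applying each step on a nested sequence of balls $B(x_0, r_k)$ with $r = r_0 < r_1 < \dots < r_K < R$, and absorbing the coefficient-size bound $\|\mathbf c\|_\infty + \|d\|_\infty \le K$ into the constants at each stage, chains to a final estimate for $\|u\|_{W^{2,p}(B(x_0,r))}$ in terms of $\|u\|_{L^2(B(x_0,R))} + \|g\|_{L^\infty(B(x_0,R))}$, with constant depending only on $n, K, r, R, p$. Morrey's inequality then converts this into the $C^{1,\theta}$ bound \eqref{c1_er_eqn}.

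The main obstacle I anticipate is purely bookkeeping: verifying that, after finitely many bootstrap steps on shrinking balls, the accumulated constant depends only on the declared parameters $n, K, r, R, \theta$ and not on any additional structural information about $\mathbf c$ or $d$ beyond their $L^\infty$ bound. Since each interior step relies only on the Calder\'on--Zygmund estimate for $-\Delta$ (whose principal part is coefficient-free) and absorbs lower-order terms via their $L^\infty$ norms, this is a routine, though slightly tedious, verification. The replacement of $\|g\|_{L^p}$ by $\|g\|_{L^\infty}$ in the final estimate is immediate since the balls are bounded, so $\|g\|_{L^p(B(x_0,R))} \le C(n,R,p)\,\|g\|_{L^\infty(B(x_0,R))}$.
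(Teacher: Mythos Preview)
Your proposal is correct and matches what the paper does: the paper simply cites \cite[Proposition 3.1]{hearn_sob2} and remarks that the additional H\"older regularity ``follows from the proof,'' i.e.\ from exactly the $L^p$--bootstrap plus Morrey embedding you outline. You have made explicit what the paper leaves as a citation, and the bookkeeping of constants through the nested balls is indeed routine.
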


\section{Proof of Theorem \ref{thm:ddpsi}}
\label{chpt:2}
The proof will involve the use of a multiplicative factor, frequently called a \textit{Jastrow factor} in mathematical literature, to improve the regularity of $\psi$. This is a strategy that has been used successfully in, for example, \cite{density_smooth}, \cite{densities_atoms} to elucidate regularity properties of $\psi$. We start by defining a function $F = F(\bx)$, depending only on $N$ and $Z$, such that the function $e^{-F}\psi$ solves an elliptic equation with bounded coefficients. These coefficients behave suitably well under the action of cluster derivatives. This allows us to use elliptic regularity to produce bounds to the cluster derivatives of $e^{-F}\psi$ and hence to $\psi$ itself.

\subsection{Jastrow factors}
We begin by defining
\begin{align}
\label{f_def}
F(\bx) &= F_c(\bx) - F_s(\bx),\quad F_c(\bold{x}) = F_c^{(en)}(\bx) + F_c^{(ee)}(\bx)
\end{align}
where
\begin{align}
F^{(en)}_c(\bold{x}) = -\frac{Z}{2}\sum_{1 \le j \le N}|x_j|, \quad F^{(ee)}_c(\bx) = \frac{1}{4}\sum_{1\le l<k \le N} |x_l - x_k|,\\
\label{f2_def}
F_s(\bold{x}) = -\frac{Z}{2}\sum_{1 \le j \le N}\sqrt{|x_j|^2 + 1} + \frac{1}{4}\sum_{1 \le l<k \le N} \sqrt{|x_l - x_k|^2 + 1},
\end{align}
for $\bold{x} \in \real^{3N}$.
The function $F$ was used in \cite{coulomb_estimates}. We now detail some basic facts regarding these functions. Firstly,
\begin{equation}
\label{f1_v}
\Delta F_c = V
\end{equation}
where $V$ is the Coulomb potential, (\ref{coulomb}). The function $F_s$ has the same behaviour as $F_c$ at infinity. However, $F_s \in C^{\infty}(\real^{3N})$ and
\begin{equation}
\label{f2_linf}
\partial^{\alpha} F_s \in L^{\infty}(\real^{3N}) \quad \text{for all }\, \alpha \in \naturals_0^{3N}, |\alpha| \ge 1.
\end{equation}
Finally, the function $F$ obeys
\begin{equation}
\label{f_linf}
F, \nabla F \in L^{\infty}(\real^{3N}).
\end{equation}

The function $F$ is used to define the following object, which will be used throughout the proof:
\begin{equation}
\label{phi_def}
\phi = e^{-F}\psi.
\end{equation}
Using (\ref{eq:se}) the following elliptic equation can be shown to hold for $\phi$ a weak solution,
\begin{equation}
\label{phi_pde_eq}
-\Delta \phi -2 \nabla F \cdot \nabla \phi + (\Delta F_s - |\nabla F|^2 - E)\phi = 0.
\end{equation}
Since all coefficients are bounded in $\real^{3N}$ we can see from Theorem \ref{thm:c1} that $\phi \in C^{1,\theta}(\real^{3N})$ for each $\theta \in [0,1)$. Furthermore, by the same theorem, for any pair $0<r<R$ we can find constants $C, C'$, dependent only on $N, Z, E, r, R$ and $\theta$, such that
\begin{equation}
\label{eq:phi_er}
\norm{\phi}_{C^{1,\theta}(\overline{B(\bold{x}, r)})} \le C\norm{\phi}_{L^{2}(B(\bold{x}, R))} \le C'\norm{\phi}_{L^{\infty}(B(\bold{x}, R))}
\end{equation}
for all $\bold{x} \in \real^{3N}$.

From the definition of $\phi$, (\ref{phi_def}), along with (\ref{f_linf}) there exists new constants $C<C'$, dependent only on $Z$ and $N$ (hence independent of $R>0$), such that
\begin{equation}
\label{eq:phi_psi}
C\norm{\phi}_{L^{\infty}(B(\bold{x}, R))} \le \norm{\psi}_{L^{\infty}(B(\bold{x}, R))} \le C'\norm{\phi}_{L^{\infty}(B(\bold{x}, R))}
\end{equation}
for all $\bold{x} \in \real^{3N}$.
\subsection{Derivatives of F}

Informally, our objective is to take cluster derivatives of the elliptic equation (\ref{phi_pde_eq}) and apply elliptic regularity. To do so, we require bounds to the cluster derivatives of the coefficients present in this equation. This is the objective of the current section. To begin, we state and prove the following preparatory lemma involving the distances introduced in (\ref{lam_def}), (\ref{eq:lam_alpha}) and (\ref{eq:mu_alpha}).
\begin{lem}
\label{mu_lemma}
For any $\boldsymbol{\sigma} = (\sigma_1, \dots, \sigma_M) \in \naturals_0^{3M}$ we have for $k=0,1$,
\begin{equation}
\label{mu_1sig}
\mu_{\boldsymbol{\sigma}}(\bold{y})^{k-|\boldsymbol{\sigma}|} \le \lambda_{\boldsymbol{\sigma}}(\bold{y})^{k} \lambda_{P_1}(\bold{y})^{-|\sigma_1|} \dots \lambda_{P_M}(\bold{y})^{-|\sigma_M|}
\end{equation}
for all $\bold{y} \in \real^{3N}$. Furthermore, for some $n \ge 1$ let $\boldsymbol{\beta}^{(1)}, \dots, \boldsymbol{\beta}^{(n)}$ be an arbitrary collection of multiindices in $\naturals_0^{3M}$ such that $\boldsymbol{\beta}^{(1)} + \dots + \boldsymbol{\beta}^{(n)} = \boldsymbol{\sigma}$, then
\begin{equation}
\label{lam_prod}
\prod_{j=1}^n  \lambda_{\boldsymbol{\beta}^{(j)}}(\bold{y}) \le \lambda_{\boldsymbol{\sigma}}(\bold{y})
\end{equation}
for all $\bold{y} \in \real^{3N}$.
\end{lem}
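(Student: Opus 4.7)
\medskip
\noindent\textbf{Proof proposal.}

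The plan is to handle the two inequalities separately, and in both cases to reduce the statement to a direct comparison of $\lambda_{\boldsymbol\sigma}$ and $\mu_{\boldsymbol\sigma}$ at a carefully chosen index. Throughout the argument I will use only the elementary facts that $\lambda_{P_j}(\bold{y})\in(0,1]$ for every $j$, that $\lambda_{\boldsymbol\sigma}(\bold{y})=\min\{\lambda_{P_j}(\bold{y}):\sigma_j\neq 0\}$ and $\mu_{\boldsymbol\sigma}(\bold{y})=\max\{\lambda_{P_j}(\bold{y}):\sigma_j\neq 0\}$, and the conventions $\lambda_{\boldsymbol 0}=\mu_{\boldsymbol 0}=1$.

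For \eqref{mu_1sig}, the case $\boldsymbol\sigma=0$ is trivial by the convention. Assume $\boldsymbol\sigma\neq 0$ and pick an index $j'$ attaining the minimum, so that $\lambda_{P_{j'}}(\bold{y})=\lambda_{\boldsymbol\sigma}(\bold{y})$; note that necessarily $\sigma_{j'}\neq 0$. For every other $j$ with $\sigma_j\neq 0$ we have $\lambda_{P_j}(\bold{y})\le \mu_{\boldsymbol\sigma}(\bold{y})$, and hence
\begin{equation*}
\prod_{j=1}^M \lambda_{P_j}(\bold{y})^{-|\sigma_j|}
\ge \lambda_{\boldsymbol\sigma}(\bold{y})^{-|\sigma_{j'}|}\,\mu_{\boldsymbol\sigma}(\bold{y})^{-(|\boldsymbol\sigma|-|\sigma_{j'}|)}.
\end{equation*}
Substituting this lower bound into the right-hand side of \eqref{mu_1sig} reduces the claim to the single-variable inequality $\mu_{\boldsymbol\sigma}(\bold{y})^{1-|\sigma_{j'}|}\le \lambda_{\boldsymbol\sigma}(\bold{y})^{1-|\sigma_{j'}|}$ when $k=1$, and to $1\le (\mu_{\boldsymbol\sigma}/\lambda_{\boldsymbol\sigma})^{|\sigma_{j'}|-1}\cdot (\lambda_{\boldsymbol\sigma}/\mu_{\boldsymbol\sigma})^{-|\sigma_{j'}|}$ type bookkeeping when $k=0$. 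Since $|\sigma_{j'}|\ge 1$ the exponent $1-|\sigma_{j'}|$ is non-positive, and since $0<\lambda_{\boldsymbol\sigma}\le\mu_{\boldsymbol\sigma}\le 1$, raising to a non-positive power reverses the inequality, giving the desired bound. The $k=0$ case is similar and actually easier, following from the simpler observation $\lambda_{P_j}(\bold{y})^{-|\sigma_j|}\ge \mu_{\boldsymbol\sigma}(\bold{y})^{-|\sigma_j|}$ for each $j$ with $\sigma_j\neq 0$, multiplied over $j$.

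For \eqref{lam_prod}, I would argue as follows. If any $\boldsymbol\beta^{(j)}=0$ its factor equals $1$ and can be dropped, so assume each $\boldsymbol\beta^{(j)}$ is non-zero. Choose $i^*$ such that $\lambda_{P_{i^*}}(\bold{y})=\lambda_{\boldsymbol\sigma}(\bold{y})$, noting $\sigma_{i^*}\neq 0$. Since $\boldsymbol\sigma=\sum_{j}\boldsymbol\beta^{(j)}$ there must exist at least one $j^*$ with $\beta^{(j^*)}_{i^*}\neq 0$, and for this $j^*$ the definition of $\lambda_{\boldsymbol\beta^{(j^*)}}$ immediately yields $\lambda_{\boldsymbol\beta^{(j^*)}}(\bold{y})\le \lambda_{P_{i^*}}(\bold{y})=\lambda_{\boldsymbol\sigma}(\bold{y})$. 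All other factors $\lambda_{\boldsymbol\beta^{(j)}}(\bold{y})$ are bounded above by $1$, so the product is controlled by $\lambda_{\boldsymbol\sigma}(\bold{y})$ as claimed.

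Of the two parts, \eqref{lam_prod} is essentially a one-line pigeonhole argument; the mild obstacle is in \eqref{mu_1sig}, where one must organise the case analysis so that the $k=1$ inequality reduces cleanly to raising the comparison $\lambda_{\boldsymbol\sigma}\le\mu_{\boldsymbol\sigma}\le 1$ to a non-positive power at the distinguished index $j'$. Once this reduction is set up correctly, the proof is entirely elementary and no analytic machinery is needed.
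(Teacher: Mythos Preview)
Your proposal is correct and follows essentially the same approach as the paper. For \eqref{lam_prod} the pigeonhole argument is identical, and for \eqref{mu_1sig} both you and the paper pick the index where the minimum $\lambda_{\boldsymbol\sigma}$ is attained; the only cosmetic difference is that in the $k=1$ case the paper expands $\mu_{\boldsymbol\sigma}^{1-|\boldsymbol\sigma|}$ as a product and bounds each factor directly by the corresponding $\lambda_{P_j}$-power, whereas you first bound the right-hand side from below and then reduce to the single comparison $\mu_{\boldsymbol\sigma}^{1-|\sigma_{j'}|}\le\lambda_{\boldsymbol\sigma}^{1-|\sigma_{j'}|}$ --- these are equivalent rearrangements of the same inequality.
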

\begin{proof}
The results are trivial in the case of $\boldsymbol{\sigma} = 0$ therefore we assume in the following that $\boldsymbol{\sigma}$ is non-zero. First, observe for all $j=1, \dots, M$,
\begin{equation*}
\begin{split}
\mu_{\boldsymbol{\sigma}}(\bold{y})^{-|\sigma_j|} &\le \lambda_{P_j}(\bold{y})^{-|\sigma_j|}\\
\mu_{\boldsymbol{\sigma}}(\bold{y})^{1-|\sigma_j|} &\le \lambda_{P_j}(\bold{y})^{1-|\sigma_j|} \quad \text{ if } \sigma_j \ne 0
\end{split}
\end{equation*}
by the definition of $\mu_{\boldsymbol{\sigma}}$. Now perform the following trivial expansion of the product,
\begin{equation*}
\begin{split}
\mu_{\boldsymbol{\sigma}}(\bold{y})^{-|\boldsymbol{\sigma}|} &= \mu_{\boldsymbol{\sigma}}(\bold{y})^{-|\sigma_1|} \dots \mu_{\boldsymbol{\sigma}}(\bold{y})^{-|\sigma_M|}\\
&\le \lambda_{P_1}(\bold{y})^{-|\sigma_1|} \dots \lambda_{P_M}(\bold{y})^{-|\sigma_M|}
\end{split}
\end{equation*}
which proves (\ref{mu_1sig}) for $k=0$. For $k=1$, consider that for each $\bold{y}$ we can find $l = 1, \dots, M$ such that $\lambda_{\boldsymbol{\sigma}}(\bold{y}) = \lambda_{P_l}(\bold{y})$ and $\sigma_l \ne 0$. Then,
\begin{equation*}
\begin{split}
\mu_{\boldsymbol{\sigma}}(\bold{y})^{1-|\boldsymbol{\sigma}|} &= \mu_{\boldsymbol{\sigma}}(\bold{y})^{-|\sigma_1|} \dots \mu_{\boldsymbol{\sigma}}(\bold{y})^{1-|\sigma_l|} \dots \mu_{\boldsymbol{\sigma}}(\bold{y})^{-|\sigma_M|}\\
&\le  \lambda_{P_1}(\bold{y})^{-|\sigma_1|}  \dots \lambda_{P_l}(\bold{y})^{1-|\sigma_l|} \dots \lambda_{P_M}(\bold{y})^{-|\sigma_M|}\\
&=  \lambda_{\boldsymbol{\sigma}}(\bold{y}) \lambda_{P_1}(\bold{y})^{-|\sigma_1|} \dots \lambda_{P_M}(\bold{y})^{-|\sigma_M|}
\end{split}
\end{equation*}
as required.

\paragraph{}
Finally we prove (\ref{lam_prod}). As above, take arbitrary $\bold{y}$ and a corresponding $l$ such that $\lambda_{\boldsymbol{\sigma}}(\bold{y}) = \lambda_{P_l}(\bold{y})$ with $\sigma_l \ne 0$. For each $1 \le j \le n$ we denote the $\naturals_0^3$-components of $\boldsymbol{\beta}^{(j)}$ as $\boldsymbol{\beta}^{(j)} = (\beta^{(j)}_1, \dots, \beta^{(j)}_M)$. We know $\boldsymbol{\beta}^{(1)} + \dots + \boldsymbol{\beta}^{(n)} = \boldsymbol{\sigma}$ so in particular, $\beta_l^{(1)} + \dots + \beta_l^{(n)} = \sigma_l$. Since $\sigma_l \ne 0$ there exists at least one $1 \le r \le n$ such that $\beta^{(r)}_{l} \ne 0$. Hence by the definition of $\lambda_{\boldsymbol{\beta}^{(r)}}$,
\begin{equation*}
\lambda_{\boldsymbol{\beta}^{(r)}}(\bold{y}) = \min\{\lambda_{P_s}(\bold{y}) : \beta_s^{(r)} \ne 0 \text{, } s = 1, \dots, M \} \le \lambda_{P_l}(\bold{y}) = \lambda_{\boldsymbol{\sigma}}(\bold{y}).
\end{equation*}
The remaining factors $\lambda_{\boldsymbol{\beta}^{(j)}}(\bold{y})$, for $j \ne r$, can each be bounded above by one.

\end{proof}

The following lemma will be useful in proving results about taking cluster derivatives of $F$, as defined in (\ref{f_def}). It is a similar, but stronger version of \cite[Lemma 4.4]{hearn_sob2}. Later, we will apply it using $f(x)$ as the function $|x|$ for $x \in \real^3$, or derivatives thereof.
\begin{lem}
\label{der_F}
Let $f \in C^{\infty}(\real^3 \backslash \{0 \})$ and $k \in \naturals_0$ be such that for each $\sigma \in \naturals_0^3$ there exists $C$ such that
\begin{equation}
\label{dsf_ineq}
|\partial^{\sigma} f(x)| \le C|x|^{k-|\sigma|} \text{ for all } x \ne 0. 
\end{equation}
Then for any $\boldsymbol{\alpha} \ne 0$ with $|\boldsymbol{\alpha}| \ge k$ there exists some new $C$ such that for any $l,m = 1, \dots, N,$ the weak derivatives $D_{\bold{P}}^{\boldsymbol{\alpha}}(f(x_l))$ and $D_{\bold{P}}^{\boldsymbol{\alpha}}(f(x_l - x_m))$ are both smooth in $\Sigma_{\boldsymbol{\alpha}}^c$ and obey
\begin{equation*}
|D_{\bold{P}}^{\boldsymbol{\alpha}}(f(x_l))|,\, |D_{\bold{P}}^{\boldsymbol{\alpha}}(f(x_l - x_m))| \le Cq_{\boldsymbol{\alpha}}(\bold{x})^{k-|\boldsymbol{\alpha}|}
\end{equation*}
for all $\bold{x} \in \Sigma_{\boldsymbol{\alpha}}^c$.
\end{lem}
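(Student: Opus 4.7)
The plan is to observe that cluster derivatives commute (being sums of ordinary partial derivatives) and to analyse each factor $D_{P_j}^{\alpha_j}$ separately as it acts on the one-variable or two-variable test function. First I would handle $f(x_l)$: a single factor $D_{P_j}^{\alpha_j} f(x_l)$ equals $\partial^{\alpha_j} f(x_l)$ when $l \in P_j$, equals $f(x_l)$ when $\alpha_j = 0$, and vanishes whenever $\alpha_j \ne 0$ and $l \notin P_j$ (in the latter case every summand in the definition (\ref{eq:cd}) of $D_{P_j}^{e_i}$ annihilates $f(x_l)$). Composing, nontriviality of $D_{\bold{P}}^{\bal} f(x_l)$ forces $l \in P_j$ for every $j$ with $\alpha_j \ne 0$, and in that situation $D_{\bold{P}}^{\bal} f(x_l) = \partial^{\alpha_1 + \cdots + \alpha_M} f(x_l)$, which is smooth on $\Sigma_{\bal}^c$ since $x_l \ne 0$ there.

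The hypothesis (\ref{dsf_ineq}) then yields $|\partial^{\alpha_1 + \cdots + \alpha_M} f(x_l)| \le C|x_l|^{k - |\bal|}$. Since for every $j$ with $\alpha_j \ne 0$ we have $l \in P_j$, the definition (\ref{dp_def}) gives $d_{P_j}(\bx) \le |x_l|$, and so $q_{\bal}(\bx) \le |x_l|$. The standing assumption $|\bal| \ge k$ makes the exponent $k-|\bal|$ non-positive, hence $|x_l|^{k-|\bal|} \le q_{\bal}(\bx)^{k-|\bal|}$, which is the claimed estimate.

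For $f(x_l - x_m)$ the pattern is analogous but with one further case to exclude. The factor $D_{P_j}^{\alpha_j} f(x_l - x_m)$ vanishes whenever both $l$ and $m$ lie in $P_j$, since in each first-order cluster derivative the $i=l$ and $i=m$ summands cancel by the chain rule; it also vanishes if $\{l,m\} \cap P_j = \emptyset$ with $\alpha_j \ne 0$; otherwise it equals $\pm \partial^{\alpha_j} f(x_l - x_m)$ depending on which of $l,m$ lies in $P_j$. Nontriviality therefore forces exactly one of $l,m$ to lie in $P_j$ for each $j$ with $\alpha_j \ne 0$, whence $d_{P_j}(\bx) \le 2^{-1/2}|x_l - x_m|$ and $q_{\bal}(\bx) \le 2^{-1/2}|x_l - x_m|$. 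Repeating the argument above with this lower bound yields the required estimate, with an irrelevant factor of $2^{(|\bal|-k)/2}$ absorbed into $C$, and smoothness on $\Sigma_{\bal}^c$ follows from $x_l \ne x_m$ there. There is no serious obstacle beyond bookkeeping; the only conceptual point is that the hypothesis $|\bal| \ge k$ is precisely what converts the natural bound in terms of $|x_l|$ or $|x_l - x_m|$ into one in terms of the maximum $q_{\bal}$, rather than the minimum $d_{\bal}$, which is essential for the distinction between $q_{\bal}$ and $d_{\bal}$ to play out correctly in later applications.
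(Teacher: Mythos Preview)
Your proof is correct and follows essentially the same approach as the paper's own proof: identify when the cluster derivative is nonzero, observe it then equals an ordinary partial derivative $\partial^{\alpha_1+\cdots+\alpha_M}f$ evaluated at $x_l$ or $x_l-x_m$, and use the definition of $d_{P_j}$ together with $|\bal|\ge k$ to convert the bound into one involving $q_{\bal}$. The paper's argument differs only in presentation, not substance.
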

\begin{proof}
Take any $j = 1, \dots, M$ with $\alpha_j \ne 0$, then we have $D^{\alpha_j}_{P_j}(f(x_l)) \equiv 0$ for each $l \in P_j^c$. Therefore, for $D^{\boldsymbol{\alpha}}_{\bold{P}} (f(x_l))$ to not be identically zero we require that $l \in P_j$ for each $j$ with $\alpha_j \ne 0$. For such $l$ we have $x_l \ne 0$ since $\bold{x} \in \Sigma^c_{\boldsymbol{\alpha}}$, and
\begin{equation*}
|x_l| \ge d_{P_j}(\bold{x})
\end{equation*}
for each $j$ with $\alpha_j \ne 0$ by (\ref{dp_def}). Therefore, for constant $C$ in (\ref{dsf_ineq}), we have
\begin{equation*}
|D^{\boldsymbol{\alpha}}_{\bold{P}} (f(x_l))| = |\partial^{\alpha_1 + \dots + \alpha_M}f(x_l)| \le C|x_l|^{k-|\boldsymbol{\alpha}|} \le Cq_{\boldsymbol{\alpha}}(\bold{x})^{k-|\boldsymbol{\alpha}|}
\end{equation*}
because $|\boldsymbol{\alpha}| \ge k$. Similarly, for each $j = 1, \dots, M$, with $\alpha_j \ne 0$ we have $D^{\alpha_j}_{P_j} (f(x_l - x_m)) \equiv 0$ if either $l, m \in P_j$ or $l, m \in P_j^c$. Therefore, for $D^{\boldsymbol{\alpha}}_{\bold{P}} (f(x_l - x_m))$ to not be identically zero we require that
\begin{equation*}
(l,m) \in \bigcap_{j \,:\, \alpha_j \ne 0} \big((P_j \times P_j^c) \cup (P_j^c \times P_j)\big).
\end{equation*}
For such $(l,m)$ we have $x_l \ne x_m$ since $\bold{x} \in \Sigma^c_{\boldsymbol{\alpha}}$ and
\begin{equation*}
 |x_l - x_m| \ge \sqrt{2}\, d_{P_j}(\bold{x})
\end{equation*}
for each $j$ with $\alpha_j \ne 0$ by (\ref{dp_def}). Therefore, for some constant $C'$,
\begin{equation*}
|D^{\boldsymbol{\alpha}}_{\bold{P}} (f(x_l - x_m))| = |\partial^{\alpha_1 + \dots + \alpha_M}f(x_l - x_m)| \le C|x_l - x_m|^{k - |\boldsymbol{\alpha}|} \le C' q_{\boldsymbol{\alpha}}(\bold{x})^{k-|\boldsymbol{\alpha}|}
\end{equation*}
because $|\boldsymbol{\alpha}| \ge k$.
\end{proof}

The following lemma provides pointwise bounds to cluster derivatives of functions involving $F$.
\begin{lem}
\label{der_gradF}
For any cluster set $\bold{P}$ and any $|\bsig| \ge 1$ there exists $C$, which depends on $\boldsymbol{\sigma}$, such that for $k=0,1$,
\begin{equation}
\label{dd_f_ef}
\big|D_{\bold{P}}^{\boldsymbol{\sigma}}\nabla^k F(\bold{y})\big|, \, \big|D_{\bold{P}}^{\boldsymbol{\sigma}} \nabla^k (e^{F})(\bold{y})\big| \le  C \lambda_{\boldsymbol{\sigma}}(\bold{y})^{1-k} \lambda_{P_1}(\bold{y})^{-|\sigma_1|} \dots \lambda_{P_M}(\bold{y})^{-|\sigma_M|}
\end{equation}
\begin{equation}
\label{dd_f_2}
\big|D_{\bold{P}}^{\boldsymbol{\sigma}} |\nabla F(\bold{y})|^2\big| \le C\lambda_{P_1}(\bold{y})^{-|\sigma_1|} \dots \lambda_{P_M}(\bold{y})^{-|\sigma_M|}
\end{equation}
for all $\bold{y} \in \Sigma^c_{\boldsymbol{\sigma}}$. The bound to the first quantity in (\ref{dd_f_ef}) also holds when $F$ is replaced by $F_c$.
\end{lem}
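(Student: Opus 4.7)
The plan is to split $F = F_c - F_s$ and treat the pieces separately, then bootstrap from the resulting estimate on $F$ to the estimates on $e^F$ and $|\nabla F|^2$. The $F_s$ contribution to (\ref{dd_f_ef}) is handled immediately: by (\ref{f2_linf}), $|D_{\bold{P}}^{\bsig}\nabla^k F_s|$ is globally bounded by a constant depending on $\bsig$, while the right-hand side of (\ref{dd_f_ef}) is always bounded below by $1$. Indeed, choosing any $j_*$ with $\sigma_{j_*}\ne 0$ and $\lambda_{P_{j_*}}(\bold{y})=\lambda_{\bsig}(\bold{y})$ yields $\prod_j \lambda_{P_j}(\bold{y})^{-|\sigma_j|}\ge \lambda_{\bsig}(\bold{y})^{-1}$, so $\lambda_{\bsig}^{1-k}\prod_j \lambda_{P_j}^{-|\sigma_j|}\ge \lambda_{\bsig}^{-k}\ge 1$ for $k\in\{0,1\}$.

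For $F_c$ itself I would invoke Lemma \ref{der_F} with $f(x)=|x|$, which satisfies (\ref{dsf_ineq}) with $k=1$, obtaining $|D_{\bold{P}}^{\bsig} F_c(\bold{y})|\le C q_{\bsig}(\bold{y})^{1-|\bsig|}$. Applying the same lemma componentwise to $\nabla|x|$, which satisfies (\ref{dsf_ineq}) with $k=0$, similarly yields $|D_{\bold{P}}^{\bsig}\nabla F_c(\bold{y})|\le C q_{\bsig}(\bold{y})^{-|\bsig|}$. Since $\mu_{\bsig}=\min\{1,q_{\bsig}\}\le q_{\bsig}$ and the exponents $1-|\bsig|$ and $-|\bsig|$ are non-positive for $|\bsig|\ge 1$, I may replace $q_{\bsig}$ by $\mu_{\bsig}$ on the right. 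An application of (\ref{mu_1sig}) of Lemma \ref{mu_lemma}, with $k=1$ for the first estimate and $k=0$ for the second, then gives (\ref{dd_f_ef}) for $F_c$ and $\nabla F_c$, and combined with the $F_s$ bound above also for $F$ itself.

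For the exponential I would first derive by induction on $|\bsig|$, starting from $D^{\mu}_{\bold{P}} e^F=e^F D^{\mu}_{\bold{P}} F$ for $|\mu|=1$, the multivariate Faà di Bruno-type identity
\begin{equation*}
D_{\bold{P}}^{\bsig} e^F \;=\; e^F \sum_{\substack{\bsig^{(1)}+\cdots+\bsig^{(n)}=\bsig \\ n\ge 1,\;\bsig^{(i)}\ne 0}} c_{\bsig^{(1)},\ldots,\bsig^{(n)}} \,\prod_{i=1}^{n} D_{\bold{P}}^{\bsig^{(i)}} F.
\end{equation*}
Applying the just-proven bound on $F$ to each factor, the uniform bound $|e^F|\le C$ from (\ref{f_linf}), and the product inequality (\ref{lam_prod}) of Lemma \ref{mu_lemma} (giving $\prod_i \lambda_{\bsig^{(i)}}\le \lambda_{\bsig}$), each summand is at most $C\lambda_{\bsig}(\bold{y})\prod_j\lambda_{P_j}(\bold{y})^{-|\sigma_j|}$, which is the $k=0$ bound for $e^F$. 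The $k=1$ bound, for $\nabla e^F=e^F\nabla F$, then follows by distributing $D_{\bold{P}}^{\bsig}$ with Leibniz between $e^F$ and $\nabla F$ and invoking the $k=0$ estimate for $e^F$ (or $|e^F|\le C$ when the exponent on $e^F$ vanishes) together with the $k=1$ estimate for $\nabla F$ (or $|\nabla F|\le C$ from (\ref{f_linf}) when the exponent on $\nabla F$ vanishes); the factor $\lambda_{\bsig'}\le 1$ arising in the interior case is simply discarded.

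Finally, (\ref{dd_f_2}) follows by one more application of Leibniz to $|\nabla F|^2=\nabla F\cdot\nabla F$, combining two $k=1$ estimates from (\ref{dd_f_ef}) (or the uniform bound $|\nabla F|\le C$ at the endpoints $\bsig'=0$ or $\bsig''=0$); the two factors $\prod_j\lambda_{P_j}^{-|\sigma'_j|}$ and $\prod_j\lambda_{P_j}^{-|\sigma''_j|}$ collapse to $\prod_j\lambda_{P_j}^{-|\sigma_j|}$ as desired. The only real bookkeeping step is the Faà di Bruno expansion for $e^F$, and the key ingredient that makes it go through is precisely the product inequality (\ref{lam_prod}) in Lemma \ref{mu_lemma}; once that is in place, the remainder is a direct chain of applications of Lemmas \ref{der_F} and \ref{mu_lemma} together with the elementary bounds on $F_s$ and $e^F$.
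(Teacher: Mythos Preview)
Your proof is correct and follows essentially the same approach as the paper: apply Lemma \ref{der_F} to the building blocks of $F_c$, absorb the smooth $F_s$ contribution, convert the resulting $\mu_{\bsig}$ bound via Lemma \ref{mu_lemma}, then use a Fa\`a di Bruno expansion for $e^F$ combined with (\ref{lam_prod}), and finish with Leibniz for $\nabla e^F$ and $|\nabla F|^2$. The only cosmetic difference is that the paper obtains the $k=1$ bound for $e^F$ by differentiating the Fa\`a di Bruno terms directly (see (\ref{ddp_ef})), whereas you apply Leibniz to $e^F\nabla F$; both routes are equivalent.
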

\begin{proof}
Let $\tau$ be the function defined as $\tau(x) = |x|$ for $x \in \real^{3}$. Then, by definition (\ref{f_def}) we can write
\begin{equation}
\label{dpf1}
F_c(\bold{y}) = -\frac{Z}{2}\sum_{1 \le j \le N}\tau(y_j) + \frac{1}{4}\sum_{1\le l<m \le N} \tau(y_l - y_m).
\end{equation}
For each $r=1,\dots,N$, denote by $\nabla_{y_r}$ the three-dimensional gradient in the variable $y_r$. We then have
\begin{equation}
\label{ddpf1}
\nabla_{y_r} F_c(\bold{y}) = -\frac{Z}{2} \nabla \tau(y_r) +  \frac{1}{4}\sum_{\substack{m=1 \\ m \ne r}}^{N} \nabla \tau(y_r - y_m).
\end{equation}
We apply the $D^{\boldsymbol{\sigma}}_{\bold{P}}$-derivative to each of (\ref{dpf1}) and (\ref{ddpf1}), using Lemma \ref{der_F} with $f = \tau$ and $f = \nabla \tau$ respectively. This shows that for $k=0,1$ the functions $D_{\bold{P}}^{\boldsymbol{\sigma}} \nabla^{k} F_c$ exist and are smooth on the set $\Sigma^c_{\boldsymbol{\sigma}}$. Furthermore, there exists $C$, depending on $\boldsymbol{\sigma}$, such that
\begin{align}
\label{ddpf1_ineq}
|D_{\bold{P}}^{\boldsymbol{\sigma}} \nabla^{k} F_c(\bold{y})| &\le Cq_{\boldsymbol{\sigma}}(\bold{y})^{1-k-|\boldsymbol{\sigma}|} \le C\mu_{\boldsymbol{\sigma}}(\bold{y})^{1-k-|\boldsymbol{\sigma}|}
\end{align}
for all $\bold{y} \in \Sigma^c_{\boldsymbol{\sigma}}$. In the second inequality we used that $\mu_{\boldsymbol{\sigma}}(\bold{y}) \le q_{\boldsymbol{\sigma}}(\bold{y})$.
Recall that $\nabla F_s$ is smooth with all derivatives bounded, see (\ref{f2_linf}). Since $F = F_c-F_s$ and $\mu_{\boldsymbol{\sigma}} \le 1$, we have that for each $\boldsymbol{\sigma} \ne 0$ and $k=0,1$, there exists $C$, depending on $\boldsymbol{\sigma}$, such that
\begin{equation}
\label{dd_f}
|D_{\bold{P}}^{\boldsymbol{\sigma}}\nabla^k F(\bold{y})| \le C\mu_{\boldsymbol{\sigma}}(\bold{y})^{1-k-|\boldsymbol{\sigma}|}
\end{equation}
for all $\bold{y} \in \Sigma^c_{\boldsymbol{\sigma}}$. By a straightforward application of Lemma \ref{mu_lemma}, this proves the first bound in (\ref{dd_f_ef}), for $k=0,1$.

\paragraph{}
We proceed to prove the second bound in (\ref{dd_f_ef}). For every $\boldsymbol{\eta} \le \boldsymbol{\sigma}$ we have that $\Sigma_{\boldsymbol{\eta}} \subset \Sigma_{\boldsymbol{\sigma}}$ and therefore $D_{\bold{P}}^{\boldsymbol{\eta}} \nabla^k F$ exists and is smooth in $\Sigma_{\boldsymbol{\sigma}}^c$, for $k=0,1$. We use the chain rule for weak derivatives to show that $D^{\boldsymbol{\sigma}}_{\bold{P}}(e^F)$ exists in $\Sigma_{\boldsymbol{\sigma}}^c$ and is equal to a sum of terms, each of the form
\begin{equation}
\label{dp_ef}
e^F \prod_{1 \le j \le n} D^{\boldsymbol{\beta}^{(j)}}_{\bold{P}}F
\end{equation}
for some $1 \le n \le |\boldsymbol{\sigma}|$ and some collection $0 \ne \boldsymbol{\beta}^{(j)} \in \naturals_0^{3M}$ for $j=1,\dots,n$, where $\boldsymbol{\beta}^{(1)} + \dots + \boldsymbol{\beta}^{(n)} = \boldsymbol{\sigma}$. 
For each $j=1,\dots,n$, we write $\boldsymbol{\beta}^{(j)} = (\beta^{(j)}_1, \dots, \beta^{(j)}_{M})$ to denote the $\naturals_0^3$-components of $\boldsymbol{\beta}^{(j)}$.
Similarly, the weak derivative $D^{\boldsymbol{\sigma}}_{\bold{P}} \nabla(e^F)$ exists in $\Sigma_{\boldsymbol{\sigma}}^c$ and the gradient of the general term (\ref{dp_ef}) is equal to
\begin{equation}
\label{ddp_ef}
e^F \, \nabla F \prod_{1 \le j \le n} D_{\bold{P}}^{\boldsymbol{\beta}^{(j)}}F + e^F \sum_{r=1}^n \Big(D_{\bold{P}}^{\boldsymbol{\beta}^{(r)}} \nabla F\, \prod_{\substack{1\le j \le n \\ j\ne r}} D_{\bold{P}}^{\boldsymbol{\beta}^{(j)}} F\Big).
\end{equation}
We will now find bounds for the above expressions. To do this we will use (\ref{dd_f}) to bound cluster derivatives of $F$. This, along with (\ref{mu_1sig}) and (\ref{lam_prod}) of Lemma \ref{mu_lemma}, allows us to bound the following product
\begin{align*}
\Big|\prod_{j=1}^n D_{\bold{P}}^{\boldsymbol{\beta}^{(j)}} F (\bold{y})\Big| &\le C \prod_{j=1}^n \mu_{\boldsymbol{\beta}^{(j)}}(\bold{y})^{1-|\boldsymbol{\beta}^{(j)}|}\\
&\le C \prod_{j=1}^n  \lambda_{\boldsymbol{\beta}^{(j)}}(\bold{y}) \lambda_{P_1}(\bold{y})^{-|\beta^{(j)}_1|} \dots \lambda_{P_M}(\bold{y})^{-|\beta^{(j)}_M|} \\
&= C \Big(\prod_{j=1}^n  \lambda_{\boldsymbol{\beta}^{(j)}}(\bold{y})\Big) \Big( \prod_{l=1}^M \lambda_{P_l}(\bold{y})^{-|\beta^{(1)}_l|} \dots \lambda_{P_l}(\bold{y})^{-|\beta^{(n)}_l|}\Big) \\
&\le C\lambda_{\boldsymbol{\sigma}}(\bold{y}) \lambda_{P_1}(\bold{y})^{-|\sigma_1|} \dots \lambda_{P_M}(\bold{y})^{-|\sigma_M|}
\end{align*}
where $C$ is some constant, depending on $\boldsymbol{\sigma}$, and all $\bold{y} \in \Sigma^c_{\boldsymbol{\sigma}}$. We used that for each $l=1,\dots,M$, we have $|\beta^{(1)}_l| + \dots + |\beta^{(n)}_l| = |\sigma_l|$ as a consequence of $\boldsymbol{\beta}^{(1)} + \dots + \boldsymbol{\beta}^{(n)} = \boldsymbol{\sigma}$. We now bound the following product in a similar manner using (\ref{dd_f}) and Lemma \ref{mu_lemma}. For any $r=1, \dots, n$,
\begin{align*}
\Big| D^{\boldsymbol{\beta}^{(r)}}_{\bold{P}} \nabla F(\bold{y}) \prod_{\substack{1\le j\le n \\ j \ne r}} D^{\boldsymbol{\beta}^{(j)}}_{\bold{P}} F(\bold{y}) \Big| &\le C \lambda_{P_1}(\bold{y})^{-|\sigma_1|} \dots \lambda_{P_M}(\bold{y})^{-|\sigma_M|}
\end{align*}
for some constant $C$, depending on $\boldsymbol{\sigma}$, and all $\bold{y} \in \Sigma^c_{\boldsymbol{\sigma}}$. Using the fact that $e^F$ and $\nabla F$ are bounded in $\real^{3N}$, it is now straightforward to bound (\ref{dp_ef}) and (\ref{ddp_ef}) appropriately. This completes the proof of (\ref{dd_f_ef}).

\paragraph{}
Finally, we prove the inequality (\ref{dd_f_2}). It can be shown that the following Leibniz rule holds
\begin{equation*}
D_{\bold{P}}^{\boldsymbol{\sigma}} |\nabla F|^2 = \sum_{\boldsymbol{\beta} \le \boldsymbol{\sigma}} {\boldsymbol{\sigma} \choose \boldsymbol{\beta}} D^{\boldsymbol{\beta}}_{\bold{P}} \nabla F \cdot D^{\boldsymbol{\sigma}-\boldsymbol{\beta}}_{\bold{P}} \nabla F
\end{equation*}
in $\Sigma^c_{\boldsymbol{\sigma}}$. For every $\boldsymbol{\beta} \le \boldsymbol{\sigma}$ we can bound $D^{\boldsymbol{\beta}}_{\bold{P}} \nabla F$ by a constant if $\boldsymbol{\beta} = 0$, and by (\ref{dd_f}) otherwise. This gives some constant $C$, depending on $\boldsymbol{\sigma}$, such that
\begin{align*}
\big|D_{\bold{P}}^{\boldsymbol{\sigma}} |\nabla F(\bold{y})|^2\big| &\le C\sum_{\boldsymbol{\beta} \le \boldsymbol{\sigma}}  \mu_{\boldsymbol{\beta}}(\bold{y})^{-|\boldsymbol{\beta}|} \mu_{\boldsymbol{\sigma} - \boldsymbol{\beta}}(\bold{y})^{-|\boldsymbol{\sigma}|+|\boldsymbol{\beta}|}
\end{align*}
for all $\bold{y} \in \Sigma^c_{\boldsymbol{\sigma}}$. The required bound is obtained after an application of (\ref{mu_1sig}) of Lemma \ref{mu_lemma}.

\end{proof}

\paragraph{}
The following result extends the bounds given in Lemma \ref{der_gradF} to give bounds to $L^{\infty}$-norms in balls. It is useful to note that if $\bold{x} \in \Sigma_{\boldsymbol{\sigma}}^c$, for $\boldsymbol{\sigma} \ne 0$, then $B(\bold{x}, \nu\lambda_{\boldsymbol{\sigma}}(\bold{x})) \subset \Sigma_{\boldsymbol{\sigma}}^c$ for all $\nu \in (0,1)$.

\begin{lem}
\label{ab_lem}
For any $|\bsig| \ge 1$ and any $\nu \in (0,1)$ there exists $C$, depending on $\boldsymbol{\sigma}$ and $\nu$, such that for $k=0,1$,
\begin{multline}
\label{dd_f_ef_2}
\norm{D_{\bold{P}}^{\boldsymbol{\sigma}} \nabla^k F}_{L^{\infty}(B(\bold{x}, \nu\lambda_{\boldsymbol{\sigma}}(\bold{x})))},\, \norm{D_{\bold{P}}^{\boldsymbol{\sigma}}  \nabla^k (e^F)}_{L^{\infty}(B(\bold{x}, \nu\lambda_{\boldsymbol{\sigma}}(\bold{x})))}\\ \le C\lambda_{\boldsymbol{\sigma}}(\bold{x})^{1-k} \lambda_{P_1}(\bold{x})^{-|\sigma_1|} \dots \lambda_{P_M}(\bold{x})^{-|\sigma_M|}
\end{multline}
\begin{equation}
\label{dd_f_2_2}
\norm{D_{\bold{P}}^{\boldsymbol{\sigma}} |\nabla F|^2}_{L^{\infty}(B(\bold{x}, \nu\lambda_{\boldsymbol{\sigma}}(\bold{x})))} \le C\lambda_{P_1}(\bold{x})^{-|\sigma_1|} \dots \lambda_{P_M}(\bold{x})^{-|\sigma_M|}
\end{equation}
for all $\bold{x} \in \Sigma_{\boldsymbol{\sigma}}^c$. The bound to the first norm in (\ref{dd_f_ef_2}) also holds when $F$ is replaced by $F_c$.
\end{lem}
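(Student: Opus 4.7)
The plan is to upgrade the pointwise bounds of Lemma \ref{der_gradF} to sup-norm bounds on balls by comparing the distance functions at a generic $\by \in B(\bx,\nu\lambda_{\bsig}(\bx))$ with those at the center $\bx$. This comparison is purely a consequence of the Lipschitz property (\ref{lam_lip}).

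First I fix $\bx \in \Sigma^c_{\bsig}$ and any $\by \in B(\bx, \nu\lambda_{\bsig}(\bx))$. As noted in the lemma's preamble, $\by \in \Sigma^c_{\bsig}$, so the pointwise estimates (\ref{dd_f_ef}) and (\ref{dd_f_2}) of Lemma \ref{der_gradF} are available at $\by$. Next, for each $j$ with $\sigma_j \ne 0$, the Lipschitz bound (\ref{lam_lip}) gives
\begin{equation*}
\lambda_{P_j}(\by) \ge \lambda_{P_j}(\bx) - |\bx-\by| \ge \lambda_{P_j}(\bx) - \nu\lambda_{\bsig}(\bx) \ge (1-\nu)\lambda_{P_j}(\bx),
\end{equation*}
since $\lambda_{\bsig}(\bx) \le \lambda_{P_j}(\bx)$ by definition (\ref{eq:lam_alpha}). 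Similarly,
\begin{equation*}
\lambda_{\bsig}(\by) \le \lambda_{\bsig}(\bx) + |\bx-\by| \le (1+\nu)\lambda_{\bsig}(\bx).
\end{equation*}
For indices $j$ with $\sigma_j = 0$, the corresponding factor in the product on the right-hand side of (\ref{dd_f_ef}) and (\ref{dd_f_2}) is trivially $1$, so no comparison is needed.

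Substituting these bounds into (\ref{dd_f_ef}) of Lemma \ref{der_gradF} for $k=0,1$ yields, for every $\by$ in the ball,
\begin{equation*}
\big|D^{\bsig}_{\bP}\nabla^k F(\by)\big| \le C(1+\nu)^{1-k}(1-\nu)^{-|\bsig|}\,\lambda_{\bsig}(\bx)^{1-k}\lambda_{P_1}(\bx)^{-|\sigma_1|}\cdots\lambda_{P_M}(\bx)^{-|\sigma_M|},
\end{equation*}
and the same estimate for $D^{\bsig}_{\bP}\nabla^k(e^F)$ and (with $F$ replaced by $F_c$). Taking the supremum over $\by \in B(\bx,\nu\lambda_{\bsig}(\bx))$ gives (\ref{dd_f_ef_2}). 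The bound (\ref{dd_f_2_2}) follows in the same way by applying (\ref{dd_f_2}) at each $\by$ and using the same comparison of $\lambda_{P_j}(\by)$ with $\lambda_{P_j}(\bx)$; here no $\lambda_{\bsig}$-factor appears.

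There is essentially no analytic obstacle — the argument is purely a Lipschitz transfer from a point to a ball; the only minor point of care is to separate the indices $j$ with $\sigma_j\ne 0$ (which control the ball radius through $\lambda_{\bsig}$) from those with $\sigma_j = 0$ (which contribute only a factor of $1$), and to absorb the harmless factor $(1+\nu)^{1-k}(1-\nu)^{-|\bsig|}$ into the constant $C$, whose dependence on $\nu$ is explicitly allowed by the statement.
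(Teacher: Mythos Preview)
Your proof is correct and follows essentially the same approach as the paper: apply Lemma~\ref{der_gradF} at each $\by$ in the ball and use the Lipschitz property (\ref{lam_lip}) to compare $\lambda_{P_j}(\by)$ with $\lambda_{P_j}(\bx)$. The only cosmetic difference is that you bound $\lambda_{\bsig}(\by)\le(1+\nu)\lambda_{\bsig}(\bx)$ directly (using that a minimum of $1$-Lipschitz functions is $1$-Lipschitz), whereas the paper instead picks the index $l$ with $\lambda_{\bsig}(\bx)=\lambda_{P_l}(\bx)$ and absorbs the factor $\lambda_{\bsig}(\by)\le\lambda_{P_l}(\by)$ into the product before applying the lower bound on $\lambda_{P_l}(\by)$; both routes yield the same estimate.
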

\begin{proof}
Take some $\bold{x} \in \Sigma_{\boldsymbol{\sigma}}^c$. By (\ref{lam_lip}), for each $\bold{y} \in B(\bold{x}, \nu\lambda_{\boldsymbol{\sigma}}(\bold{x}))$ we have
\begin{equation*}
|\lambda_{P_j}(\bold{x}) - \lambda_{P_j}(\bold{y})| \le |\bold{x} - \bold{y}| \le \nu \lambda_{\boldsymbol{\sigma}}(\bold{x}) \le \nu \lambda_{P_j}(\bold{x})
\end{equation*}
for each $j=1, \dots, M$ with $\sigma_j \ne 0$. The final inequality above uses the definition of $\lambda_{\boldsymbol{\sigma}}$ in (\ref{eq:lam_alpha}). By rearrangement, we have
\begin{equation}
\label{lam_xy_eq}
(1-\nu)\lambda_{P_j}(\bold{x}) \le \lambda_{P_j}(\bold{y}).
\end{equation}
Therefore,
\begin{equation}
\label{lamlam1}
\lambda_{P_1}(\bold{y})^{-|\sigma_1|} \dots \lambda_{P_M}(\bold{y})^{-|\sigma_M|} \le (1-\nu)^{-|\boldsymbol{\sigma}|} \lambda_{P_1}(\bold{x})^{-|\sigma_1|} \dots \lambda_{P_M}(\bold{x})^{-|\sigma_M|}
\end{equation}
for all $\bold{y} \in B(\bold{x}, \nu\lambda_{\boldsymbol{\sigma}}(\bold{x}))$. We now prove an analogous inequality. First, we see that $\lambda_{\boldsymbol{\sigma}}(\bold{x}) = \lambda_{P_l}(\bold{x})$ for some $1 \le l \le M$ with $\sigma_l \ne 0$ which will depend on the choice of $\bold{x}$. We also note that $\lambda_{\boldsymbol{\sigma}}(\bold{y}) \le \lambda_{P_l}(\bold{y})$ for all $\bold{y} \in \real^{3N}$ which follows from $\sigma_l \ne 0$ and the definition of $\lambda_{\boldsymbol{\sigma}}$. Therefore,
\begin{align}
\nonumber
\lambda_{\boldsymbol{\sigma}}(\bold{y})\lambda_{P_1}(\bold{y})^{-|\sigma_1|} \dots \lambda_{P_M}(\bold{y})^{-|\sigma_M|}
&\le  \lambda_{P_l}(\bold{y})^{1-|\sigma_l|}\prod_{\substack{j=1 \\ j \ne l}}^M \lambda_{P_j}(\bold{y})^{-|\sigma_j|}\\
\nonumber
&\le (1-\nu)^{1-|\boldsymbol{\sigma}|} \lambda_{P_l}(\bold{x})^{1-|\sigma_l|}\prod_{\substack{j=1 \\ j \ne l}}^M \lambda_{P_j}(\bold{x})^{-|\sigma_j|}\\
\label{lamlam2}
&= (1-\nu)^{1-|\boldsymbol{\sigma}|} \lambda_{\boldsymbol{\sigma}}(\bold{x}) \lambda_{P_1}(\bold{x})^{-|\sigma_1|} \dots \lambda_{P_M}(\bold{x})^{-|\sigma_M|}
\end{align}
for all $\bold{y} \in B(\bold{x}, \nu\lambda_{\boldsymbol{\sigma}}(\bold{x}))$. In the second step we applied (\ref{lam_xy_eq}). The required bounds then arise from Lemma \ref{der_gradF} followed by either (\ref{lamlam1}) or (\ref{lamlam2}) as appropriate.

\end{proof}

\subsection{Cluster derivatives of $\phi$ and $\psi$}

\paragraph{}
We introduce the following notation. For cluster set $\bold{P} = (P_1, \dots, P_M)$ and $\boldsymbol{\alpha} \in \naturals_0^{3M}$ define
\begin{equation*}
\phi_{\boldsymbol{\alpha}} = D_{\bold{P}}^{\boldsymbol{\alpha}}\phi.
\end{equation*}
We have previously obtained estimates for the coefficients, and their derivatives, in equation (\ref{phi_pde_eq}). By taking cluster derivatives of this equation, we obtain elliptic equations whose weak solutions are $\phi_{\boldsymbol{\alpha}}$. By a process of rescaling, using a method developed in \cite{coulomb_estimates}, we will apply elliptic regularity to give quantative bounds to the $L^{\infty}$-norms of $\phi_{\boldsymbol{\alpha}}$ and $\nabla \phi_{\boldsymbol{\alpha}}$. From here it is straightforward to obtain corresponding bounds for $D_{\bold{P}}^{\boldsymbol{\alpha}}\psi$ and $D_{\bold{P}}^{\boldsymbol{\alpha}}\nabla \psi$.

\begin{lem}
\label{diff_pde_lem}
For every $|\bal| \ge 1$ we have $\phi_{\boldsymbol{\alpha}}$ is a weak solution to
\begin{multline}
\label{phi_alpha_eq}
-\Delta \phi_{\boldsymbol{\alpha}} - 2\nabla F \cdot \nabla \phi_{\boldsymbol{\alpha}} + (\Delta F_s - |\nabla F|^2 - E) \phi_{\boldsymbol{\alpha}} \\= \sum_{\substack{\boldsymbol{\sigma} \le \boldsymbol{\alpha} \\ \boldsymbol{\sigma} \ne 0}} {\boldsymbol{\alpha} \choose \boldsymbol{\sigma}} \big( 2 D_{\bold{P}}^{\boldsymbol{\sigma}} \nabla F \cdot \nabla \phi_{\boldsymbol{\alpha}-\boldsymbol{\sigma}} - D_{\bold{P}}^{\boldsymbol{\sigma}}(\Delta F_s - |\nabla F|^2) \phi_{\boldsymbol{\alpha}-\boldsymbol{\sigma}} \big) =: g_{\bal}
\end{multline}
in $\Sigma^c_{\bal}$, and therefore $\phi_{\boldsymbol{\alpha}} \in C^1(\Sigma_{\bal}^c) \cap H^2_{loc}(\Sigma_{\bal}^c)$.
\end{lem}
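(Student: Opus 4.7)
The plan is to proceed by induction on $|\bal|$, at each step applying one further first-order cluster derivative to the weak equation for $\phi_{\bal'}$ already established at the previous stage. Since cluster derivatives are constant-coefficient linear combinations of partials, they commute with $\Delta$ and distribute across products via the classical Leibniz rule; the content of the lemma is then that this formal differentiation is justified in $\Sigma_{\bal}^c$ and that elliptic regularity (Theorem \ref{thm:c1}) promotes the resulting weak solution to $C^1 \cap H^2_{\loc}$.

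For the base case $|\bal|=1$, the weak formulation of (\ref{phi_pde_eq}) holds on all of $\real^{3N}$ with $\phi\in C^{1,\theta}(\real^{3N})$ by (\ref{eq:phi_er}) and with coefficients in $L^\infty(\real^{3N})$ by (\ref{f_linf}) and (\ref{f2_linf}). For any $\chi\in C_c^\infty(\Sigma_{\bal}^c)$ I would substitute $\eta=-D_{\bold P}^{\bal}\chi$ into the weak identity and transfer the first-order cluster derivative by integration by parts; since $\supp\chi\subset\Sigma_{\bal}^c$ and $\nabla F$, $\Delta F_s-|\nabla F|^2$ are smooth there by Lemma \ref{der_gradF}, the Leibniz rule may be applied term-by-term. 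Collecting yields (\ref{phi_alpha_eq}) weakly in $\Sigma_{\bal}^c$. Lemma \ref{ab_lem} bounds $g_{\bal}$ locally on $\Sigma_{\bal}^c$, and Theorem \ref{thm:c1} applied on balls $B(\bx,\nu\lambda_{\bal}(\bx))\subset\Sigma_{\bal}^c$ then delivers $\phi_{\bal}\in C^1(\Sigma_{\bal}^c)\cap H^2_{\loc}(\Sigma_{\bal}^c)$.

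For the inductive step, decompose $\bal=\bal'+\bbeta$ with $|\bbeta|=1$. Since $\Sigma_{\bal'}\subset\Sigma_{\bal}$, the induction hypothesis gives $\phi_{\bal'}\in C^1(\Sigma_{\bal'}^c)\cap H^2_{\loc}(\Sigma_{\bal'}^c)$ satisfying the analogue of (\ref{phi_alpha_eq}) for $\bal'$, and in particular that equation holds weakly on the smaller set $\Sigma_{\bal}^c$. I would then apply $D_{\bold P}^{\bbeta}$ to that equation against test functions in $C_c^\infty(\Sigma_{\bal}^c)$ via the same integration-by-parts argument as in the base case, and combine the freshly produced commutator with the source terms inherited from $g_{\bal'}$. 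A final application of Theorem \ref{thm:c1} then upgrades the result to the claimed $C^1\cap H^2_{\loc}$ regularity.

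The main obstacle is combinatorial: verifying that the accumulated contributions collapse into the single sum $g_{\bal}$ of (\ref{phi_alpha_eq}) rather than a more intricate iterated expression. This reduces to the binomial identity $\binom{\bal'+\bbeta}{\bsig}=\binom{\bal'}{\bsig}+\binom{\bal'}{\bsig-\bbeta}$ (valid for $|\bbeta|=1$), which precisely matches the coefficient of each $\bsig$-term produced by differentiating $g_{\bal'}$ with those produced by the commutator $[D_{\bold P}^{\bbeta},L]\phi_{\bal'}$. All analytic input to the argument—bounds on $D_{\bold P}^{\bsig}\nabla F$ and $D_{\bold P}^{\bsig}(\Delta F_s-|\nabla F|^2)$ on $\Sigma_{\bal}^c$, and the elliptic bootstrap—is already furnished by Lemma \ref{ab_lem} and Theorem \ref{thm:c1}.
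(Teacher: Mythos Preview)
Your proposal is correct and follows essentially the same route as the paper's proof: induction on $|\bal|$, passing $D_{\bold P}^{\bbeta}$ onto the test function via integration by parts, invoking the coefficient bounds from Lemma~\ref{der_gradF}/\ref{ab_lem}, and bootstrapping with Theorem~\ref{thm:c1}. Your treatment of the Pascal-type identity for the source terms is in fact more explicit than the paper's, which glosses over this point; the only step you might state more carefully is that $\phi_{\bal}\in H^1_{\loc}(\Sigma_{\bal}^c)$ (needed to make sense of the bilinear form) follows from $\phi_{\bal'}\in H^2_{\loc}(\Sigma_{\bal'}^c)\subset H^2_{\loc}(\Sigma_{\bal}^c)$ at the inductive step.
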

\begin{proof}
We prove by induction, starting with $|\boldsymbol{\alpha}| = 1$. Let $\mathcal{L}(\cdot,\cdot)$ be the bilinear form corresponding to the operator acting on $\phi_{\boldsymbol{\alpha}}$ on the left-hand side of (\ref{phi_alpha_eq}), as defined in (\ref{eq:pde}). Since $\phi \in H^{2}_{loc}(\real^{3N})$ we have $\phi_{\boldsymbol{\alpha}} \in H^1_{loc}(\real^{3N})$. We use integration by parts to show that, for any $\chi \in C^{\infty}_c(\Sigma^c_{\bal})$,
\begin{equation}
\label{eq:l_phi}
\mathcal{L}(\phi_{\boldsymbol{\alpha}} , \chi) = -\mathcal{L}(\phi, D_{\bold{P}}^{\boldsymbol{\alpha}}\chi) +  \int_{\Sigma^c_{\bal}} g_{\boldsymbol{\alpha}} \chi\,d\bold{x}.
\end{equation}
Now, $\mathcal{L}(\phi, D_{\bold{P}}^{\boldsymbol{\alpha}}\chi) = 0$ since $\phi$ is a weak solution to (\ref{phi_pde_eq}), hence (\ref{phi_alpha_eq}) holds. By Lemma \ref{der_gradF} and that $\phi \in C^1(\real^{3N})$ we see that $g_{\bal} \in L_{loc}^{\infty}(\Sigma^c_{\bal})$. Hence, by Theorems \ref{thm:c1}, $\phi_{\bal} \in C^1(\Sigma^c_{\bal}) \cap H_{loc}^2(\Sigma^c_{\bal})$

Assume the hypothesis holds for all multiindices $1 \le |\boldsymbol{\alpha}| \le k-1$ for some $k\ge 2$. Take some arbitrary multiindex $|\boldsymbol{\alpha}| = k$. We will prove the induction hypothesis for $\bal$. First, we state the useful fact that for any $\bsig \le \bal$ we have $\Sigma_{\bsig} \subset \Sigma_{\bal}$. Take some $\boldsymbol{\eta} \le \bal$ with $|\boldsymbol{\eta}| = 1$. Notice, then, that $\phi_{\bal-\boldsymbol{\eta}} \in H^2_{loc}(\Sigma^c_{\bal})$ by the induction hypothesis, and hence $\phi_{\boldsymbol{\alpha}} \in H^1_{loc}(\Sigma^c_{\bal})$. This allows $\mathcal{L}(\phi_{\bal}, \chi)$ to be defined for any $\chi \in C_c^{\infty}(\Sigma^c_{\bal})$. Integration by parts is then used on the $D^{\boldsymbol{\eta}}_{\bold{P}}$-derivative, in a similar way to (\ref{eq:l_phi}). Along with the induction hypothesis and further applications of integration by parts, we find that (\ref{phi_alpha_eq}) holds for $\phi_{\bal}$ as a weak solution. Regularity for $\phi_{\bal}$ is shown in a similar way to the $|\bal|=1$ case. Indeed, the induction hypothesis can be used to show $g_{\bal} \in L^{\infty}_{loc}(\Sigma^c_{\bal})$.

\end{proof}

\paragraph{}
We now apply the $C^1$-elliptic regularity estimates of Theorem \ref{thm:c1}, via a scaling procedure, to the equations (\ref{phi_alpha_eq}).
\begin{lem}
\label{er_phi_lem}
For all $|\boldsymbol{\alpha}| \ge 1$ and $0<r<R<1$ there exists $C$, dependent only on $E, Z, N, r$ and $R$, such that
\begin{multline}
\label{er_phi_eq}
\norm{\nabla \phi_{\boldsymbol{\alpha}}}_{L^{\infty}(B(\bold{x}, r\lambda_{\boldsymbol{\alpha}}(\bold{x})))} \\\le C \lambda_{\boldsymbol{\alpha}}(\bold{x})^{-1} \big(\norm{\phi_{\boldsymbol{\alpha}}}_{L^{\infty}(B(\bold{x}, R\lambda_{\boldsymbol{\alpha}}(\bold{x})))} + \lambda_{\boldsymbol{\alpha}}(\bold{x})^2 \norm{g_{\boldsymbol{\alpha}}}_{L^{\infty}(B(\bold{x}, R\lambda_{\boldsymbol{\alpha}}(\bold{x})))} \big)
\end{multline}
for all $\bold{x} \in \Sigma_{\boldsymbol{\alpha}}^c$. The function $g_{\boldsymbol{\alpha}}$ was given by (\ref{phi_alpha_eq}).
\end{lem}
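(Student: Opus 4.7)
The plan is a standard rescaling-plus-elliptic-regularity argument. Fix $\bold{x} \in \Sigma^c_{\boldsymbol{\alpha}}$ and set $\lambda = \lambda_{\boldsymbol{\alpha}}(\bold{x}) \in (0,1]$. Since $B(\bold{x},\lambda) \subset \Sigma^c_{\boldsymbol{\alpha}}$, Lemma \ref{diff_pde_lem} tells us that $\phi_{\boldsymbol{\alpha}} \in C^1 \cap H^2_{loc}$ on this ball and satisfies the elliptic equation (\ref{phi_alpha_eq}) there as a weak solution. I would then introduce the rescaled function $\tilde{\phi}(\bold{y}) := \phi_{\boldsymbol{\alpha}}(\bold{x}+\lambda \bold{y})$ for $\bold{y} \in B(0,1)$, so that estimates on $\tilde{\phi}$ in the unit ball translate back to estimates on $\phi_{\boldsymbol{\alpha}}$ in $B(\bold{x},\lambda)$ with explicit factors of $\lambda$.

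A straightforward chain-rule computation starting from (\ref{phi_alpha_eq}) shows that $\tilde{\phi}$ is a weak solution in $B(0,1) \subset \real^{3N}$ of an equation of the form (\ref{eq:pde}) with coefficients
\begin{align*}
\tilde{\bold{c}}(\bold{y}) &= -2\lambda (\nabla F)(\bold{x}+\lambda \bold{y}), \\
\tilde{d}(\bold{y}) &= \lambda^2\bigl(\Delta F_s - |\nabla F|^2 - E\bigr)(\bold{x}+\lambda \bold{y}),
\end{align*}
and right-hand side $\tilde{g}(\bold{y}) = \lambda^2 g_{\boldsymbol{\alpha}}(\bold{x}+\lambda \bold{y})$. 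The crucial observation, and the whole point of the Jastrow construction, is that $\nabla F$, $\Delta F_s$ and $|\nabla F|^2$ are globally bounded on $\real^{3N}$ by (\ref{f_linf}) and (\ref{f2_linf}); combined with $\lambda \le 1$ this yields a uniform bound $\norm{\tilde{\bold{c}}}_{L^\infty(B(0,1))} + \norm{\tilde{d}}_{L^\infty(B(0,1))} \le K$, where $K$ depends only on $N$, $Z$, $E$ and, importantly, is independent of $\bold{x}$.

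With this uniform bound in place, I would apply Theorem \ref{thm:c1} (with, say, $\theta = 0$) to $\tilde{\phi}$ on $B(0,1)$ with the pair of radii $r<R<1$, yielding
\begin{equation*}
\norm{\nabla_{\bold{y}}\tilde{\phi}}_{L^\infty(B(0,r))} \le C_1\bigl(\norm{\tilde{\phi}}_{L^2(B(0,R))} + \norm{\tilde{g}}_{L^\infty(B(0,R))}\bigr)
\end{equation*}
with $C_1 = C_1(E,Z,N,r,R)$. Undoing the change of variables by $\nabla_{\bold{y}}\tilde{\phi}(\bold{y}) = \lambda(\nabla\phi_{\boldsymbol{\alpha}})(\bold{x}+\lambda \bold{y})$ and $\tilde{g}(\bold{y}) = \lambda^2 g_{\boldsymbol{\alpha}}(\bold{x}+\lambda \bold{y})$, absorbing the factor $\lambda$ on the left, and bounding the $L^2$-norm of $\tilde{\phi}$ on $B(0,R)$ by a dimensional constant times its $L^\infty$-norm, rewrites the estimate in terms of the balls $B(\bold{x},r\lambda)$ and $B(\bold{x},R\lambda)$ and produces exactly (\ref{er_phi_eq}).

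There is no serious obstacle here: once the rescaling is set up the argument is mechanical. The only substantive input is the uniform boundedness of the rescaled coefficients, which is precisely why the splitting $F = F_c - F_s$ was engineered in (\ref{f_def})--(\ref{f2_def}) so that $\nabla F$ and the zeroth-order coefficient $\Delta F_s - |\nabla F|^2$ lie in $L^\infty(\real^{3N})$; without this the constant in the elliptic estimate would depend on $\bold{x}$ and the scaling would fail to produce the explicit $\lambda_{\boldsymbol{\alpha}}(\bold{x})$-dependence on the right-hand side of (\ref{er_phi_eq}).
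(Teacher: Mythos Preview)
Your proposal is correct and follows essentially the same rescaling-plus-elliptic-regularity argument as the paper: fix $\bold{x}$, rescale by $\lambda=\lambda_{\boldsymbol{\alpha}}(\bold{x})$, use the global $L^\infty$ bounds (\ref{f_linf}), (\ref{f2_linf}) together with $\lambda\le1$ to get a uniform coefficient bound $K$, apply Theorem~\ref{thm:c1} with $\theta=0$, and unwind the scaling. The only cosmetic difference is that the paper absorbs the $L^2\to L^\infty$ step directly into the stated inequality rather than mentioning it separately.
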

\begin{proof}
Fix any $\bold{x} \in \Sigma_{\boldsymbol{\alpha}}^c$ and denote $\lambda = \lambda_{\boldsymbol{\alpha}}(\bold{x})$. The proof proceeds via a rescaling. Define
\begin{equation*}
\begin{split}
w(\bold{y}) &= \phi_{\boldsymbol{\alpha}}(\bold{x} + \lambda \bold{y})\\
\bold{c}(\bold{y}) &= -2\nabla F(\bold{x} + \lambda \bold{y})\\
d(\bold{y}) &= \Delta F_s(\bold{x} + \lambda \bold{y}) - |\nabla F (\bold{x} + \lambda \bold{y})|^2 - E \\
f(\bold{y}) &= g_{\boldsymbol{\alpha}}(\bold{x} + \lambda \bold{y})
\end{split}
\end{equation*}
for all $\bold{y} \in B(0,1)$. Then by Lemma \ref{diff_pde_lem} we have that $w$ is a weak solution to
\begin{equation}
\label{rescaled_pde}
-\Delta w + \lambda \bold{c} \cdot \nabla w + \lambda^2 d w = \lambda^2 f
\end{equation}
in $B(0,1)$. By (\ref{f_linf}) and (\ref{f2_linf}), and that $\lambda \le 1$ by definition, we obtain
\begin{equation*}
\lambda \norm{\bold{c}}_{L^{\infty}(B(0,1))} + \lambda^2 \norm{d}_{L^{\infty}(B(0,1))} \le K
\end{equation*}
for some $K$, dependent only on $E, Z$ and $N$ and in particular is independent of our choice of $\bold{x}$. Therefore, by Theorem \ref{thm:c1}, with $\theta=0$, we get some $C$, dependent only on $N, K, r$ and $R$, such that
\begin{equation}
\label{w_c1_eq}
\norm{w}_{C^1(\overline{B(0,r)})} \le C(\norm{w}_{L^{\infty}(B(0,R))} + \lambda^2 \norm{f}_{L^{\infty}(B(0,R))}).
\end{equation}
Finally, we use $\nabla w(\bold{y}) = \lambda \nabla \phi_{\boldsymbol{\alpha}}(\bold{x} + \lambda \bold{y})$ by the chain rule, and rewrite (\ref{w_c1_eq}) to give (\ref{er_phi_eq}).

\end{proof}

The following proposition uses induction to write estimates for both $\nabla \phi_{\bal}$ and $\phi_{\bal}$ with a bound involving only the zeroth and first-order derivatives of $\phi$. Corollary \ref{cor:d2phi} of the appendix is used in the proof to improve the regularity of the second derivative of $\phi$, and this benefit is passed on through induction to all higher orders of derivative. The function $f_{\infty}(\,\cdot\,;R;\phi)$ was defined in (\ref{eq:f_inf_p}).

\begin{prop}
\label{dd_phi_prop}
For any $|\bal| \ge 1$, any $0<r<R<1$ and any $0 \le b < 1$ there exists $C$, dependent on $\bal, r, R$ and $b$, such that for $k=0,1$, with $k+|\bal| \ge 2$,
\begin{align*}
\norm{\nabla^k \phi_{\boldsymbol{\alpha}}}_{L^{\infty}(B(\bold{x}, r\lambda_{\boldsymbol{\alpha}}(\bold{x})))} &\le C \lambda_{\boldsymbol{\alpha}}(\bold{x})^{1-k} \lambda_{P_1}(\bold{x})^{-|\alpha_1|} \dots \lambda_{P_M}(\bold{x})^{-|\alpha_M|} \mu_{\bal}(\bold{x})^{b} f_{\infty}(\bold{x}; R; \phi )
\end{align*}
for all $\bold{x} \in \Sigma_{\boldsymbol{\alpha}}^c$. The inequality also holds for $|\bal|=1$ and $k=0$ if we take $b=0$.
\end{prop}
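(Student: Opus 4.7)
The plan is to prove the proposition by induction on $|\bal|$, handling $k=0$ and $k=1$ simultaneously at each stage.

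For the base case $|\bal|=1$, the $k=0, b=0$ assertion reduces to $|\phi_{\bal}(\bx)| \le C f_{\infty}(\bx; R; \phi)$, which is immediate since $\phi_{\bal}$ is a first-order cluster derivative and $\phi \in C^{1,\theta}(\real^{3N})$ by (\ref{eq:phi_er}); the required factor $\lambda_{\bal}(\bx)\lambda_{P_{j_0}}(\bx)^{-1}\le 1$ (where $j_0$ is the index with $\alpha_{j_0}\ne 0$) is harmless. The $k=1$ case is more delicate: I would invoke Lemma \ref{er_phi_lem} to write
\begin{equation*}
\|\nabla \phi_{\bal}\|_{L^{\infty}(B(\bx, r\lambda_{\bal}))} \le C\lambda_{\bal}^{-1}\big(\|\phi_{\bal}\|_{L^{\infty}(B(\bx, R\lambda_{\bal}))} + \lambda_{\bal}^2 \|g_{\bal}\|_{L^{\infty}(B(\bx, R\lambda_{\bal}))}\big),
\end{equation*}
bound $\phi_{\bal}$ via the $k=0$ case just proven, and bound the coefficients inside $g_{\bal}$ from (\ref{phi_alpha_eq}) using Lemma \ref{ab_lem} together with the bound $f_{\infty}(\bx; R; \phi)$ on $\phi$ and $\nabla \phi$. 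This naively gives the target bound with $b=0$, and to promote $b=0$ to arbitrary $b\in[0,1)$ one invokes the improved second-derivative regularity of $\phi$ furnished by Corollary \ref{cor:d2phi}, which upgrades the elliptic regularity used in the rescaling argument of Lemma \ref{er_phi_lem} to a $C^{1,\theta}$-type estimate and so yields the extra $\mu_{\bal}(\bx)^b$ factor.

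For the inductive step, assume the proposition holds for all $|\bal'|\le n-1$ and take $|\bal|=n\ge 2$. For $k=0$ at order $n$, pick any $\boldeta \in\naturals_0^{3M}$ with $|\boldeta|=1$ supported on a cluster $P_{j_0}$ with $\alpha_{j_0}\ne 0$, and write $\phi_{\bal}=D^{\boldeta}_{\bold{P}}\phi_{\bal-\boldeta}$. Since this is a linear combination of single partial derivatives of $\phi_{\bal-\boldeta}$, we get $|\phi_{\bal}(\by)|\le C|\nabla\phi_{\bal-\boldeta}(\by)|$, and the induction hypothesis with $k=1$ at order $n-1$ applies on the slightly enlarged ball $B(\bx, r\lambda_{\bal-\boldeta})$. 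The bound involves $\prod_j \lambda_{P_j}^{-|\alpha_j-\eta_j|}\mu_{\bal-\boldeta}^b$; noting that $|\alpha_{j_0}|=|(\bal-\boldeta)_{j_0}|+1$, that $\lambda_{\bal}\le\lambda_{P_{j_0}}$, and that $\mu_{\bal-\boldeta}\le\mu_{\bal}$, one converts this to the target form with the factor $\lambda_{\bal}(\bx)$ absorbed. For $k=1$ at order $n$, apply Lemma \ref{er_phi_lem} again; the $\phi_{\bal}$ piece is controlled by the $k=0$ case just established, and $g_{\bal}$ is a finite sum over $0\ne\bsig\le\bal$ of terms of the form $D_{\bold{P}}^{\bsig}\nabla F \cdot\nabla\phi_{\bal-\bsig}$ and $D_{\bold{P}}^{\bsig}(\Delta F_s - |\nabla F|^2)\,\phi_{\bal-\bsig}$. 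The coefficients are bounded by Lemma \ref{ab_lem}, and each $\phi_{\bal-\bsig}$, $\nabla\phi_{\bal-\bsig}$ is bounded by induction since $|\bal-\bsig|<n$. Combining the exponent arithmetic $|\alpha_j|=|\sigma_j|+|\alpha_j-\sigma_j|$ with (\ref{mu_1sig}) and (\ref{lam_prod}) of Lemma \ref{mu_lemma} yields the claimed product of $\lambda_{P_j}(\bx)^{-|\alpha_j|}$ factors and the $\mu_{\bal}(\bx)^b$ improvement factor.

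The main obstacle is ensuring that the $\mu_{\bal}(\bx)^b$ improvement survives through the induction: in the bound for $g_{\bal}$ coming from Lemma \ref{ab_lem} no such improvement is present on the coefficient side, so the improvement factor must be produced entirely by the $\nabla\phi_{\bal-\bsig}$ term via induction. This forces the argument to treat the splitting $\bsig$ versus $\bal-\bsig$ carefully and to use that $\mu_{\bal-\bsig}\le \mu_{\bal}$, together with a careful application of Lemma \ref{mu_lemma} to re-balance the $\lambda_{P_j}$ product. The base case $|\bal|=1$, $k=1$ is where the extra regularity from Corollary \ref{cor:d2phi} must enter; all later improvements are purely bookkeeping once this initial improvement is obtained.
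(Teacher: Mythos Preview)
Your overall architecture matches the paper's proof: induction on $|\bal|$, with the $k=0$ case at each order deduced from the $k=1$ case at the previous order, and the $k=1$ case obtained from Lemma~\ref{er_phi_lem} together with the $k=0$ case just proven and the induction hypothesis applied to the terms of $g_{\bal}$. You also correctly locate Corollary~\ref{cor:d2phi} as the source of the $\mu_{\bal}^b$ gain in the base case $|\bal|=1$, $k=1$.

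There is, however, a concrete gap in your $k=0$ inductive step. You write ``pick any $\boldeta$ \ldots\ supported on a cluster $P_{j_0}$ with $\alpha_{j_0}\neq 0$''. With that choice the induction hypothesis gives
\[
\|\phi_{\bal}\|_{L^\infty(B(\bx,r\lambda_{\bal}(\bx)))}\le C\,\lambda_{P_{j_0}}(\bx)\prod_{j}\lambda_{P_j}(\bx)^{-|\alpha_j|}\,\mu_{\bal-\boldeta}(\bx)^b\,f_\infty(\bx;R;\phi),
\]
and the target has $\lambda_{\bal}(\bx)$ in place of $\lambda_{P_{j_0}}(\bx)$. You invoke $\lambda_{\bal}\le \lambda_{P_{j_0}}$, but that inequality goes the wrong way here. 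What the paper does---and what you must do---is choose $j_0$ \emph{depending on $\bx$} so that $\lambda_{P_{j_0}}(\bx)=\lambda_{\bal}(\bx)$ (such an index exists with $\alpha_{j_0}\neq 0$ by the definition of $\lambda_{\bal}$). Only then does the factor $\lambda_{P_{j_0}}$ become exactly $\lambda_{\bal}$.

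A second point: your claimed mechanism for the $\mu_{\bal}^b$ gain in the $k=1$ step (``produced entirely by the $\nabla\phi_{\bal-\bsig}$ term via induction'') cannot cover the summand $\bsig=\bal$ in $g_{\bal}$, where $\nabla\phi_{\bal-\bsig}=\nabla\phi$ carries no such factor. The paper instead bounds $\|g_{\bal}\|$ with $b=0$ (this is (\ref{eq:ga_linf})) and obtains the improvement from the prefactor $\lambda_{\bal}(\bx)$ supplied by Lemma~\ref{er_phi_lem}, via $\lambda_{\bal}\le \mu_{\bal}\le \mu_{\bal}^b$. This is simpler and avoids the case distinction you anticipate.
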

\begin{proof}
Suppose $|\bal| = 1$ and $k=1$. Let $l$ be the index $1 \le l \le M$ such that $\alpha_l \ne 0$. Then $\mu_{\bal} = \lambda_{P_l}$ and $\Sigma_{\bal} = \Sigma_{P_l}$. 
The bound then follows from Corollary \ref{cor:d2phi} with $P=P_l$. Indeed, for each $s>0$ we can find some constant $C_s$ such that $\log(t) \le C_st^s$ for all $t >0$. We use this fact along with the simple inequality $\lambda_{P_l} \le \nu_{P_l}$.

For $|\bal| \ge 2$ we prove by induction. 
Assume the hypothesis holds for all multiindices $\boldsymbol{\alpha}$ with $1 \le |\boldsymbol{\alpha}| \le m-1$ for some $m \ge 2$. Take any $\boldsymbol{\alpha}$ with $|\boldsymbol{\alpha}| = m$. We prove the $k=0$ and $k=1$ cases in turn. It is useful, here, to state the following fact: for all $\boldsymbol{\sigma} \le \boldsymbol{\alpha}$ we have $\Sigma_{\boldsymbol{\sigma}} \subset \Sigma_{\boldsymbol{\alpha}}$ and hence $\lambda_{\boldsymbol{\alpha}}(\bold{x}) \le \lambda_{\boldsymbol{\sigma}}(\bold{x})$.

The $k=0$ case is a straightforward application of the induction hypothesis and is described as follows. 
Firstly, for each $\bx \in \Sigma_{\bal}^c$ it is clear that there exists some $1 \le l \le M$ such that $\lambda_{\bal}(\bold{x}) = \lambda_{P_l}(\bold{x})$ and where $\alpha_l \ne 0$. Therefore we can find some $\boldsymbol{\eta} \le \bal$ with $|\boldsymbol{\eta}| = 1$ and $\eta_j = 0$ for each $j \ne l$ where $1 \le j \le M$. Now, from the definition of cluster derivatives (\ref{eq:cd2}),
\begin{align}
\label{ua_bound}
\norm{\phi_{\bal}}_{L^{\infty}(B(\bold{x}, r\lambda_{\bal}(\bold{x})))} &\le N\norm{\nabla \phi_{\bal-\boldsymbol{\eta}}}_{L^{\infty}(B(\bold{x}, r\lambda_{\boldsymbol{\alpha}}(\bold{x})))}.
\end{align}
We can then use the induction assumption to show existence of $C$ such that
\begin{multline}
\label{du_ae_bound}
\norm{\nabla \phi_{\bal-\boldsymbol{\eta}}}_{L^{\infty}(B(\bold{x}, r\lambda_{\boldsymbol{\alpha}}(\bold{x})))} \\\le C \lambda_{P_l}(\bold{x}) \lambda_{P_1}(\bold{x})^{-|\alpha_1|} \dots \lambda_{P_M}(\bold{x})^{-|\alpha_M|}\mu_{\bal-\boldsymbol{\eta}}(\bold{x})^{b} f_{\infty}(\bold{x}; R; \phi),
\end{multline}
where we can then use $\lambda_{P_l}(\bold{x}) = \lambda_{\boldsymbol{\alpha}}(\bold{x}) $ and  $\mu_{\bal-\boldsymbol{\eta}}(\bold{x}) \le \mu_{\bal}(\bold{x})$ to complete the required bound for our choice of $\bal$ in the case of $k=0$. 

The $k=1$ case follows from the induction hypothesis and Lemma \ref{er_phi_lem}. Let $r' = (r+R)/2$.
Firstly, by the definition of $g_{\bal}$, (\ref{phi_alpha_eq}), along with (\ref{f2_linf}) and Lemma \ref{ab_lem} we can find $C$, depending on $\bal$ and $r'$, such that
\begin{multline*}
\norm{g_{\bal}}_{L^{\infty}(B(\bold{x}, r'\lambda_{\boldsymbol{\alpha}}(\bold{x})))}
\le C \sum_{\substack{\boldsymbol{\sigma} \le \boldsymbol{\alpha} \\ \boldsymbol{\sigma} \ne 0}} \lambda_{P_1}(\bold{x})^{-|\sigma_1|} \dots \lambda_{P_M}(\bold{x})^{-|\sigma_M|}\big(\norm{\phi_{\boldsymbol{\alpha}-\boldsymbol{\sigma}}}_{L^{\infty}(B(\bold{x}, r'\lambda_{\boldsymbol{\alpha}}(\bold{x})))}\\+ \norm{\nabla \phi_{\boldsymbol{\alpha}-\boldsymbol{\sigma}}}_{L^{\infty}(B(\bold{x}, r'\lambda_{\boldsymbol{\alpha}}(\bold{x})))}\big).
\end{multline*}
It follows by the induction hypothesis, using $b=0$, along with the definition of $f_{\infty}$, (\ref{eq:f_inf}), that we can find some $C$, depending on $\bal, r'$ and $R$, such that
\begin{equation}
\label{eq:ga_linf}
\norm{g_{\boldsymbol{\alpha}}}_{L^{\infty}(B(\bold{x}, r'\lambda_{\boldsymbol{\alpha}}(\bold{x})))}
\le C \lambda_{P_1}(\bold{x})^{-|\alpha_1|} \dots \lambda_{P_M}(\bold{x})^{-|\alpha_M|} f_{\infty}(\bold{x}; R; \phi)
\end{equation}
for all $\bold{x} \in \Sigma_{\bal}^c$. Next, we apply Lemma \ref{er_phi_lem} to obtain some constant $C$, depending on $r$ and $r'$, such that
\begin{align*}
\norm{\nabla \phi_{\boldsymbol{\alpha}}}_{L^{\infty}(B(\bold{x}, r\lambda_{\boldsymbol{\alpha}}(\bold{x})))} &\le C\big( \lambda_{\bal}(\bold{x})^{-1}\norm{\phi_{\bal}}_{L^{\infty}(B(\bold{x}, r'\lambda_{\bal}(\bold{x})))} + \lambda_{\boldsymbol{\alpha}}(\bold{x}) \norm{g_{\boldsymbol{\alpha}}}_{L^{\infty}(B(\bold{x}, r'\lambda_{\boldsymbol{\alpha}}(\bold{x})))} \big).
\end{align*}
We use the $k=0$ case, proven above, to bound the $L^{\infty}$-norm of $\phi_{\bal}$ in the first term on the right-hand side of the above inequality. To the second term, we apply (\ref{eq:ga_linf}) and the simple inequality $\lambda_{\bal}(\bold{x}) \le \mu_{\bal}(\bold{x})$. Together, these give the required bound for $k=1$ and our choice of $\bal$. This completes the induction.

\end{proof}

We now obtain bounds to the cluster derivatives of the eigenfunction $\psi$ using those for $\phi$ in the above proposition.
\begin{proof}[Proof of Theorem \ref{thm:ddpsi}]
It is clear that (\ref{eq:ddpsi}) holds when $\bal = 0$. Therefore, consider $|\bal| \ge 1$. We first prove (\ref{eq:ddpsi}) for $k=0$. Take $\bold{x} \in \Sigma_{\bal}^c$, then by the Leibniz rule for cluster derivatives we have
\begin{align}
\label{eq:psi_leibniz}
D_{\bold{P}}^{\boldsymbol{\alpha}}\psi &= \sum_{\boldsymbol{\beta} \le \boldsymbol{\alpha}} {\boldsymbol{\alpha} \choose \boldsymbol{\beta} } D_{\bold{P}}^{\boldsymbol{\beta}}\big(e^F \big) \phi_{\boldsymbol{\alpha}-\boldsymbol{\beta}}
\end{align}
in $B(\bold{x}, r\lambda_{\bal}(\bold{x}))$. Now, for each $\boldsymbol{\beta} \le \boldsymbol{\alpha}$ there exists some $C$, independent of $\bx$, such that
\begin{equation*}
\norm{D_{\bold{P}}^{\boldsymbol{\beta}}\big(e^F \big) \phi_{\boldsymbol{\alpha}-\boldsymbol{\beta}}}_{L^{\infty}(B(\bold{x}, r\lambda_{\boldsymbol{\alpha}}(\bold{x})))} \le C\lambda_{\bal}(\bold{x}) \lambda_{P_1}(\bold{x})^{-|\alpha_1|} \dots \lambda_{P_M}(\bold{x})^{-|\alpha_M|} f_{\infty}(\bold{x}; R; \phi).
\end{equation*}
To prove this, we use Lemma \ref{ab_lem} and Proposition \ref{dd_phi_prop} with $b=0$. In addition, if $\bbeta = 0$ we use (\ref{f_linf}) and if $\bbeta = \bal$ we use the definition (\ref{eq:f_inf}). If $\bbeta \notin \{0, \bal \}$ we use the inequality $\lambda_{\boldsymbol{\beta}}(\bold{x}) \lambda_{\boldsymbol{\alpha}-\boldsymbol{\beta}}(\bold{x}) \le \lambda_{\boldsymbol{\alpha}}(\bold{x})$, which is obtained from Lemma \ref{mu_lemma}. The required bound (\ref{eq:ddpsi}) for $k=0$ then follows from (\ref{eq:psi_leibniz}) and (\ref{eq:phi_er})-(\ref{eq:phi_psi}).

Take $\bx \in \Sigma^c_{\bal}$. By the definition $F= F_c - F_s$, the equality $\nabla \psi = \psi \nabla F + e^F \nabla \phi$ and the Leibniz rule, we have (\ref{eq:gp_def}) with
\begin{align}
\label{eq:gp_def_full}
G^{\bal}_{\bold{P}} &= \sum_{\substack{\bbeta \le \bal \\ |\bbeta| \ge 1}} {\boldsymbol{\alpha} \choose \boldsymbol{\beta} } D_{\bold{P}}^{\bbeta}\psi\, D_{\bold{P}}^{\bal - \bbeta}\nabla F +
\sum_{\bbeta \le \bal} {\boldsymbol{\alpha} \choose \boldsymbol{\beta}} \nabla \phi_{\bbeta}\, D_{\bold{P}}^{\bal - \bbeta} \big(e^F \big) - \psi\, D_{\bold{P}}^{\bal}\nabla F_s.
\end{align}
We bound each term in $G^{\bal}_{\bold{P}}$ as in (\ref{eq:gp_bnd}).
Take any $\bbeta \le \bal$ with $|\bbeta| \ge 1$. Notice that $\Sigma_{\bbeta} \subset \Sigma_{\bal}$ and hence $\lambda_{\bal}(\bx) \le \lambda_{\bbeta}(\bx)$. Now, there exists $C$, independent of $\bx$, such that
\begin{align}
\label{eq:gp_term1}
\norm{D_{\bold{P}}^{\bbeta}\psi\, D_{\bold{P}}^{\bal -\bbeta} \nabla F }_{L^{\infty}(B(\bold{x}, r\lambda_{\boldsymbol{\alpha}}(\bold{x})))} &\le C\lambda_{\bbeta}(\bold{x}) \lambda_{P_1}(\bold{x})^{-|\alpha_1|} \dots \lambda_{P_M}(\bold{x})^{-|\alpha_M|} f_{\infty}(\bold{x}; R).
\end{align}
To prove this inequality, we bound derivatives of $F$ using Lemma \ref{ab_lem} if $\bbeta \ne \bal$ and (\ref{f_linf}) if $\bbeta = \bal$. The derivatives $D_{\bold{P}}^{\bbeta}\psi$ are then bounded using (\ref{eq:ddpsi}) with $k=0$, which was proven above. The right-hand side of (\ref{eq:gp_term1}) can be bounded as in (\ref{eq:gp_bnd}) by using the simple bound $\lambda_{\bbeta} \le \mu_{\bal}$.  Next, still considering $|\bbeta| \ge 1$ we can find $C$, independent of $\bx$, such that
\begin{align*}
\norm{ \nabla \phi_{\bbeta}\, D_{\bold{P}}^{\bal - \bbeta} \big(e^F \big) }_{L^{\infty}(B(\bold{x}, r\lambda_{\boldsymbol{\alpha}}(\bold{x})))} &\le C\mu_{\bbeta}(\bold{x})^b \lambda_{\bal - \bbeta}(\bold{x}) \lambda_{P_1}(\bold{x})^{-|\alpha_1|} \dots \lambda_{P_M}(\bold{x})^{-|\alpha_M|} f_{\infty}(\bold{x}; R; \phi).
\end{align*}
This is proven using Proposition \ref{dd_phi_prop} to bound the derivatives $\nabla \phi_{\bbeta}$, and Lemma \ref{ab_lem} and (\ref{f_linf}) to bound derivatives of $e^F$. The right-hand side of the above inequality can be bounded as in (\ref{eq:gp_bnd}) after use of (\ref{eq:phi_er})-(\ref{eq:phi_psi}) and the inequalities $\mu_{\bbeta} \le \mu_{\bal}$ and $\lambda_{\bal - \bbeta} \le 1$. 
In addition, by Lemma \ref{ab_lem} there exists $C$ such that
\begin{align*}
\norm{ \nabla \phi\, D_{\bold{P}}^{\bal} \big(e^F \big) }_{L^{\infty}(B(\bold{x}, r\lambda_{\boldsymbol{\alpha}}(\bold{x})))} &\le C\lambda_{\bal}(\bold{x}) \lambda_{P_1}(\bold{x})^{-|\alpha_1|} \dots \lambda_{P_M}(\bold{x})^{-|\alpha_M|} f_{\infty}(\bold{x}; R; \phi).
\end{align*}
Use of (\ref{eq:phi_er})-(\ref{eq:phi_psi}), as before, and the inequality $\lambda_{\bal} \le \mu_{\bal}$ give the correct bound. Finally, the last term in (\ref{eq:gp_def_full}) is readily bounded appropriately using (\ref{f2_linf}). In view of (\ref{eq:gp_def_full}), this completes the proof of (\ref{eq:gp_bnd}).

To prove (\ref{eq:ddpsi}) for $k=1$ we use (\ref{eq:gp_def}), (\ref{eq:gp_bnd}) and the following inequality,
\begin{equation*}
\norm{\psi D_{\bold{P}}^{\bal}\nabla F_c}_{L^{\infty}(B(\bold{x}, r\lambda_{\boldsymbol{\alpha}}(\bold{x})))} \le C\lambda_{P_1}(\bold{x})^{-|\alpha_1|} \dots \lambda_{P_M}(\bold{x})^{-|\alpha_M|} f_{\infty}(\bold{x}; R)
\end{equation*}
which holds for some $C$ as a direct consequence of Lemma \ref{ab_lem}.

\end{proof}

\section{Cutoffs}
\label{chpt:3}
In the proof of Theorem \ref{thm:5}, partial derivatives of the density matrix will become cluster derivatives of $\psi$ under the integral. In this section we introduce cutoff functions which will included in the density matrix. These can be made to form a partition of unity, and are present to ensure that the cluster derivatives of $\psi$ are supported only away from their singularities.

The cutoffs introduced in this section are an extension of similar cutoffs used in \cite{density_analytic}, \cite{hearn_sob} and \cite{hearn_sob2}. They involve the variables (i.e. ``particles'') $x,y$ and $x_j$, $2 \le j \le N$. For a given cutoff, we will define corresponding clusters $P$ and $S$. Particles in $P$ (or $S$) will be close to $x$ (or $y$, respectively) on the support of the cutoff. The cluster $P$ (or $S$) will eventually faciliate taking $x$- (or $y$-)derivatives of the density matrix.

We will also need to take derivatives of the density matrix in the variable $x+y$. For this, it will be natural to consider a larger cluster $Q$ which contains both $P$ and $S$, but can include other particles. The particles in $Q$ are held close to both $x$ and $y$, but potentially looser than how particles in $P$ and $S$ are held to $x$ and $y$ respectively. 



\subsection{Definition of $\Phi$}

To begin, take some $\xi \in C_c^{\infty}(\real)$, $0 \le \xi(s) \le 1$, with
\begin{align}
\label{eq:chi_def}
\xi(s) &=
\begin{cases}
1 &\text{ if } |s| \le 1\\
0 &\text{ if } |s| \ge 2.
\end{cases}
\end{align}
for $s \in \real$. For each $t>0$ we can define the following two \textit{cutoff factors} $\zeta_{t} = C_c^{\infty}(\real^3)$ and $\theta_{t} \in C^{\infty}(\real^3)$ by
\begin{equation}
\label{cutoff_factors}
\zeta_t(z) = \xi\Big(\frac{4N|z|}{t}\Big), \qquad \theta_t(z) = 1-\zeta_{t}(z), \qquad z \in \real^3.
\end{equation}
We have the following \textit{support criteria} for cutoff factors. For any $z \in \real^3$ and $t>0$,
\begin{itemize}
\item if $\zeta_t(z) \ne 0$ then $|z| < (2N)^{-1}t$,
\item if $\theta_t(z) \ne 0$ then $|z| > (4N)^{-1}t$.
\end{itemize}

Take any $\delta,\epsilon$ with $0<2\delta \le \epsilon$. We define our cutoff, which depends on $\delta$ and $\epsilon$ as parameters, as a function $\Phi = \Phi_{\delta,\epsilon}(x, y,\bold{\hat{x}})$ given by
\begin{equation}
\label{eq:pphi_def}
\Phi_{\delta,\epsilon}(x, y,\bold{\hat{x}}) = \prod_{2 \le j \le N} g^{(1)}_{j}(x - x_j) \prod_{2 \le j \le N}  g^{(2)}_{j}(y -x_j) \prod_{2 \le k<l \le N} f_{k,l}(x_k - x_l)
\end{equation}
for $x,y \in \real^3$ and $\bold{\hat{x}} \in \real^{3N-3}$, and where $g^{(1)}_j, g^{(2)}_j, f_{k,l} \in \{\zeta_{\delta}, \theta_{\delta}\zeta_{\epsilon}, \theta_{\epsilon} \}$ for $2 \le j \le N$ and $2 \le k<l \le N$. In addition, for $2\le k<l \le N$ we define $f_{l,k} = f_{k,l}$.

The idea is that on the support of $\Phi$ we have upper and/or lower bounds to the distances between various pairs of particles. If we take the particles $x_j$ and $x_k$, for example, we have that $\zeta_{\delta}(x_j - x_k)$ holds the two particles close relative to a distance $\delta$,
$\theta_{\epsilon}(x_j - x_k)$ holds them apart relative to a distance $\epsilon$, and $(\theta_{\delta}\zeta_{\epsilon})(x_j - x_k)$ keeps them within an intermediate distance apart.

The cutoffs, $\Phi$, defined in (\ref{eq:pphi_def}) form a partition of unity as shown in the following lemma.
\begin{lem}
\label{pou}
There exists a collection $\{\Phi^{(j)} \}_{j=1}^{J}$ of cutoffs (\ref{eq:pphi_def}), with integer $J$ depending only on $N$, such that whenever $0< 2\delta \le \epsilon$ we have
\begin{equation*}
\sum_{j=1}^{J} \Phi^{(j)}(x,y,\bold{\hat{x}}) = 1
\end{equation*}
for all $x,y \in \real^3$, $\bold{\hat{x}} \in \real^{3N-3}$.
\end{lem}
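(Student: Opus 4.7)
The plan is to show that for each of the scalar factors appearing in $\Phi$, the three admissible choices $\{\zeta_\delta, \theta_\delta\zeta_\epsilon, \theta_\epsilon\}$ themselves form a partition of unity, and then to obtain the global partition of unity by distributing the product
\[
1 = \prod_{2 \le j \le N}\bigl(\zeta_\delta + \theta_\delta\zeta_\epsilon + \theta_\epsilon\bigr)(x - x_j)\prod_{2 \le j \le N}\bigl(\zeta_\delta + \theta_\delta\zeta_\epsilon + \theta_\epsilon\bigr)(y - x_j)\prod_{2 \le k<l \le N}\bigl(\zeta_\delta + \theta_\delta\zeta_\epsilon + \theta_\epsilon\bigr)(x_k - x_l)
\]
across its factors. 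Each term in the resulting expansion is exactly a cutoff of the form (\ref{eq:pphi_def}), and the number of terms is $3^{2(N-1) + \binom{N-1}{2}}$, which depends only on $N$.

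The only nontrivial step is the pointwise identity
\begin{equation*}
\zeta_\delta(z) + \theta_\delta(z)\zeta_\epsilon(z) + \theta_\epsilon(z) = 1 \qquad\text{for all } z \in \real^3,
\end{equation*}
valid whenever $0 < 2\delta \le \epsilon$. Using $\theta_t = 1-\zeta_t$, the left-hand side simplifies algebraically to $1 + \zeta_\delta(z) - \zeta_\delta(z)\zeta_\epsilon(z)$, so the identity reduces to the claim $\zeta_\delta\zeta_\epsilon = \zeta_\delta$, i.e.\ that $\zeta_\epsilon(z) = 1$ whenever $\zeta_\delta(z)\ne 0$. This is immediate from the support criteria for $\zeta_t$ recorded just after (\ref{cutoff_factors}): if $\zeta_\delta(z) \ne 0$ then $|z| < \delta/(2N)$, and the assumption $2\delta \le \epsilon$ then gives $|z| < \epsilon/(4N)$, at which point $\zeta_\epsilon(z) = \xi(4N|z|/\epsilon) = 1$ by the definition (\ref{eq:chi_def}) of $\xi$.

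I don't anticipate any real obstacle here; the only subtlety is making sure the constants built into the definition (\ref{cutoff_factors}) (namely the factor $4N$ and the two regions $|s|\le 1$ and $|s|\ge 2$ in the cutoff $\xi$) interact correctly with the hypothesis $2\delta \le \epsilon$ so that $\zeta_\delta$ is supported in the set where $\zeta_\epsilon \equiv 1$. Once that single inclusion is verified, the rest is just expanding the product and labelling the resulting $3^{2(N-1)+\binom{N-1}{2}}$ cutoffs as $\Phi^{(1)},\dots,\Phi^{(J)}$.
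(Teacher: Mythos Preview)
Your proof is correct and follows essentially the same approach as the paper: both establish the scalar identity $\zeta_\delta + \theta_\delta\zeta_\epsilon + \theta_\epsilon = 1$ from the support inclusion $\operatorname{supp}\zeta_\delta \subset \{\zeta_\epsilon = 1\}$ (the paper derives it by expanding $(\zeta_\delta+\theta_\delta)(\zeta_\epsilon+\theta_\epsilon)$, you by direct algebra), and then expand the product over all factor slots to obtain the partition of unity.
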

\begin{proof}
From the definitions (\ref{cutoff_factors}) and support criteria it is straightforward to verify that
\begin{align}
\theta_{\delta}(z)\theta_{\epsilon}(z) = \theta_{\epsilon}(z), \qquad
\zeta_{\delta}(z)\theta_{\epsilon}(z) = 0, \qquad
\zeta_{\delta}(z)\zeta_{\epsilon}(z) = \zeta_{\delta}(z)
\end{align}
for all $z \in \real^3$. It follows that on $\real^3$,
\begin{equation}
1 \equiv \big(\zeta_{\delta} + \theta_{\delta}\big) \big(\zeta_{\epsilon} + \theta_{\epsilon}\big) = \zeta_{\delta} + \theta_{\delta}\zeta_{\epsilon} + \theta_{\epsilon}.
\end{equation}

\paragraph{}
For any $A,B \subset \{2, \dots, N \}$ with $A \cap B = \emptyset$ we define
\begin{equation*}
\tau_{A,B}(\bold{x}) = \prod_{j \in A} \zeta_{\delta}(x_1-x_j) \prod_{j \in B}(\theta_{\delta}\zeta_{\epsilon})(x_1-x_j) \prod_{j\in \{2,\dots,N\}\backslash(A\cup B)} \theta_{\epsilon}(x_1-x_j)
\end{equation*}
for $\bold{x} = (x_1, \dots, x_N) \in \real^{3N}$. Therefore,
\begin{equation*}
\sum_{\substack{A \subset \{2,\dots,N\} \\ B \subset \{2,\dots,N \}\backslash A}} \tau_{A,B}(\bold{x}) = \prod_{2 \le j \le N} \big\{\zeta_{\delta}(x_1-x_j) + (\theta_{\delta}\zeta_{\epsilon})(x_1-x_j) + \theta_{\epsilon}(x_1-x_j) \big\} = 1
\end{equation*}
for all $\bold{x} \in \real^{3N}$.
Let $\Xi = \{(j,k) : 2 \le j <k \le N \}$. For each $Y, Z \subset \Xi$ with $Y \cap Z = \emptyset$ we define
\begin{equation*}
T_{Y,Z}(\bold{\hat{x}}) =  \prod_{(j,k) \in Y} \zeta_{\delta}(x_j-x_k) \prod_{(j,k) \in Z}(\theta_{\delta}\zeta_{\epsilon})(x_j-x_k) \prod_{(j,k) \,\in\, \Xi\backslash(Y\cup Z)} \theta_{\epsilon}(x_j-x_k)
\end{equation*}
for $\bold{\hat{x}} = (x_2, \dots, x_N) \in \real^{3N-3}$. Therefore,
\begin{equation*}
\sum_{\substack{Y \,\subset\, \Xi \\ Z \,\subset\, \Xi\backslash Y}} T_{Y,Z}(\bold{\hat{x}}) = \prod_{2\le j<k \le N} \big(\zeta_{\delta}(x_j-x_k) + (\theta_{\delta}\zeta_{\epsilon})(x_j-x_k) + \theta_{\epsilon}(x_j-x_k) \big) = 1
\end{equation*}
for all $\bold{\hat{x}} \in \real^{3N-3}$. Overall, for all $x,y \in \real^3$, $\bold{\hat{x}} \in \real^{3N-3}$,
\begin{equation*}
\sum_{\substack{A \subset \{2,\dots,N\} \\ B \subset \{2,\dots,N\}\backslash A}} \quad \sum_{\substack{C \subset \{2,\dots,N\} \\ D \subset \{2,\dots,N\}\backslash C}} \quad \sum_{\substack{Y \,\subset\, \Xi \\ Z \,\subset\, \Xi\backslash Y}}   \tau_{A,B}(x, \bold{\hat{x}})  \tau_{C,D}(y, \bold{\hat{x}})  T_{Y,Z}(\bold{\hat{x}}) = 1,
\end{equation*}
which is a sum of cutoffs of the form (\ref{eq:pphi_def}).

\end{proof}

\subsection{Clusters corresponding to $\Phi$}
\label{phi_clusters}
To each $\Phi$ we now introduce three corresponding clusters. To define these, we first introduce index sets based solely on the choice of cutoff factors $g^{(1)}_j, g^{(2)}_j$ and $f_{k,l}$ in the definition (\ref{eq:pphi_def}) of $\Phi$. The index sets and clusters are therefore not dependent on $x,y,\bold{\hat{x}}$ nor on the scaling parameters $\delta$ and $\epsilon$.

We define the index set $L \subset \{(j,k) \in \{1,\dots,N\}^2 : j\ne k \}$ as follows.
\begin{itemize}
\item We have $(j,k) \in L$ if $f_{j,k} \ne \theta_{\epsilon}$ for $j,k =2,\dots,N$. Also $(1,j), (j,1) \in L$ if $(g^{(1)}_j, g^{(2)}_j) \ne (\theta_{\epsilon},\theta_{\epsilon})$ for $j=2,\dots,N$.
\end{itemize}
Furthermore, we define two more index sets $J,K \subset \{(j,k) \in \{1,\dots,N\}^2 : j\ne k \}$ as follows.
\begin{itemize}
\item We have $(j,k) \in J$ if $f_{j,k} = \zeta_{\delta}$ for $j,k =2,\dots,N$. Also $(1,j), (j,1) \in J$ if $g^{(1)}_j = \zeta_{\delta}$ for $j=2,\dots,N$.
\item We have $(j,k) \in K$ if $f_{j,k} = \zeta_{\delta}$ for $j,k =2,\dots,N$. Also $(1,j), (j,1) \in K$ if $g^{(2)}_j = \zeta_{\delta}$ for $j=2,\dots,N$.
\end{itemize}
These index sets obey $J,K \subset L$.
For an arbitrary index set $I \subset \{(j,k) \in \{1,\dots,N\}^2 : j\ne k \}$ we say that two indices $j,k \in \{1,\dots,N\}$ are $I$-\textit{linked} if either $j=k$, or $(j,k) \in I$, or there exist pairwise distinct indices $j_1, \dots, j_s$ for $1 \le s \le N-2$, all distinct from $j$ and $k$ such that $(j,j_1), (j_1, j_2),\dots, (j_s, k) \in I$.

\paragraph{}
The cluster $Q=Q(\Phi)$ is defined as the set of all indices $L$-linked to $1$. The cluster $P = P(\Phi)$ is defined as the set of all indices $J$-linked to $1$. The cluster $S=S(\Phi)$ is defined as the set of all indices $K$-linked to $1$. Since $J,K \subset L$ we see that $P, S \subset Q$.

\subsection{Support of $\Phi$}
The following lemma shows that on the support of $\Phi$, the cluster $P^* := P\backslash\{1\}$ represents a set of particles whose positions $x_j$, $j \in P^*$, are close to $x$, and the cluster $S^* := S\backslash\{1\}$ represents a set of particles whose positions $x_j$, $j \in S^*$, are close to $y$. In both cases this closeness is with respect to the parameter $\delta$. On the support of $\Phi$, the cluster $Q^* := Q\backslash\{1\}$ (recall $P,S \subset Q$) represents a set of particles whose positions $x_j$, $j \in Q^*$, are close to both $x$ and $y$, albeit held potentially looser since this closeness is with respect to the larger parameter $\epsilon$.

\begin{lem}
\label{lem:support}
We have $\Phi(x,y,\bold{\hat{x}}) = 0$ unless
\begin{align}
\label{eq:lam_b1}
|x - x_k|,\,|y-x_k|,\, |x_j - x_k| &> (4N)^{-1}\epsilon \qquad &&j \in Q^*, k\in Q^c,\\
\label{eq:lam_b3}
|x -x_j|, |y -x_k| &> (4N)^{-1}\delta \quad &&j \in P^{c},\, k \in S^c,\\
\label{eq:lam_b4}
|x_r -x_s| &> (4N)^{-1}\delta \quad && r \in P^{*},\, s \in P^{c} \text{  or  } r \in S^{*},\, s \in S^{c}.
\end{align}
\end{lem}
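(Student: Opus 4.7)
The plan is to read the lemma off directly from the pointwise support criteria for $\zeta_t$ and $\theta_t$ listed after (\ref{cutoff_factors}), combined with the definitions of the clusters $Q$, $P$, $S$ via $L$-, $J$-, $K$-linkage. The key observation is this: if a factor in the product (\ref{eq:pphi_def}) equals $\zeta_\delta$, the corresponding pair of particles is separated by less than $(2N)^{-1}\delta$ on the support; whereas if it equals $\theta_\delta\zeta_\epsilon$ or $\theta_\epsilon$, the pair is separated by more than $(4N)^{-1}\delta$ (using $\epsilon\ge 2\delta$ to handle $\theta_\epsilon$), and in the $\theta_\epsilon$ case by more than $(4N)^{-1}\epsilon$.

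First I would establish (\ref{eq:lam_b1}). Suppose $\Phi(x,y,\bold{\hat{x}})\ne 0$ and fix $k\in Q^c$. Then $k$ is not $L$-linked to $1$, so in particular $(1,k)\notin L$, which by the definition of $L$ forces both $g^{(1)}_k=\theta_\epsilon$ and $g^{(2)}_k=\theta_\epsilon$. Since both factors appear in $\Phi$ and must be non-zero on the support, this yields $|x-x_k|,|y-x_k|>(4N)^{-1}\epsilon$. Next, for any $j\in Q^*$ we must also have $(j,k)\notin L$: otherwise a chain of $L$-linked pairs from $1$ to $j$ (which exists since $j\in Q$), extended by $(j,k)$, would place $k$ in $Q$, contrary to $k\in Q^c$. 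By the definition of $L$ this forces $f_{j,k}=\theta_\epsilon$, hence $|x_j-x_k|>(4N)^{-1}\epsilon$ on the support.

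The remaining bounds follow by the same scheme with $J$ and $K$ in place of $L$. For (\ref{eq:lam_b3}), if $j\in P^c$ then $(1,j)\notin J$, i.e.\ $g^{(1)}_j\ne\zeta_\delta$, so $g^{(1)}_j\in\{\theta_\delta\zeta_\epsilon,\theta_\epsilon\}$; either alternative forces $|x-x_j|>(4N)^{-1}\delta$ on the support, using $\epsilon\ge 2\delta$ to dominate in the $\theta_\epsilon$ case. The bound for $|y-x_k|$ when $k\in S^c$ is identical after replacing $J$ and $g^{(1)}$ by $K$ and $g^{(2)}$. For (\ref{eq:lam_b4}), take $r\in P^*$ and $s\in P^c$; since $1\in P$ (indices are $J$-linked to themselves) we have $r,s\in\{2,\dots,N\}$, so $f_{r,s}$ is defined. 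If $(r,s)\in J$ then, extending a $J$-chain from $1$ to $r$ by $(r,s)$, we would obtain $s\in P$, a contradiction. Hence $(r,s)\notin J$ and therefore $f_{r,s}\in\{\theta_\delta\zeta_\epsilon,\theta_\epsilon\}$, yielding $|x_r-x_s|>(4N)^{-1}\delta$. The analogous case $r\in S^*$, $s\in S^c$ is identical with $K$ in place of $J$.

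This lemma is essentially a bookkeeping exercise translating the graph-theoretic definitions of the clusters back into the pointwise support properties of the cutoff factors, so I do not anticipate any genuine obstacle. The only point requiring care is the \emph{double} condition at $(1,k)\notin L$, which simultaneously forces both $g^{(1)}_k$ and $g^{(2)}_k$ to be $\theta_\epsilon$ and is precisely what delivers both the $|x-x_k|$ and the $|y-x_k|$ bounds in (\ref{eq:lam_b1}) from a single linkage statement.
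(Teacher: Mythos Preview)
Your proof is correct and follows essentially the same approach as the paper's own proof, which is a terse one-paragraph argument that simply invokes the definition of the clusters and the support criteria for the cutoff factors. You have spelled out the linkage-chain reasoning (e.g.\ why $j\in Q^*$, $k\in Q^c$ forces $(j,k)\notin L$) in more detail than the paper does, but the underlying argument is identical.
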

\begin{proof}
Take $j \in Q^*$ and $k \in Q^c$. By the definition of the cluster $Q$ we have $f_{j,k} = \theta_{\epsilon}$ and  $g^{(1)}_k = g^{(2)}_k = \theta_{\epsilon}$ in the formula (\ref{eq:pphi_def}) defining $\Phi$. If we assume $\Phi(x,y,\hbx) \ne 0$, then the support criteria for $\theta_{\epsilon}(x-x_k), \theta_{\epsilon}(y-x_k)$ and $\theta_{\epsilon}(x_j-x_k)$ give (\ref{eq:lam_b1}). The inequalities (\ref{eq:lam_b3}) and (\ref{eq:lam_b4}) are proven in a similar way.
\end{proof}

\begin{lem}
\label{particle_ineqs}
Let $0<2\delta \le \epsilon$. Then $\Phi(x,y,\hat{\bold{x}}) = 0$ unless
\begin{align}
|x-x_k| < \delta/2 \quad\text{ for }\quad k \in P^*,\\
|y-x_k| < \delta/2 \quad\text{ for }\quad k \in S^*,
\end{align} 
and if $|x-y| \ge \delta$ then $\Phi(x,y,\hat{\bold{x}}) = 0$ unless
\begin{align}
|y-x_k| > \delta/2 \quad\text{ for }\quad k \in P^*,\\
|x-x_k| > \delta/2 \quad\text{ for }\quad k \in S^*.
\end{align} 
Therefore, if $P^* \cap S^* \ne \emptyset$ then $\Phi(x,y,\bold{\hat{x}}) = 0$ for all $|x-y| \ge \delta$ and $\bold{\hat{x}} \in \real^{3N-3}$.

Furthermore, if $|x|,|y|\ge \epsilon$ then $\Phi(x,y,\hat{\bold{x}}) = 0$ unless
\begin{align}
\label{eq:particle_ineq3}
\min\{|x-x_k|,\, |y-x_k|\} < \epsilon/2 \quad\text{ for }\quad k \in Q^*,\\
\label{eq:lam_b2}
|x_k| > \epsilon/2 \quad\text{ for }\quad k \in Q^*.
\end{align}
%

%
\end{lem}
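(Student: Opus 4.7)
The plan is to prove each of the four claims in turn by using the support criteria for the cutoff factors $\zeta_\delta$, $\theta_\delta\zeta_\epsilon$ and $\theta_\epsilon$ combined with the definitions of the index sets $J, K, L$ and the corresponding clusters $P, S, Q$. The key mechanism throughout is: when two indices $j,k$ are $I$-linked for $I \in \{J, K, L\}$, there is a chain of length at most $N-1$ connecting them, and each link in the chain forces a pairwise distance bound via the support criterion for the cutoff factor attached to it. Summing those bounds with the triangle inequality then controls the total distance.

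First I would prove the bounds for $k \in P^*$. Since $k \in P^*$ is $J$-linked to $1$, there exist pairwise distinct $j_0 = 1, j_1, \ldots, j_s, j_{s+1} = k$ with $s \le N-2$ and each pair $(j_i, j_{i+1}) \in J$. By the definition of $J$, the corresponding cutoff factor is $\zeta_\delta$; assuming $\Phi(x,y,\hbx) \ne 0$, the support criterion for $\zeta_\delta$ gives $|x - x_{j_1}| < (2N)^{-1}\delta$ and $|x_{j_i} - x_{j_{i+1}}| < (2N)^{-1}\delta$ for $1 \le i \le s$. The triangle inequality then yields $|x - x_k| < (s+1)(2N)^{-1}\delta \le \delta/2$. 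The analogous argument with $K$ and $g^{(2)}$ in place of $J$ and $g^{(1)}$ gives $|y - x_k| < \delta/2$ for $k \in S^*$. From these and the hypothesis $|x-y| \ge \delta$, the reverse triangle inequality gives $|y - x_k| \ge |x-y| - |x-x_k| > \delta/2$ for $k \in P^*$, and similarly $|x - x_k| > \delta/2$ for $k \in S^*$. If $P^* \cap S^*$ were non-empty and $|x-y| \ge \delta$, then any $k$ in the intersection would have to satisfy both $|x-x_k| < \delta/2$ and $|x-x_k| > \delta/2$, which is impossible; hence $\Phi$ vanishes.

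For the last two bounds, I would run the same chain argument with $L$-links, which is where a small new wrinkle appears. For $k \in Q^*$ there is a chain $1 = j_0, j_1, \ldots, j_{s+1} = k$ with $(j_i, j_{i+1}) \in L$. The first link is special: $(1, j_1) \in L$ means at least one of $g^{(1)}_{j_1}, g^{(2)}_{j_1}$ is not $\theta_\epsilon$, hence it is either $\zeta_\delta$ or $\theta_\delta \zeta_\epsilon$; in either case the support criterion for $\zeta_\epsilon$ gives either $|x - x_{j_1}| < (2N)^{-1}\epsilon$ or $|y - x_{j_1}| < (2N)^{-1}\epsilon$ (using $\delta \le \epsilon$). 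Each subsequent link $(j_i, j_{i+1}) \in L$ with $j_i, j_{i+1} \ge 2$ forces $f_{j_i, j_{i+1}} \ne \theta_\epsilon$, so $|x_{j_i} - x_{j_{i+1}}| < (2N)^{-1}\epsilon$. Telescoping gives either $|x - x_k| < \epsilon/2$ or $|y - x_k| < \epsilon/2$, yielding (\ref{eq:particle_ineq3}). Finally, under the assumption $|x|, |y| \ge \epsilon$, whichever of the two alternatives holds, the reverse triangle inequality gives $|x_k| > \epsilon/2$, proving (\ref{eq:lam_b2}).

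The only mild subtlety, and the step I would think of as the main obstacle, is in the $Q^*$ case: the chain may mix "close to $x$" and "close to $y$" links along its length, so one has to commit at the very first link to one of the two alternatives and then propagate it cleanly. This is handled by the observation that once a $j_1$ with $|x - x_{j_1}| < (2N)^{-1}\epsilon$ (resp.\ $|y - x_{j_1}| < (2N)^{-1}\epsilon$) is chosen, the remaining links, being between $x_{j_i}$'s, preserve that alternative through simple triangle inequality. Everything else is bookkeeping with chain lengths $s+1 \le N-1 \le N$ to absorb the factor $(2N)^{-1}$ into $1/2$.
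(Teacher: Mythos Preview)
Your proposal is correct and follows essentially the same approach as the paper: a chain argument along the $J$-, $K$-, or $L$-links combined with the support criteria for the cutoff factors and the triangle (resp.\ reverse triangle) inequality. The paper treats the direct-link case ($g^{(1)}_k = \zeta_\delta$, etc.) separately from the chain case, whereas you absorb it into the $s=0$ case of your chain notation, but this is a purely cosmetic difference.
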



%
\begin{proof}
By definition, if $k \in P^{*}$ either $g^{(1)}_k = \zeta_{\delta}$ or there exist pairwise distinct $j_1, \dots, j_s \in \{2,\dots,N\}$ with $1\le s \le N-2$ such that $g^{(1)}_{j_1} = \zeta_{\delta}$ and $f_{j_1,j_2}, f_{j_2, j_3}, \, \dots,\, f_{j_s, k} = \zeta_{\delta}$. In the former case, support criteria for $\zeta_{\delta}(x-x_k)$ gives $|x - x_k| < (2N)^{-1}\delta$. In the latter case, support conditions give $|x - x_{j_1}|,\, |x_{j_1} - x_{j_2}|,\, \dots,\, |x_{j_s} - x_k|  < (2N)^{-1}\delta$ and so by the triangle inequality $|x - x_k| < \delta/2$. Now, $|y-x_k| \ge |x-y|-|x-x_k| > \delta/2$ by the reverse triangle inequality. The case of $k \in S^*$ is analogous.

\paragraph{}
Now let $k \in Q^*$. First, we consider the case where either $g^{(1)}_k \ne \theta_{\epsilon}$ or $g^{(2)}_k \ne \theta_{\epsilon}$, or both. Without loss, assume $g^{(1)}_k \ne \theta_{\epsilon}$. Then either $g^{(1)}_k = \zeta_{\delta}$ or  $g^{(1)}_k = \theta_{\delta}\zeta_{\epsilon}$, but both give the inequality
\begin{equation*}
|x-x_k| < (2N)^{-1}\epsilon
\end{equation*} 
by support criteria. Hence (\ref{eq:particle_ineq3}) holds in this case.
Now suppose that $g^{(1)}_k = g^{(2)}_k = \theta_{\epsilon}$. Then there must exist pairwise distinct $j_1, \dots, j_s \in \{2,\dots,N\}$ with $1\le s \le N-2$ such that $f_{j_1,j_2}, f_{j_2,j_3},\dots,f_{j_s, k} \ne \theta_{\epsilon}$ and either $g^{(1)}_{j_1} \ne \theta_{\epsilon}$ or $g^{(2)}_{j_1} \ne \theta_{\epsilon}$. Without loss take $g^{(1)}_{j_1} \ne \theta_{\epsilon}$, so that, as before, $|x-x_{j_1}| < (2N)^{-1}\epsilon$. Similarly, we get $|x_{j_1} - x_{j_2}|,\, \dots,\, |x_{j_s} - x_k|  < (2N)^{-1}\epsilon$. Therefore, by the triangle inequality, $|x-x_k| < \epsilon/2$. Hence completing the proof of (\ref{eq:particle_ineq3}). Finally, (\ref{eq:lam_b2}) follows from (\ref{eq:particle_ineq3}) and $|x|,|y| \ge \epsilon$.
\end{proof}

\subsection{Factorisation of cutoffs}
Let $\Phi$ be given by (\ref{eq:pphi_def}). We can define a \textit{partial product} of $\Phi$ as a function of the form
\begin{equation}
\label{beprime}
\Phi'(x,y,\bold{\hat{x}}) = \prod_{j \in R_1} g^{(1)}_{j}(x - x_j) \prod_{j \in R_2}  g^{(2)}_{j}(y -x_j) \prod_{(k,l) \in R_3} f_{k,l}(x_k - x_l)
\end{equation}
where $R_1, R_2 \subset \{2, \dots, N \}$, $R_3 \subset \{(k,l) : 2 \le k<l \le N \}$.

We now define classes of partial products of $\Phi$ which corresponding to a cluster.  Let $T$ be an arbitrary cluster with $1 \in T$. Then,
\begin{align}
\label{eq:pphi_t}
\Phi(x,y,\bold{\hat{x}}; T) &= \prod_{j \in T^*} g^{(1)}_j(x-x_j) \prod_{j \in T^*} g^{(2)}_j(y-x_j) \prod_{\substack{k,l \in T^* \\ k < l}} f_{k,l}(x_k - x_l),\\ 
\label{eq:pphi_tc}
\Phi(x,y,\bold{\hat{x}}; T^c) &= \prod_{\substack{k,l \in T^c \\ k < l}} f_{k,l}(x_k - x_l),\\
\label{eq:pphi_ttc}
\Phi(x,y,\bold{\hat{x}}; T, T^c) &= \prod_{j \in T^c} g^{(1)}_j(x-x_j) \prod_{j \in T^c} g^{(2)}_j(y-x_j) \prod_{\substack{k \in T^* \\ l \in T^c}} f_{k,l}(x_k - x_l).
\end{align}
Formulae (\ref{eq:pphi_t}) and (\ref{eq:pphi_tc}) are separate definitions in the cases where the cluster contains $1$ and does not contain $1$, respectively.

If we consider both $x$ and $y$ to adopt the role of particle $1$ then these functions can be interpreted as $\Phi(\,\cdot\,;T)$ involving only pairs of particles in $T$, $\Phi(\,\cdot\,;T^c)$ involving only pairs in $T^c$, and $\Phi(\,\cdot\,;T,T^c)$ involving pairs where one particle lies in $T$ and the other lies in $T^c$.

\begin{lem}
\label{phi_factors_lem}
Given a cutoff $\Phi$ of the form (\ref{eq:pphi_def}) and any cluster $T$ with $1 \in T$ we have
\begin{equation}
\label{phi_factors}
\Phi(x,y,\bold{\hat{x}}) =  \Phi(x,y,\bold{\hat{x}}; T)\Phi(x,y,\bold{\hat{x}};T,T^c)\Phi(x,y,\bold{\hat{x}}; T^c).
\end{equation}
\end{lem}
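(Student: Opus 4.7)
The plan is to verify the identity by careful bookkeeping: separate the three ``kinds'' of factors in the definition (\ref{eq:pphi_def}) of $\Phi$ (namely the $g^{(1)}_j$, the $g^{(2)}_j$, and the $f_{k,l}$), partition the index sets according to membership in $T^*$ versus $T^c$, and match the resulting pieces against the formulae (\ref{eq:pphi_t})--(\ref{eq:pphi_ttc}).

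First, since $1 \in T$ we have the disjoint decomposition $\{2,\dots,N\} = T^* \sqcup T^c$. Applying this to each single-particle product gives
\begin{equation*}
\prod_{2 \le j \le N} g^{(1)}_j(x-x_j) = \prod_{j \in T^*} g^{(1)}_j(x-x_j) \,\cdot\, \prod_{j \in T^c} g^{(1)}_j(x-x_j),
\end{equation*}
and the same for $g^{(2)}_j(y-x_j)$. The $T^*$-products contribute to $\Phi(x,y,\hat\bx; T)$, while the $T^c$-products contribute to $\Phi(x,y,\hat\bx; T, T^c)$.

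For the pair product, categorise each pair $(k,l)$ with $2 \le k<l\le N$ into one of three classes: both indices in $T^*$, both in $T^c$, or one in each. This yields
\begin{equation*}
\prod_{2\le k<l\le N} f_{k,l}(x_k-x_l) = \prod_{\substack{k,l\in T^*\\ k<l}} f_{k,l}(x_k-x_l) \;\prod_{\substack{k,l\in T^c\\ k<l}} f_{k,l}(x_k-x_l) \;\prod_{\substack{(k,l):\,k<l\\\{k,l\}\cap T^* \ne \emptyset\\\{k,l\}\cap T^c \ne \emptyset}} f_{k,l}(x_k-x_l).
\end{equation*}
The first two products match precisely the $f$-factors in $\Phi(\cdot;T)$ and $\Phi(\cdot;T^c)$, respectively. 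For the cross-product, given a pair with $k<l$ having exactly one index in $T^*$, relabel: if $k\in T^*$, $l\in T^c$, the factor is already of the form appearing in (\ref{eq:pphi_ttc}). If instead $k\in T^c$, $l\in T^*$, use the convention $f_{l,k}=f_{k,l}$ together with the fact that each cutoff factor $\zeta_t, \theta_t, \theta_t\zeta_t$ depends only on $|z|$ (cf.~(\ref{cutoff_factors})), which gives $f_{k,l}(x_k-x_l) = f_{l,k}(x_l-x_k)$, placing the factor in the correct form for $\Phi(\cdot;T,T^c)$.

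Combining these re-groupings reproduces exactly the factors in the product on the right-hand side of (\ref{phi_factors}), with no repetitions or omissions, which completes the proof. The only nontrivial bookkeeping issue is the handling of cross-pairs where $k<l$ but $k\in T^c$, $l\in T^*$, resolved by the symmetry of $f_{k,l}$ in its indices and its evenness in the spatial argument.
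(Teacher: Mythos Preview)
Your proof is correct and follows essentially the same approach as the paper's own proof: both split the $g^{(1)}_j$, $g^{(2)}_j$, and $f_{k,l}$ products according to the disjoint decomposition $\{2,\dots,N\}=T^*\sqcup T^c$, and both invoke the symmetry $f_{k,l}=f_{l,k}$ to handle cross-pairs. The paper's version is terser, recording only the key identity for the pair product, whereas you spell out the $g$-factors and the evenness of the cutoff factors explicitly.
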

\begin{proof}
This identity follows from the definitions (\ref{eq:pphi_t})-(\ref{eq:pphi_ttc}) and the equality
\begin{equation*}
\prod_{2 \le k<l \le N} f_{k,l}(x_k - x_l) = \prod_{\substack{k,l \in T^* \\ k < l}} f_{k,l}(x_k - x_l) \prod_{\substack{k \in T^* \\ l \in T^c}} f_{k,l}(x_k - x_l) \prod_{\substack{k,l \in T^c \\ k < l}} f_{k,l}(x_k - x_l),
\end{equation*}
since for any $2 \le k<l \le N$ we have $f_{k,l} = f_{l,k}$ by definition.
\end{proof}

Let $Q = Q(\Phi)$. Then the partial product $\Phi(\,\cdot\,; Q,Q^c)$ consists of only $\theta_{\epsilon}$ cutoff factors, as shown in the following lemma.
\begin{lem}
\label{lem:pphi_qqc}
For $Q = Q(\Phi)$,
\begin{equation}
\label{eq:phiqqc}
\Phi(x,y,\bold{\hat{x}}; Q,Q^c) =  \prod_{j \in Q^c} \theta_{\epsilon}(x - x_j) \prod_{j \in Q^c} \theta_{\epsilon}(y - x_j) \prod_{\substack{k \in Q^* \\ l \in Q^c}} \theta_{\epsilon}(x_k - x_l).
\end{equation}
\end{lem}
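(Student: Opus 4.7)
The plan is to unfold the definition of $\Phi(x,y,\bold{\hat{x}}; Q, Q^c)$ given in (\ref{eq:pphi_ttc}) and verify, factor by factor, that every cutoff appearing there must equal $\theta_\epsilon$. Setting $T = Q$ in (\ref{eq:pphi_ttc}), the claim reduces to showing that $g^{(1)}_j = g^{(2)}_j = \theta_\epsilon$ for all $j \in Q^c$, and that $f_{k,l} = \theta_\epsilon$ whenever $k \in Q^*$ and $l \in Q^c$. Both assertions are immediate consequences of how the cluster $Q = Q(\Phi)$ and the index set $L$ were defined in Section \ref{phi_clusters}.

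First I would handle the single-particle factors. Fix $j \in Q^c$. By definition of $Q$, the index $j$ is not $L$-linked to $1$; in particular $(1,j) \notin L$. The definition of $L$ states that $(1,j) \in L$ exactly when $(g^{(1)}_j, g^{(2)}_j) \neq (\theta_\epsilon, \theta_\epsilon)$, so taking the contrapositive yields $g^{(1)}_j = \theta_\epsilon$ and $g^{(2)}_j = \theta_\epsilon$, as needed.

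Next I would address the pair factors $f_{k,l}$ with $k \in Q^*$ and $l \in Q^c$. Both indices lie in $\{2, \dots, N\}$, so the defining rule $(k,l) \in L \iff f_{k,l} \neq \theta_\epsilon$ applies. Suppose for contradiction that $f_{k,l} \neq \theta_\epsilon$; then $(k,l) \in L$. Since $k \in Q$, there is either $k=1$ or a chain of $L$-edges from $1$ to $k$, i.e. pairwise distinct indices $j_1, \dots, j_s$ distinct from $1$ and $k$ with $(1,j_1), (j_1,j_2), \dots, (j_s, k) \in L$ (or directly $(1,k) \in L$). Appending the edge $(k,l)$ to this chain (and noting $l \neq 1$ since $1 \in Q$, and $l \neq k$, $l \neq j_i$ since $l \in Q^c$ whereas the $j_i$ and $k$ lie in $Q$) produces an $L$-chain from $1$ to $l$, so $l$ would be $L$-linked to $1$, contradicting $l \in Q^c$. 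Therefore $f_{k,l} = \theta_\epsilon$.

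Substituting $g^{(1)}_j = g^{(2)}_j = \theta_\epsilon$ and $f_{k,l} = \theta_\epsilon$ back into (\ref{eq:pphi_ttc}) with $T = Q$ yields (\ref{eq:phiqqc}). There is no real obstacle here: the only minor care needed is in the chain argument, making sure that when one appends $(k,l)$ to an existing $L$-chain from $1$ to $k$ one still obtains a valid chain of pairwise distinct intermediate indices, which is automatic because $l \in Q^c$ is disjoint from all the indices appearing in the chain (which lie in $Q$).
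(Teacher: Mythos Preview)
Your proof is correct and follows exactly the same approach as the paper's proof: both identify that for $j \in Q^c$ one has $g^{(1)}_j = g^{(2)}_j = \theta_\epsilon$, and for $k \in Q^*$, $l \in Q^c$ one has $f_{k,l} = \theta_\epsilon$, then substitute into (\ref{eq:pphi_ttc}). The paper simply asserts these facts ``by the definition of the cluster $Q(\Phi)$'' in one line, while you spell out the contrapositive and the chain-extension argument explicitly; your additional care with the chain argument is sound but not strictly needed, since $L$-linkage is an equivalence relation and $Q$ is by construction the $L$-connected component of $1$.
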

\begin{proof}
By the definition of the cluster $Q(\Phi)$, we have $g^{(1)}_j = g^{(2)}_j = \theta_{\epsilon}$ for each $j \in Q^c$, and $f_{k,l} = \theta_{\epsilon}$ for each $k \in Q^*$ and $l \in Q^c$. The formula then follows from (\ref{eq:pphi_ttc}).
\end{proof}

\subsection{Derivatives}
The definition of cluster derivatives, (\ref{eq:cd}), is modestly extended to allow action on cutoffs. For any cluster $T$ we define the following three cluster derivatives which can act on functions of $x$, $y$ and $\bold{\hat{x}}$, such as $\Phi$. We set
\begin{align}
\label{eq:dxbc}
D^{\alpha}_{x,T} = \partial^{\alpha}_x +  \sum_{j \in T^*} \partial^{\alpha}_{x_j} \quad \text{and}\quad D^{\alpha}_{y,T} = \partial^{\alpha}_y +  \sum_{j \in T^*} \partial^{\alpha}_{x_j} \quad \text{for } \alpha \in \naturals_0^3,\, |\alpha|=1,
\end{align}
\begin{align}
\label{eq:dxybc}
D^{\alpha}_{x,y,T} = \partial^{\alpha}_x + \partial^{\alpha}_y +  \sum_{j \in T^*} \partial^{\alpha}_{x_j}\quad \text{for } \alpha \in \naturals_0^3,\, |\alpha|=1,
\end{align}
which are extended to higher order multiindices $\alpha \in \naturals_0^3$ by successive application of first-order derivatives, as in (\ref{eq:cd2}).

The following lemma gives partial derivative estimates of the cutoff factors. We will require the following elementary result. For each $\sigma \in \naturals_0^3$ and $s \in \real$ there exists $C>0$ such that for any $x_0 \in \real^{3}$ we get $\big|\partial_x^{\sigma}|x+x_0|^{s}\big| \le C|x+x_0|^{s-|\sigma|}$ for all $x \in \real^3$, $x \ne -x_0$. We use the standard notation $\mathds{1}_{S}$ to denote the indicator function on a set $S$.

\begin{lem}
\label{dcutoff_factors}
For any $\sigma \in \naturals_0^3$ with $|\sigma| \ge 1$ and any $t > 0$ there exists $C$, depending on $\sigma$ but independent of $t$, such that
\begin{equation}
\label{dcutoff_factors1}
|\partial^{\sigma} \zeta_{t}(x)|,\, |\partial^{\sigma} \theta_{t}(x)| \le C t^{-|\sigma|} \mathds{1}_{\{(4N)^{-1}t < |x| < (2N)^{-1}t\}}(x)
\end{equation}
for all $x \in \real^3$.
\end{lem}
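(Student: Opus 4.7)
Since $\theta_t = 1 - \zeta_t$, for any multiindex $\sigma$ with $|\sigma| \ge 1$ we have $\partial^\sigma \theta_t = -\partial^\sigma \zeta_t$, so it suffices to establish the claimed bound for $\zeta_t$. The support statement is essentially built into the definition: by the properties of $\xi$ in (\ref{eq:chi_def}), the function $\zeta_t(x) = \xi(4N|x|/t)$ equals $1$ when $4N|x|/t \le 1$ and equals $0$ when $4N|x|/t \ge 2$. Hence $\zeta_t$ is locally constant outside the annulus $\{(4N)^{-1}t < |x| < (2N)^{-1}t\}$, so any partial derivative of order $|\sigma| \ge 1$ vanishes there. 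This yields the indicator factor on the right-hand side.

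To obtain the quantitative bound $Ct^{-|\sigma|}$ on this annulus, I apply the Faà di Bruno formula to the composition $\zeta_t(x) = \xi(h(x))$ where $h(x) = 4N|x|/t$. This expresses $\partial^\sigma \zeta_t(x)$ as a finite sum (depending only on $\sigma$) of terms of the form
\begin{equation*}
\xi^{(k)}(h(x)) \, \prod_{j=1}^{k} \partial^{\sigma_j} h(x),
\end{equation*}
where $k \ge 1$, each $|\sigma_j| \ge 1$, and $\sigma_1 + \dots + \sigma_k = \sigma$. Since $\xi \in C_c^\infty(\real)$, each $\xi^{(k)}$ is bounded by a constant depending only on $\sigma$.

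For the remaining factors, I use the elementary estimate $|\partial^{\sigma_j}|x|| \le C_{\sigma_j}|x|^{1-|\sigma_j|}$ stated just before the lemma, giving
\begin{equation*}
|\partial^{\sigma_j} h(x)| = \frac{4N}{t}\,\bigl|\partial^{\sigma_j}|x|\bigr| \le \frac{C}{t}\,|x|^{\,1-|\sigma_j|}.
\end{equation*}
Since every relevant point lies in the annulus $(4N)^{-1}t < |x| < (2N)^{-1}t$, we have $|x| \asymp t$, and because $1 - |\sigma_j| \le 0$ we may replace $|x|^{1-|\sigma_j|}$ by a constant multiple of $t^{1-|\sigma_j|}$. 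Hence each factor contributes $Ct^{-1} \cdot t^{1-|\sigma_j|} = C t^{-|\sigma_j|}$, and the product of the $k$ factors contributes $C t^{-|\sigma_1|-\dots-|\sigma_k|} = C t^{-|\sigma|}$. Summing over the finitely many terms of the Faà di Bruno expansion gives the claimed bound, with a constant depending only on $\sigma$ (through $\xi$, the combinatorics, and the number of derivatives of $|\cdot|$ that appear). The only mildly delicate step is ensuring that $|x|$ is smooth at the points we differentiate, which is guaranteed because the support condition already forces $|x| > (4N)^{-1}t > 0$.
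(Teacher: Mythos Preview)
Your proof is correct and essentially identical to the paper's: both reduce to $\zeta_t$, apply the Fa\`a di Bruno / chain rule expansion of $\xi(4N|x|/t)$, use the support of $\xi^{(k)}$ (equivalently, local constancy of $\zeta_t$) to restrict to the annulus, and then invoke the elementary bound $|\partial^{\sigma_j}|x||\le C|x|^{1-|\sigma_j|}$ together with $|x|\asymp t$ on that annulus to collect the power $t^{-|\sigma|}$. The only cosmetic difference is that the paper pulls the factor $(4N/t)^m$ out in front before differentiating $|x|$, whereas you absorb it into $h$ and differentiate $h$ directly; the bookkeeping is the same.
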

\begin{proof}
Without loss we consider the case of $\zeta_{t}$, the case of $\theta_{t}$ being similar. Recall $\xi = \xi(s)$ was defined in (\ref{eq:chi_def}) and we denote by $\xi^{(m)}$ the $m$-th (univariate) derivative of $\xi$. Since $|\sigma| \ge 1$ the chain rule shows that $\partial^{\sigma} \zeta_{t}(x)$ can be written as a sum of terms of the form
\begin{equation}
\label{dtheta}
(4Nt^{-1})^m \, \xi^{(m)}\big(4N|x|t^{-1}\big)\, \partial^{\sigma_1} |x| \, \dots \,\partial^{\sigma_m} |x|
\end{equation}
where $1 \le m \le |\sigma|$, and $\sigma_1, \dots, \sigma_m \in \naturals_0^3$ are non-zero multiindices obeying
\begin{equation*}
\sigma_1 + \dots + \sigma_m = \sigma.
\end{equation*}
Since $m \ge 1$, we have that if $\xi^{(m)}(s) \ne 0$ then $s \in (1,2)$. Therefore, for any term (\ref{dtheta}) to be non-zero we require that
\begin{equation}
\label{l_bound}
(4N)^{-1}t < |x| <(2N)^{-1}t.
\end{equation}
By the remark preceeding the current lemma, there exists $C$, dependent on $\sigma_1, \dots, \sigma_m$, such that
\begin{align*}
\partial^{\sigma_1} |x| \, \dots \,\partial^{\sigma_m} |x| \le  C|x|^{m-|\sigma|} \le C(4N)^{|\sigma|-m} t^{m-|\sigma|},
\end{align*}
using (\ref{l_bound}). Therefore, the terms (\ref{dtheta}) can readily be bounded to give the desired result.
\end{proof}

We now give bounds for the cluster derivatives (\ref{eq:dxbc})-(\ref{eq:dxybc}) acting on cutoffs.
\begin{lem}
\label{dqpq_lem}
Let $Q=Q(\Phi)$. Then $D^{\alpha}_{x,y,Q}\Phi(\,\cdot\,;Q) \equiv 0$ for all $\alpha \in \naturals_0^3$ with $|\alpha| \ge 1$.
\end{lem}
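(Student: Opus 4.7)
The plan is to exploit translation invariance. The factor $\Phi(\,\cdot\,;Q)$ defined in \eqref{eq:pphi_t} depends on $(x,y,\hat\bx)$ only through the differences $x-x_j$ and $y-x_j$ for $j\in Q^*$ and $x_k-x_l$ for $k,l\in Q^*$. Consequently, if we shift $x$, $y$, and every $x_j$ with $j\in Q^*$ by the same vector $v\in\real^3$ (leaving $x_j$ for $j\in Q^c$ unchanged), the value of $\Phi(x,y,\hat\bx;Q)$ is unaffected. The derivative of $\Phi(\,\cdot\,;Q)$ along the direction of this joint shift, i.e.\ the directional derivative with tangent vector $(v,v,(v)_{j\in Q^*})$, therefore vanishes. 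Recognising $D^{e_i}_{x,y,Q}$ from \eqref{eq:dxybc} as precisely this directional derivative (with $v=e_i$) gives the result for $|\alpha|=1$.

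Concretely, to make this rigorous one would apply the Leibniz rule to the product in \eqref{eq:pphi_t} and verify that $D^{e_i}_{x,y,Q}$ kills each individual factor. For $g^{(1)}_j(x-x_j)$ with $j\in Q^*$, only the terms $\partial_x^{e_i}$ and $\partial_{x_j}^{e_i}$ contribute, and these produce $(g^{(1)}_j)'(x-x_j)$ and $-(g^{(1)}_j)'(x-x_j)$ respectively, cancelling. The same elementary calculation handles the factors $g^{(2)}_j(y-x_j)$ (using $\partial_y^{e_i}$ and $\partial_{x_j}^{e_i}$) and $f_{k,l}(x_k-x_l)$ (using $\partial_{x_k}^{e_i}$ and $\partial_{x_l}^{e_i}$, both indices lying in $Q^*$).

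For $|\alpha|\ge 2$, one iterates. Since $\Phi(\,\cdot\,;Q)$ is smooth in all its arguments, the first-order cluster derivative $D^{e_i}_{x,y,Q}\Phi(\,\cdot\,;Q)$ is the identically zero function on $\real^3\times\real^3\times\real^{3N-3}$, so applying any further first-order derivative $D^{e_j}_{x,y,Q}$ again yields zero. By the definition of higher-order cluster derivatives via successive application of first-order ones (analogous to \eqref{eq:cd2}), we conclude $D^{\alpha}_{x,y,Q}\Phi(\,\cdot\,;Q)\equiv 0$ for every $|\alpha|\ge 1$. No step here is an obstacle: the entire argument is the observation that $\Phi(\,\cdot\,;Q)$ is a function of differences among the variables $\{x,y\}\cup\{x_j:j\in Q^*\}$, and $D^{e_i}_{x,y,Q}$ is the vector field generating simultaneous translations of exactly these variables.
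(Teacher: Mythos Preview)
Your proof is correct and takes essentially the same approach as the paper's one-line proof, which simply observes that by the chain rule each factor in the product \eqref{eq:pphi_t} is annihilated by $D^{\alpha}_{x,y,Q}$. Your argument expands this observation in full detail and adds the translation-invariance interpretation, but the underlying mechanism is identical.
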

\begin{proof}
By the chain rule, each function in the product (\ref{eq:pphi_t}) for $\Phi(\,\cdot\,;Q)$ has zero derivative upon action of $D^{\alpha}_{x,y,Q}$.
\end{proof}

For the next result we will use the following function, defined for all $t>0$ by
\begin{align}
\label{eq:mt}
&M_{t}(x,y,\bold{\hat{x}}) = \sum_{2 \le j \le N} \mathds{1}_{\{(4N)^{-1}t < |x-x_j| < (2N)^{-1}t\}}(x-x_j)\\ 
\nonumber
&\quad+ \sum_{2 \le j \le N} \mathds{1}_{\{(4N)^{-1}t < |y-x_j| < (2N)^{-1}t\}}(y-x_j) + \sum_{2 \le k < l \le N} \mathds{1}_{\{(4N)^{-1}t < |x_k-x_l| < (2N)^{-1}t\}}(x_k-x_l)
\end{align}
for $x,y \in \real^3$ and $\bold{\hat{x}} \in \real^{3N-3}$.

\begin{lem}
\label{lem:dpphi_prime}
Let $0<2\delta \le \epsilon$ and $\Phi = \Phi_{\delta,\epsilon}$ be a cutoff of the form (\ref{eq:pphi_def}). Let $Q=Q(\Phi)$. Then for any multiindex $\bal \in \naturals_0^{3N+3}$ there exists $C$, dependent on $\bal$ but independent of $\delta$ and $\epsilon$, such that for any partial products $\Phi'$ of $\Phi$ we have
\begin{align}
\label{partial_a}
|\partial^{\bal}\Phi'(x,y,\bold{\hat{x}})| \le
\begin{cases}
C \epsilon^{-|\bal|} &\text{ if } \Phi' = \Phi(\,\cdot\,;Q,Q^c)\\
C \big(\epsilon^{-|\bal|} + \delta^{-|\bal|} M_{\delta}(x,y,\bold{\hat{x}})\big) &\text{ otherwise.}\\
\end{cases}
\end{align}
\end{lem}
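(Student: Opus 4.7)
The plan is to apply Leibniz's rule to the product (\ref{beprime}) defining $\Phi'$ and bound each resulting term using Lemma \ref{dcutoff_factors}. As a preliminary step, I would establish a uniform pointwise bound for $\partial^{\sigma}g$, valid for every cutoff factor $g \in \{\zeta_{\delta},\, \theta_{\delta}\zeta_{\epsilon},\, \theta_{\epsilon}\}$ appearing in (\ref{eq:pphi_def}). For $\zeta_{\delta}$ and $\theta_{\epsilon}$ this is immediate from Lemma \ref{dcutoff_factors}. For $\theta_{\delta}\zeta_{\epsilon}$ I would apply the product rule and observe that whenever both $\partial^{\tau}\theta_{\delta}$ and $\partial^{\sigma-\tau}\zeta_{\epsilon}$ have order $\ge 1$, their supports are disjoint (because $2\delta \le \epsilon$ forces $(2N)^{-1}\delta < (4N)^{-1}\epsilon$), so the mixed terms vanish. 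This yields, for every such $g$ and every $|\sigma|\ge 1$,
\begin{equation*}
|\partial^{\sigma} g(z)| \le C\bigl(\delta^{-|\sigma|}\mathds{1}_{\{(4N)^{-1}\delta < |z| < (2N)^{-1}\delta\}}(z) + \epsilon^{-|\sigma|}\mathds{1}_{\{(4N)^{-1}\epsilon < |z| < (2N)^{-1}\epsilon\}}(z)\bigr),
\end{equation*}
with $C$ independent of $\delta,\epsilon$, while $|g(z)| \le 1$ always.

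Next I apply Leibniz's rule to $\partial^{\bal}\Phi'$, writing it as a finite sum (with coefficients bounded by a constant depending only on $N$ and $\bal$) of products over the factors of $\Phi'$, where a multi-index $\bsig_j$ is assigned to each factor with $\sum_j \bsig_j = \bal$. Factors with $\bsig_j = 0$ contribute at most $1$. Each factor with $\bsig_j \ne 0$ contributes either a ``$\delta$-piece'' of size $\delta^{-|\bsig_j|}$ supported in the corresponding $\delta$-annulus, or an ``$\epsilon$-piece'' of size $\epsilon^{-|\bsig_j|}$ supported in the corresponding $\epsilon$-annulus. Expanding the resulting products, a general term is bounded by a product of indicators times $\delta^{-p}\epsilon^{-(|\bal|-p)}$, where $p$ is the total order of derivatives landing on $\delta$-pieces. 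If $p = 0$ the factor is $\epsilon^{-|\bal|}$ and the indicators are discarded; if $p \ge 1$ then at least one $\delta$-annulus indicator is present, and using $\delta \le \epsilon/2$ the scalar factor is dominated by $\delta^{-|\bal|}$. Thus each term with $p\ge 1$ is dominated by $\delta^{-|\bal|}$ times a single $\delta$-annulus indicator of the form appearing in the sum defining $M_{\delta}$ in (\ref{eq:mt}). Since the number of Leibniz terms is bounded in terms of $N$ and $\bal$, this yields the second case of (\ref{partial_a}).

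For the special case $\Phi'=\Phi(\,\cdot\,;Q,Q^c)$, Lemma \ref{lem:pphi_qqc} asserts that every factor in $\Phi'$ is of the form $\theta_{\epsilon}$, so only $\epsilon$-pieces can arise in the Leibniz expansion. Hence no $\delta$-annulus indicators occur, $p=0$ is forced, and each term is bounded by $C\epsilon^{-|\bal|}$ (bounding the $\epsilon$-annulus indicators by $1$), giving the first case of (\ref{partial_a}). The main bookkeeping subtlety, which the disjoint-support observation for $\theta_{\delta}\zeta_{\epsilon}$ resolves cleanly, is ensuring that every ``mixed'' Leibniz contribution either reduces to a pure $\epsilon$-scale term or produces a genuine $\delta$-annulus indicator that is accounted for by $M_{\delta}$; once this is in hand, the remaining estimates are entirely routine.
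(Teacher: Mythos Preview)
Your proposal is correct and follows essentially the same approach as the paper: the disjoint-support reduction $\partial^{\sigma}(\theta_{\delta}\zeta_{\epsilon}) = \partial^{\sigma}\theta_{\delta} + \partial^{\sigma}\zeta_{\epsilon}$ for $|\sigma|\ge 1$, the Leibniz expansion with factor-by-factor bounds from Lemma~\ref{dcutoff_factors}, and the appeal to Lemma~\ref{lem:pphi_qqc} for the $\Phi(\,\cdot\,;Q,Q^c)$ case. The only cosmetic difference is that the paper bounds each $\delta$-annulus indicator directly by $M_{\delta}$, whereas you retain one indicator and then absorb it into $M_{\delta}$; the outcome is the same.
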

\begin{proof}
%
%
%
Lemma \ref{dcutoff_factors} gives bounds for the partial derivatives of the functions $\zeta_{\delta}, \theta_{\delta}, \zeta_{\epsilon}, \theta_{\epsilon}$. Considering $\theta_{\delta}\zeta_{\epsilon}$, we apply the Leibniz rule with $\sigma \in \naturals_0^3$, $|\sigma| \ge 1$, to obtain, for $z \in \real^3$
\begin{align}
\label{dcutoff_factor_final3}
\partial^{\sigma} (\theta_{\delta}\zeta_{\epsilon})(z) &= \sum_{\mu \le \sigma} {\sigma \choose \mu} \partial^{\mu} \theta_{\delta}(z) \partial^{\sigma-\mu} \zeta_{\epsilon}(z)
= \partial^{\sigma} \theta_{\delta}(z) + \partial^{\sigma} \zeta_{\epsilon}(z),
\end{align}
since, for each $\mu \le \sigma$ with $\mu \ne 0$ and $\mu \ne \sigma$ we have
\begin{equation*}
\partial^{\mu} \theta_{\delta}(z)\, \partial^{\sigma-\mu}\zeta_{\epsilon}(z) \equiv 0
\end{equation*}
by Lemma \ref{dcutoff_factors} and that $2\delta \le \epsilon$.
\paragraph{}
Now, to evaluate $\partial^{\bal}\Phi'$ for a general $\Phi'$ we apply the Leibniz rule to the product (\ref{beprime}) using (\ref{dcutoff_factor_final3}) where appropriate. Differentiated cutoff factors $\zeta_\delta, \theta_\delta, \zeta_\epsilon, \theta_\epsilon$ are bounded by (\ref{dcutoff_factors1}). The indicator function in this bound is not required in the case of $\zeta_\epsilon$ or $\theta_\epsilon$, whereas in the case of $\zeta_\delta$ or $\theta_\delta$ the indicator function is bounded above by $M_t$. Any remaining undifferentiated cutoff factors are bounded above by 1. If $\Phi' = \Phi(\,\cdot\,;Q,Q^c)$, all cutoff factors are of the form $\theta_{\epsilon}$ by Lemma \ref{lem:pphi_qqc} and therefore we need only use the bounds in (\ref{dcutoff_factors1}) with $t=\epsilon$.

\end{proof}

The derivative $D_{x,y,Q}$ acting on $\Phi$ is special in that it contributes only powers of $\epsilon$ (and not $\delta$) to the bounds. This is shown in the next lemma.

\begin{lem}
\label{lem:dphi}
Let $0<2\delta \le \epsilon$ and $\Phi=\Phi_{\delta,\epsilon}$ be a cutoff of the form (\ref{eq:pphi_def}). Let $Q=Q(\Phi)$. For any multiindices $\alpha \in \mathbb{N}_0^3$ and $\bsig \in \mathbb{N}_0^{3N+3}$ there exists $C$, independent of $\epsilon$ and $\delta$, such that
\begin{equation*}
|\partial^{\bsig} D^{\alpha}_{x,y,Q} \Phi(x,y, \bold{\hat{x}})| \le C \epsilon^{-|\alpha|}\big( \epsilon^{-|\bsig|} + \delta^{-|\bsig|} M_{\delta}(x,y,\bold{\hat{x}})\big)
\end{equation*}
for all $x,y,\bold{\hat{x}}$.
\end{lem}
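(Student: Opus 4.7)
The plan is to factor $\Phi$ via Lemma \ref{phi_factors_lem} and exploit the fact that the cluster derivative $D^{\alpha}_{x,y,Q}$ annihilates two of the three factors, leaving only a factor whose derivatives cost only powers of $\epsilon$. For the cluster $Q = Q(\Phi)$ (which contains $1$) write
\begin{equation*}
\Phi = A\cdot B \cdot C, \qquad A = \Phi(\,\cdot\,;Q),\quad B = \Phi(\,\cdot\,;Q,Q^c),\quad C = \Phi(\,\cdot\,;Q^c).
\end{equation*}
By definition (\ref{eq:pphi_tc}), the factor $C$ depends only on the variables $x_k$ with $k \in Q^c$, none of which is differentiated by $D^{\alpha}_{x,y,Q}$; hence $D^{\alpha}_{x,y,Q} C \equiv 0$ whenever $|\alpha|\ge 1$. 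By Lemma \ref{dqpq_lem}, also $D^{\alpha}_{x,y,Q} A \equiv 0$ for $|\alpha|\ge 1$. Applying the Leibniz rule to the product $ABC$, only the term in which all $|\alpha|$ cluster derivatives land on $B$ survives, so
\begin{equation}\label{eq:plan_split}
D^{\alpha}_{x,y,Q}\Phi = A\cdot \big(D^{\alpha}_{x,y,Q} B\big)\cdot C
\end{equation}
when $|\alpha|\ge 1$ (the case $|\alpha|=0$ is just $\Phi = ABC$ and follows directly from Lemma \ref{lem:dpphi_prime}).

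Next I would differentiate \eqref{eq:plan_split} with $\partial^{\bsig}$ via Leibniz to get a finite sum of terms of the form
\begin{equation*}
\binom{\bsig}{\mu,\nu,\rho}(\partial^{\mu}A)\,(\partial^{\nu}D^{\alpha}_{x,y,Q}B)\,(\partial^{\rho}C), \qquad \mu+\nu+\rho=\bsig,
\end{equation*}
and bound each factor with Lemma \ref{lem:dpphi_prime}. The factors $A$ and $C$ are partial products of $\Phi$, so
\begin{equation*}
|\partial^{\mu}A|,\ |\partial^{\rho}C| \le C\bigl(\epsilon^{-|\mu|}+\delta^{-|\mu|}M_{\delta}\bigr),\ C\bigl(\epsilon^{-|\rho|}+\delta^{-|\rho|}M_{\delta}\bigr).
\end{equation*}
For the middle factor, Lemma \ref{lem:pphi_qqc} tells us $B$ is a product solely of $\theta_{\epsilon}$ cutoff factors; therefore $\partial^{\nu}D^{\alpha}_{x,y,Q}B$ is a sum of ordinary partial derivatives of $B$ of total order $|\alpha|+|\nu|$, and the first case of Lemma \ref{lem:dpphi_prime} gives $|\partial^{\nu}D^{\alpha}_{x,y,Q}B|\le C\epsilon^{-|\alpha|-|\nu|}$.

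The final step is the bookkeeping: expanding $(\epsilon^{-|\mu|}+\delta^{-|\mu|}M_{\delta})(\epsilon^{-|\rho|}+\delta^{-|\rho|}M_{\delta})\epsilon^{-|\alpha|-|\nu|}$ yields four terms. The pure-$\epsilon$ term is $\epsilon^{-|\alpha|-|\bsig|}$, immediately acceptable. Each mixed term is handled by replacing the $\epsilon$ factor with the (larger) $\delta$ factor using $\delta\le\epsilon$, reassembling $\delta^{-|\bsig|}M_{\delta}$ up to an $\epsilon^{-|\alpha|}$. The only mildly delicate term is the one with $M_{\delta}^{2}$; here I use that $M_{\delta}$ is pointwise bounded by a constant depending only on $N$ (it is a sum of finitely many indicator functions), so $M_{\delta}^{2}\le CM_{\delta}$, and the same reduction applies. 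Summing over the finitely many Leibniz multi-indices gives the claimed bound with a constant depending only on $\alpha$, $\bsig$ and $N$. The main (and only) subtlety is the $M_{\delta}^{2}$ step, which relies on the trivial pointwise boundedness of $M_{\delta}$; everything else is a direct combination of Lemmas \ref{phi_factors_lem}, \ref{dqpq_lem}, \ref{lem:pphi_qqc} and \ref{lem:dpphi_prime}.
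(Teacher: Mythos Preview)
Your proof is correct and follows essentially the same approach as the paper: factor $\Phi$ via Lemma \ref{phi_factors_lem}, observe that $D^{\alpha}_{x,y,Q}$ annihilates both $\Phi(\,\cdot\,;Q)$ and $\Phi(\,\cdot\,;Q^c)$, and then apply Leibniz together with Lemma \ref{lem:dpphi_prime}. The only cosmetic difference is that the paper groups the two annihilated factors into a single partial product $\Phi' = \Phi(\,\cdot\,;Q)\,\Phi(\,\cdot\,;Q^c)$ before applying Leibniz, which sidesteps your (harmless) $M_{\delta}^{2}\le CM_{\delta}$ step entirely.
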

\begin{proof}
First, set $\Phi' = \Phi(\,\cdot\,;Q) \, \Phi(\,\cdot\,;Q^c)$ and $\Phi'' = \Phi(\,\cdot\,;Q,Q^c)$. We then have
\begin{equation*}
D^{\alpha}_{x,y,Q} \Phi = \Phi'\, D^{\alpha}_{x,y,Q} \Phi''
\end{equation*}
which follows from Lemmas \ref{phi_factors_lem} and \ref{dqpq_lem}, and that $\Phi(\,\cdot\,;Q^c)$ is not dependent on variables involved in the $D^{\alpha}_{x,y,Q}$-derivative. By the definition (\ref{eq:dxybc}), the derivative $D^{\alpha}_{x,y,Q} \Phi''$ can be written as a sum of partial derivatives of the form $\partial^{\bal} \Phi''$ where $\bal \in \naturals_0^{3N+3}$ obeys $|\bal| = |\alpha|$. Now, by the Leibniz rule and Lemma \ref{lem:dpphi_prime} there exists some constants $C$ and $C'$, independent of $\delta$ and $\epsilon$, such that
\begin{align*}
|\partial^{\bsig} (\Phi' \, \partial^{\bal} \Phi'' )| &\le \sum_{\boldsymbol{\tau} \le \bsig} {\bsig \choose \boldsymbol{\tau}} |\partial^{\boldsymbol{\tau}}\Phi'| \, |\partial^{\bsig-\boldsymbol{\tau}+\bal} \Phi''|\\
&\le C\sum_{\boldsymbol{\tau} \le \bsig} (\epsilon^{-|\boldsymbol{\tau}|} + \delta^{-|\boldsymbol{\tau}|}M_{\delta}) \epsilon^{-|\bsig|+|\boldsymbol{\tau}|-|\bal|}\\
&\le C'\epsilon^{-|\al|}(\epsilon^{-|\bsig|} + \delta^{-|\bsig|}M_{\delta}),
\end{align*}
completing the proof.
\end{proof}

\subsection{Integrals involving $f_{\infty}$}

The following proposition is a restatement of \cite[Lemma 5.1]{hearn_sob2} and is proven in that paper. The function $M_t$ defined in (\ref{eq:mt}) has a slightly different form to the corresponding function used in the paper but this does not affect the proof. Recall $f_{\infty}$ was defined in (\ref{eq:f_inf_p}).

\begin{prop}
\label{prop:ff}
Given $R >0$, there exists $C$ such that
\begin{equation}
\label{eq:ff_int}
\int_{\real^{3N-3}} f_{\infty}(x,\bold{\hat{x}}; R) f_{\infty}(y,\bold{\hat{x}};R)\, d\bold{\hat{x}} \le C \norm{\rho}^{1/2}_{L^{1}(B(x, 2R))} \norm{\rho}^{1/2}_{L^{1}(B(y, 2R))} 
\end{equation}
for all $x,y \in \real^3$. In addition, given $G \in L^1(\real^3)$ there exists $C$, independent of $G$, such that
\begin{multline}
\label{eq:gff_int}
\int_{\real^{3N-3}} \big(|G(x_j - x_k)| + |G(z- x_k)| + |G(x_j)|\big) f_{\infty}(x,\bold{\hat{x}}; R) f_{\infty}(y,\bold{\hat{x}};R)\, d\bold{\hat{x}} \\\le C\norm{G}_{L^1(\real^3)} \norm{\rho}^{1/2}_{L^{1}(B(x, 2R))} \norm{\rho}^{1/2}_{L^{1}(B(y, 2R))} 
\end{multline}
for all $x,y, z \in \real^3$, and $j, k =2, \dots, N$, $j \ne k$. In particular, for any $t>0$ there exists $C$, independent of $t$, such that
\begin{equation}
\label{eq:mt_int}
\int_{\real^{3N-3}} M_{t}(x,y,\bold{\hat{x}}) f_{\infty}(x,\bold{\hat{x}}; R) f_{\infty}(y,\bold{\hat{x}};R)\, d\bold{\hat{x}} \le C t^{3} \norm{\rho}^{1/2}_{L^{1}(B(x, 2R))} \norm{\rho}^{1/2}_{L^{1}(B(y, 2R))} 
\end{equation}
for all $x,y \in \real^3$.
\end{prop}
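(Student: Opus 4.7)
The plan is to reduce all three inequalities to the first one, and to prove the first by an interior elliptic regularity estimate for $\psi$ combined with Cauchy--Schwarz in the $\hat{\bx}$-variable. The starting observation is that $\psi$ is an eigenfunction of $H$ with the bounded-coefficient elliptic operator obtained after absorbing $V$ (for instance via the Jastrow factor $\phi=e^{-F}\psi$ of (\ref{phi_def})--(\ref{phi_pde_eq})), so Theorem \ref{thm:c1} gives, for each $R>0$,
\begin{equation*}
\|\psi\|_{C^{1}(\overline{B(\bx,R)})}^{2}\le C\,\|\psi\|_{L^{2}(B(\bx,2R))}^{2}
\end{equation*}
for every $\bx\in\R^{3N}$, with $C$ depending only on $N,Z,E,R$. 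In particular $f_{\infty}(\bx;R)^{2}\le C\int_{B(\bx,2R)}|\psi(\by)|^{2}\,d\by$.

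For the first inequality, I would apply Cauchy--Schwarz in $\hat{\bx}$ to factor the integral into the product of $(\int f_{\infty}(x,\hat{\bx};R)^{2}\,d\hat{\bx})^{1/2}$ and $(\int f_{\infty}(y,\hat{\bx};R)^{2}\,d\hat{\bx})^{1/2}$, then plug in the pointwise bound above. After interchanging the order of integration in the $\hat{\bx}$- and $\by$-variables, the inner integral becomes, for the $x$-factor,
\begin{equation*}
\int_{\R^{3N-3}}\int_{B((x,\hat{\bx}),2R)}|\psi(\by)|^{2}\,d\by\,d\hat{\bx}\le C\int_{B(x,2R)}\rho(y_{1})\,dy_{1}=C\|\rho\|_{L^{1}(B(x,2R))},
\end{equation*}
using that the slice of the $3N$-ball $B((x,\hat{\bx}),2R)$ in the $y_{1}$-coordinate is contained in $B(x,2R)$, and that $\int|\psi|^{2}\,d\hat{\by}=\rho$ by (\ref{eq:dens}). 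The same computation applies to the $y$-factor, yielding (\ref{eq:ff_int}).

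For the inequality (\ref{eq:gff_int}), I would again use Cauchy--Schwarz (after bringing $|G|$ inside) to reduce to bounding $\int|G(\,\cdot\,)|\,f_{\infty}(x,\hat{\bx};R)^{2}\,d\hat{\bx}$ and the analogous $y$-integral. The crucial point is that for each of the three shapes of argument the variable appearing in $G$ can be isolated by a translation: for $G(x_{j})$ one integrates out $x_{j}$ directly, for $G(z-x_{k})$ one shifts $x_{k}\mapsto z-x_{k}$, and for $G(x_{j}-x_{k})$ one changes variables to $(x_{j}-x_{k},x_{k})$. In each case the $|G|$-factor integrates against Lebesgue measure on $\R^{3}$ to give $\|G\|_{L^{1}(\R^{3})}$, while Fubini on the remaining $3N-6$ variables (combined with the pointwise estimate from elliptic regularity exactly as in the first step) yields $\|\rho\|_{L^{1}(B(x,2R))}$, as before; the only slight care needed is that $f_{\infty}(x,\hat{\bx};R)^{2}$ does not lose translation invariance after the change of variables, which it does not since the elliptic estimate is uniform in $\bx$. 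Finally, (\ref{eq:mt_int}) is an immediate consequence of (\ref{eq:gff_int}) applied to the indicator functions $G(w)=\mathbbm{1}_{\{(4N)^{-1}t<|w|<(2N)^{-1}t\}}$, whose $L^{1}(\R^{3})$-norm is bounded by $Ct^{3}$; summing the finitely many terms in (\ref{eq:mt}) gives the factor $t^{3}$.

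The only real obstacle is the elliptic regularity step that converts the $L^{\infty}$-norms inside $f_{\infty}$ into $L^{2}$-integrals of $|\psi|^{2}$ with a constant uniform in the base point $\bx$; this is exactly what Theorem \ref{thm:c1} (applied to $\phi=e^{-F}\psi$ and then transferred back to $\psi$ via (\ref{eq:phi_psi})) provides, and this is the content already established in \cite[Lemma 5.1]{hearn_sob2}. The slightly different normalisation in the definition (\ref{eq:mt}) of $M_{t}$ only alters the shell radii by harmless multiplicative constants, so the volume bound $\|G\|_{L^{1}}\le Ct^{3}$ and therefore the whole argument go through unchanged.
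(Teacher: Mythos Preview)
Your proposal is correct and follows the standard approach. Note that the paper itself does not prove this proposition: it states that Proposition~\ref{prop:ff} is a restatement of \cite[Lemma~5.1]{hearn_sob2} and refers to that paper for the proof, remarking only that the slightly different form of $M_t$ does not affect the argument. Your sketch---Cauchy--Schwarz in $\hat{\bx}$, the uniform interior estimate $f_{\infty}(\bx;R)^{2}\le C\|\psi\|_{L^{2}(B(\bx,2R))}^{2}$ from Theorem~\ref{thm:c1} applied to $\phi=e^{-F}\psi$, then Fubini to collapse to $\|\rho\|_{L^{1}(B(\cdot,2R))}$---is exactly the argument of \cite[Lemma~5.1]{hearn_sob2}, and your reduction of (\ref{eq:mt_int}) to (\ref{eq:gff_int}) via $G=\mathbbm{1}_{\{(4N)^{-1}t<|\cdot|<(2N)^{-1}t\}}$ with $\|G\|_{L^{1}}\le Ct^{3}$ is precisely how the third inequality follows.

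One small remark on presentation: in the step for (\ref{eq:gff_int}) it is cleaner to perform the Fubini swap \emph{after} the elliptic estimate (so that the change of variables $u=x_j-x_k$ or $u=z-x_k$ is done inside the inner integral $\int_{B(\hat{\by},2R)}|G(\cdot)|\,d\hat{\bx}$ rather than on $f_{\infty}$ itself); your phrase about ``translation invariance of $f_{\infty}$'' is slightly imprecise, but the underlying computation is correct.
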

%


Recall from (\ref{eq:f_inf2}) that $f_{\infty}(\,\cdot\,)$ was, for convenience, defined as $f_{\infty}(\,\cdot\,; R)$ with $R=1/2$.
\begin{lem}
\label{lem:lam_b}
Let $\Phi = \Phi_{\delta,\epsilon}$ be a cutoff of the form (\ref{eq:pphi_def}), and let $j,k,r,s$ be as in (\ref{eq:lam_b3}) and (\ref{eq:lam_b4}). Then for any $b = b_1+b_2+b_3$, $b\ne 3$, $b_1,b_2,b_3 \ge 0$ there exists $C$, independent of $x,y$ and $\delta,\epsilon$, such that
\begin{multline}
\label{eq:lam_b5}
\int_{\textnormal{supp}\,\Phi(x,y,\,\cdot\,)} |x-x_j|^{-b_1}|y-x_k|^{-b_2}|x_r-x_s|^{-b_3} f_{\infty}(x,\bold{\hat{x}}) f_{\infty}(y,\bold{\hat{x}}) \,d\bold{\hat{x}} \\\le
C(1+\delta^{\min\{0, 3-b\}}) \norm{\rho}^{1/2}_{L^1(B(x,1))}\norm{\rho}^{1/2}_{L^1(B(y,1))}.
\end{multline}

\end{lem}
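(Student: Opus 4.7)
The strategy is to combine the pointwise lower bounds on the three distances coming from the support of $\Phi$ (Lemma \ref{lem:support}) with a H\"older decomposition that reduces the three-weight integrand to three single-weight integrals, each of which is then controlled directly by Proposition \ref{prop:ff}.

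I would begin by recording the key support information: by (\ref{eq:lam_b3})--(\ref{eq:lam_b4}) of Lemma \ref{lem:support}, on $\supp \Phi(x,y,\,\cdot\,)$ each of the three distances $|x-x_j|$, $|y-x_k|$, $|x_r-x_s|$ is bounded below by $c\delta$ with $c := (4N)^{-1}$. Set $z_1 = x - x_j$, $z_2 = y - x_k$, $z_3 = x_r - x_s$. The case $b = 0$ is immediate from (\ref{eq:ff_int}); otherwise, taking the product only over those indices $i$ with $b_i > 0$ and applying H\"older's inequality with exponents $p_i = b/b_i$ (which satisfy $\sum_i 1/p_i = \sum_i b_i/b = 1$), distributing the common factor as $f_{\infty}(x,\bold{\hat{x}}) f_{\infty}(y,\bold{\hat{x}}) = \prod_i \bigl(f_{\infty}(x,\bold{\hat{x}}) f_{\infty}(y,\bold{\hat{x}})\bigr)^{b_i/b}$, one obtains
\begin{equation*}
\int_{\supp \Phi(x,y,\,\cdot\,)} \prod_i |z_i|^{-b_i}\, f_{\infty}(x,\bold{\hat{x}}) f_{\infty}(y,\bold{\hat{x}}) \,d\bold{\hat{x}} \le \prod_i J_i^{b_i/b},
\end{equation*}
where $J_i := \int_{\supp\Phi(x,y,\,\cdot\,)} |z_i|^{-b}\, f_{\infty}(x,\bold{\hat{x}}) f_{\infty}(y,\bold{\hat{x}}) \,d\bold{\hat{x}}$.

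The remaining task is to bound each $J_i$ uniformly by $C\bigl(1+\delta^{\min\{0,3-b\}}\bigr) \norm{\rho}^{1/2}_{L^1(B(x,1))} \norm{\rho}^{1/2}_{L^1(B(y,1))}$. Using the support lower bound $|z_i| \ge c\delta$, I would split $|z_i|^{-b} = |z_i|^{-b}\, \1_{\{c\delta \le |z_i| \le 1\}} + |z_i|^{-b}\, \1_{\{|z_i| > 1\}}$. The first piece is $G(z_i)$ for $G(z) = |z|^{-b}\, \1_{\{c\delta \le |z| \le 1\}} \in L^1(\real^3)$, with
\begin{equation*}
\|G\|_{L^1(\real^3)} = 4\pi \int_{c\delta}^{1} r^{2-b}\,dr \le C\bigl(1 + \delta^{\min\{0,\,3-b\}}\bigr), \qquad b \ne 3,
\end{equation*}
so (\ref{eq:gff_int}) of Proposition \ref{prop:ff} applies directly---each $z_i$ is precisely of one of the three forms $x - x_k$, $y - x_k$, or $x_j - x_k$ considered there. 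The second piece, bounded pointwise by $1$, is handled by (\ref{eq:ff_int}). Combining the two and then substituting back into the H\"older estimate (and collapsing the product via $\sum_i b_i/b = 1$) delivers (\ref{eq:lam_b5}).

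I do not anticipate any real obstacle. The hypothesis $b \ne 3$ enters only to avoid the logarithmic divergence of $\|G\|_{L^1}$ at $b = 3$, and the H\"older splitting is the natural device needed to reduce a three-weight estimate to the single-weight content of Proposition \ref{prop:ff}.
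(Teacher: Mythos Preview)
Your proposal is correct and follows essentially the same route as the paper. The only difference is cosmetic: to reduce the three-weight integrand to single-weight integrals, the paper applies Young's inequality for products pointwise,
\[
|z_1|^{-b_1}|z_2|^{-b_2}|z_3|^{-b_3} \le \tfrac{1}{b}\bigl(b_1|z_1|^{-b} + b_2|z_2|^{-b} + b_3|z_3|^{-b}\bigr),
\]
and then integrates term by term, whereas you use H\"older with exponents $p_i = b/b_i$ to get a product of single-weight integrals. These are dual devices and lead to the same reduction; the subsequent near/far splitting and the appeal to (\ref{eq:ff_int})--(\ref{eq:gff_int}) are identical to the paper's argument.
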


\begin{proof}
By Young's inequality,
\begin{align*}
|x-x_j|^{-b_1}|y-x_k|^{-b_2}|x_r-x_s|^{-b_3} \le \frac{1}{b}\big(b_1|x-x_j|^{-b} + b_2|y-x_k|^{-b} + b_3|x_r-x_s|^{-b} \big).
\end{align*}
It therefore suffices to prove (\ref{eq:lam_b5}) where only one of $b_1, b_2$ and $b_3$ is non-zero. We consider the case where this is $b_1$, the other cases are similar. We split the integral into one where $|x-x_j| \ge 1$ and one where $|x-x_j| < 1$. In the former case, we can bound using (\ref{eq:ff_int}) of Proposition \ref{prop:ff}, whereas in the latter we use (\ref{eq:gff_int}) of the same proposition with $G(x-x_j) = \mathds{1}_{\{(4N)^{-1}\delta < |x-x_j| < 1\}}(x-x_j)|x-x_j|^{-b}$. The lower bound can be included in the indicator function due to (\ref{eq:lam_b3}) of Lemma \ref{lem:support}.

\end{proof}

\section{Proof of Theorem \ref{thm:5}}
\label{chpt:4}

We begin by introducing auxiliary functions related to the density matrix, $\g(x,y)$, defined in (\ref{eq:gamma}). For $l, m \in \naturals_0^3$ with $|l|, |m| \le 1$ define,
\begin{equation}
\label{eq:gam_lm}
\gamma_{l,m}(x,y) = \int_{\real^{3N-3}} \partial_x^{l} \psi(x, \bold{\hat{x}}) \overline{\partial_y^m \psi(y, \bold{\hat{x}})}\,d\bold{\hat{x}}.
\end{equation}
In this notation, it is clear that $\g = \g_{0,0}$. By differentiation under the integral, we have
\begin{align}
\label{eq:du_g}
\partial^{l}_x \partial^m_y \g(x,y) &= \g_{l,m}(x,y).
\end{align}
Furthermore, for any cutoff $\Phi$ of the form (\ref{eq:pphi_def}), we set
\begin{align*}
\gamma_{l,m}(x,y;\Phi) &= \int_{\real^{3N-3}} \partial_x^l \psi(x, \bold{\hat{x}}) \overline{\partial_y^m \psi(y, \bold{\hat{x}})} \Phi(x,y,\bold{\hat{x}}) \,d\bold{\hat{x}},
\end{align*}
and define $\gamma(\,\cdot\,;\Phi) = \gamma_{0,0}(\,\cdot\,;\Phi)$.

To produce the required bound for derivatives $\partial_x^{\al}\partial_y^{\beta}\g(x,y)$ for $|\al|+|\beta|=5$ we consider separately the cases where $|\al|,|\beta|\ge 1$ and where at least one of $\al, \beta$ is zero. In the former case, we will immediately differentiate under the integral once in both $x$ and $y$. This is because Proposition \ref{prop:glm}, see below, is not affected by whether $l$ and $m$ in the proposition are zero or represent single derivatives. This itself is a consequence of the fact that $\psi$ and $\nabla\psi$ are both bounded functions, the function $\psi$ only becoming singular after two derivatives.

When one of $\al, \beta$ is zero, we can rewrite the derivatives into an expression involving $(x+y)$-derivatives as follows. Without loss, suppose $\beta = 0$ and hence $|\al|=5$. Let $l,m \in \naturals_0^3$, $|l|=|m|=1$ be such that $l+m \le \al$. Then we can use
$\partial^{l+m}_{x+y} = (\partial_x^l + \partial_y^l)(\partial_x^m + \partial_y^m)$, see (\ref{eq:dd}), to write the identity
\begin{align}
\label{eq:2dd}
\partial_x^{l+m} = \partial^{l+m}_{x+y} - \partial_y^{l+m} - \partial_x^l\partial_y^m - \partial_x^m\partial_y^l,
\end{align}
which will hold on sufficiently smooth functions. We note that rearrangement would give us the corresponding expression for $\partial_y^{l+m}$. The final three terms above introduce $y$-derivatives, and since we already have the three remaining derivatives of $\al$, we can differentiate once in both $x$ and $y$, as in the previous case. The first term in the equality above involves an $(x+y)$-derivative. It will be shown that such derivatives do not contribute to the singularity at the diagonal for the density matrix.

\subsection{Differentiating the density matrix - general derivatives}

The following lemma gives bounds to differentiating $\g_{l,m}(\,\cdot\,; \Phi)$. We will then use that the cutoffs form a partition of unity to obtain bounds to derivatives of $\g_{l,m}$. As before, for a given $\Phi$ we will set $P = P(\Phi), S=S(\Phi), Q=Q(\Phi)$ as the corresponding clusters as defined in the previous section.

\begin{lem}
\label{lem:glm}
Let $\Phi = \Phi_{\delta,\epsilon}$ be a cutoff of the form (\ref{eq:pphi_def}) with $P^* \cap S^* = \emptyset$. For all $l,m \in \naturals_0^3$, $|l|,|m| \le 1$, and all $\mu_1, \mu_2, \mu_3 \in \naturals_0^3$ which obey either
\begin{enumerate}[label=(\roman*)]
\item $|\mu_2|+|\mu_3| \ne 3$ with $\mu_1$ arbitrary, or
\item $|\mu_2|+|\mu_3| = 3$ and $\mu_1 = 0$,
\end{enumerate}
there exists a constant $C$ such that for all $0<2\delta \le \epsilon \le 1$ we have
\begin{multline}
\label{eq:dglm_int2}
|\partial_{x+y}^{\mu_1}\partial_x^{\mu_2}\partial_y^{\mu_3} \gamma_{l,m}(x,y; \Phi_{\delta,\epsilon})| \\\le C \epsilon^{-|\mu_1|-|\mu_2|-|\mu_3|}\delta^{\min\{0,\,3- |\mu_2|-|\mu_3|\}}
\norm{\rho}^{1/2}_{L^{1}(B(x, 1))} \norm{\rho}^{1/2}_{L^{1}(B(y, 1))}
\end{multline}
for all $|x|, |y| \ge \epsilon$ and $|x-y| \le 2\delta$.
\end{lem}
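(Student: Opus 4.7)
The plan is to move all derivatives under the integral, rewrite each plain partial derivative as a cluster derivative minus an error consisting of $\partial_{x_j}$--derivatives for $j$ in the corresponding cluster ($P$, $S$ or $Q$), eliminate these errors by integration by parts in $\hat\bx$, and finally apply the pointwise bounds of Theorem \ref{thm:ddpsi} together with the cutoff bounds of Lemmas \ref{dqpq_lem}, \ref{lem:dpphi_prime} and \ref{lem:dphi}, integrating via Proposition \ref{prop:ff} and Lemma \ref{lem:lam_b}.

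Concretely, for each first-order partial derivative appearing in $\partial^{\mu_1}_{x+y}\partial^{\mu_2}_x\partial^{\mu_3}_y$, I would substitute (for $|\tau|=1$)
\begin{equation*}
\partial_x^\tau = D_{x,P}^\tau - \sum_{j\in P^*}\partial_{x_j}^\tau,\qquad
\partial_y^\tau = D_{y,S}^\tau - \sum_{j\in S^*}\partial_{x_j}^\tau,\qquad
\partial_{x+y}^\tau = D_{x,y,Q}^\tau - \sum_{j\in Q^*}\partial_{x_j}^\tau,
\end{equation*}
iterate to the prescribed orders and expand by Leibniz. Each surviving $\partial_{x_j}$--error is then moved off the $\psi$--factor it originally hit by integration by parts in $\hat\bx$; boundary terms vanish because $\psi\in H^2(\real^{3N})$. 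The cleaned-up integrand is a finite sum of products
\begin{equation*}
\bigl(D_{\bold{P}}^{\bal}\partial_x^l\psi\bigr)(x,\hat\bx)\,\overline{\bigl(D_{\bold{P}}^{\bbeta}\partial_y^m\psi\bigr)(y,\hat\bx)}\,\partial^{\bsig}\Phi'(x,y,\hat\bx),
\end{equation*}
where $\Phi'$ is either $\Phi$ or one of the partial products coming from Lemma \ref{lem:dphi}.

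Theorem \ref{thm:ddpsi}, applied with $k=|l|$ or $k=|m|$, bounds each $\psi$--factor pointwise by $f_\infty$ times negative powers of the relevant $\lambda_{P_j}$. On the support of $\Phi$, the hypotheses $|x|,|y|\ge\epsilon$, $|x-y|\le 2\delta$, $P^*\cap S^*=\emptyset$ together with Lemma \ref{lem:support} give $\lambda_P(x,\hat\bx),\lambda_S(y,\hat\bx)\gtrsim \delta$ and $\lambda_Q(\cdot)\gtrsim\epsilon$, so the $\lambda$--weights turn into explicit powers of $\delta$ or $\epsilon$. Cutoff derivatives are bounded via Lemma \ref{lem:dpphi_prime}, or, when $D_{x,y,Q}$ acts on $\Phi$, via Lemma \ref{lem:dphi}; the decisive input of Lemma \ref{dqpq_lem} ensures that each $(x+y)$--derivative costs only $\epsilon^{-1}$ and never $\delta^{-1}$. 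An ordinary $x$-- or $y$--derivative contributes at worst $\epsilon^{-1}+\delta^{-1}M_\delta$, but each indicator inside $M_\delta$ confines $\hat\bx$ to a set of volume $O(\delta^3)$, so Proposition \ref{prop:ff} and Lemma \ref{lem:lam_b} produce the factor $\|\rho\|^{1/2}_{L^1(B(x,1))}\|\rho\|^{1/2}_{L^1(B(y,1))}$ while absorbing up to three of the pointwise $\delta^{-1}$--factors into the $\delta^3$ volumes. Collecting everything yields $\delta^{\min\{0,\,3-|\mu_2|-|\mu_3|\}}\epsilon^{-|\mu_1|-|\mu_2|-|\mu_3|}$, which is the bound (\ref{eq:dglm_int2}).

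The main obstacle is the bookkeeping of the Leibniz expansion and the subsequent integration by parts: one must verify that every surviving term fits the schematic product above and, in particular, that every derivative originating from $\partial_{x+y}$ is funnelled into a $D_{x,y,Q}$--piece on which Lemma \ref{dqpq_lem} bites, rather than into a $\partial_x$ or $\partial_y$ landing on a $\delta$--scale cutoff. The dichotomy (i)--(ii) in the hypotheses singles out precisely the combinations of $\mu_1,\mu_2,\mu_3$ for which the above absorption of $\delta^{-1}$ into $\delta^3$ succeeds and no $\log$--type borderline has to be faced, and this is exactly the range that will be required in the proof of Theorem \ref{thm:5}.
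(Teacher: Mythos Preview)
Your strategy of converting partial derivatives to cluster derivatives under the integral is correct, and essentially equivalent to the paper's Lemma \ref{lem:dglm}. (The paper does this more cleanly by a translational change of variables $\hat\bx\mapsto\hat\bx+\hat{\bold a}$, with $\hat{\bold a}$ built from $x$ and $y$ according to $P^*$ and $S^*$, so that $\partial_x$ acting under the integral becomes $D_P$ directly by the chain rule; no subtract-and-integrate-by-parts bookkeeping is needed. Your scheme would arrive at the same terms but with more work.)

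The real gap is your final paragraph. You write that the dichotomy (i)--(ii) is arranged so that ``no $\log$-type borderline has to be faced'', but this is backwards: case (ii), $|\mu_2|+|\mu_3|=3$, is \emph{precisely} the borderline. In the Leibniz expansion there is always a term with all $\sigma_i=0$, i.e.\ all three $x/y$-derivatives land on the $\psi$-factors and none on $\Phi$. Then no $M_\delta$ appears, and one is left with $\int_{\mathrm{supp}\,\Phi}\lambda(x,y,\hat\bx)^{-3}f_\infty(x,\hat\bx)f_\infty(y,\hat\bx)\,d\hat\bx$. Lemma \ref{lem:lam_b} (and the paper's Lemma \ref{lem:lam_b2}) explicitly exclude $b=3$; the crude pointwise bound $\lambda\gtrsim\delta$ gives only $\delta^{-3}$, which is far too weak for \eqref{eq:dglm_int2}.

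The paper closes this case with the separate Lemma \ref{lem:ab3}, whose proof is substantial and uses ingredients you have not invoked: the refined decomposition $D^{\bal}_{\bold P}\nabla\psi=G^{\bal}_{\bold P}+\psi\,D^{\bal}_{\bold P}\nabla F_c$ from Theorem \ref{thm:ddpsi} with the improved $\mu_{\bal}^b$-bound \eqref{eq:gp_bnd} on $G^{\bal}_{\bold P}$, followed by explicit integration-by-parts manipulations of the residual $F^{(ee)}_c$-terms (Lemmas \ref{lem:ff_int} and \ref{lem:ff_holder}). Without this machinery the proof cannot be completed.
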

The proof of this lemma is quite long so is postponed until after the proof of Theorem \ref{thm:5}. Here, as before, $m(x,y) = \min\{1, |x|, |y| \}$.

\begin{prop}
\label{prop:glm}
Take any $l,m \in \naturals_0^3$, $|l|,|m| \le 1$, and any $\mu_1, \mu_2, \mu_3$ as in Lemma \ref{lem:glm}. Then there exists $C$ such that
\begin{equation}
\label{eq:prop_glm}
|\partial_{x+y}^{\mu_1}\partial_x^{\mu_2}\partial_y^{\mu_3}\gamma_{l,m}(x,y)| \le Cm(x,y)^{-|\mu_1|-|\mu_2|-|\mu_3|} |x-y|^{\min\{0,3-|\mu_2|-|\mu_3|\}}
\norm{\rho}^{1/2}_{L^{1}(B(x, 1))} \norm{\rho}^{1/2}_{L^{1}(B(y, 1))}
\end{equation}
for all $x,y \in \real^3$ obeying $0<|x-y| \le m(x,y)/2$.
\end{prop}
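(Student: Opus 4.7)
The plan is to reduce the statement to Lemma \ref{lem:glm} by invoking the partition of unity of Lemma \ref{pou} with scales $\delta,\epsilon$ adapted to the target point. First I would fix $(x,y)$ with $0<|x-y|\le m(x,y)/2$ and set $\delta=|x-y|/2$ and $\epsilon=m(x,y)$. These choices satisfy $0<2\delta\le \epsilon\le 1$, since $|x-y|\le m(x,y)/2\le m(x,y)=\epsilon\le 1$, so Lemma \ref{pou} supplies a collection $\{\Phi^{(j)}\}_{j=1}^J$ of cutoffs of the form (\ref{eq:pphi_def}), with $J$ depending only on $N$, for which $\sum_{j=1}^J\Phi^{(j)}\equiv 1$.

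Inserting this identity inside the integrand of $\gamma_{l,m}$ and distributing $\partial_{x+y}^{\mu_1}\partial_x^{\mu_2}\partial_y^{\mu_3}$ across the finite sum would give
\begin{equation*}
\partial_{x+y}^{\mu_1}\partial_x^{\mu_2}\partial_y^{\mu_3}\gamma_{l,m}(x,y)=\sum_{j=1}^J\partial_{x+y}^{\mu_1}\partial_x^{\mu_2}\partial_y^{\mu_3}\gamma_{l,m}(x,y;\Phi^{(j)}).
\end{equation*}
I would then split the index set into $J_1=\{j:P(\Phi^{(j)})^*\cap S(\Phi^{(j)})^*=\emptyset\}$ and $J_2$, its complement. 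For each $j\in J_2$, Lemma \ref{particle_ineqs} gives $\Phi^{(j)}(x',y',\hbx)=0$ whenever $|x'-y'|\ge\delta$; since $|x-y|=2\delta>\delta$ at the target point, this inequality persists in an open neighborhood of $(x,y)$, hence $\gamma_{l,m}(\,\cdot\,,\,\cdot\,;\Phi^{(j)})$ and all its derivatives vanish at $(x,y)$. For each $j\in J_1$ the hypotheses of Lemma \ref{lem:glm} hold at $(x,y)$: we have $|x|,|y|\ge m(x,y)=\epsilon$ and $|x-y|=2\delta$ (so $|x-y|\le 2\delta$ with equality). The lemma yields
\begin{equation*}
|\partial_{x+y}^{\mu_1}\partial_x^{\mu_2}\partial_y^{\mu_3}\gamma_{l,m}(x,y;\Phi^{(j)})|\le C\epsilon^{-|\mu_1|-|\mu_2|-|\mu_3|}\delta^{\min\{0,3-|\mu_2|-|\mu_3|\}}\norm{\rho}^{1/2}_{L^1(B(x,1))}\norm{\rho}^{1/2}_{L^1(B(y,1))}.
\end{equation*}
Summing over $J_1$, substituting $\epsilon=m(x,y)$ and $\delta=|x-y|/2$, and absorbing the bounded number of constants into $C$ would produce (\ref{eq:prop_glm}).

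The substantive work is contained in Lemma \ref{lem:glm}, so at this level the main point worth flagging is the dual role played by $\delta$. It must be strictly less than $|x-y|$ so that the $J_2$ cutoffs vanish on an open neighborhood of $(x,y)$, not only at the point itself, which is what lets their derivatives be discarded; at the same time $|x-y|\le 2\delta$ is needed for Lemma \ref{lem:glm} to apply. The value $\delta=|x-y|/2$ sits precisely at the boundary of both constraints and is essentially forced by these considerations; every other step is a mechanical application of the two cited lemmas.
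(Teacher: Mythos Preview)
Your proposal is correct and follows essentially the same route as the paper: fix the target point, choose $\delta=|x-y|/2$, invoke the partition of unity of Lemma~\ref{pou}, discard the cutoffs with $P^*\cap S^*\ne\emptyset$ via Lemma~\ref{particle_ineqs}, and apply Lemma~\ref{lem:glm} to the remaining terms. The only cosmetic difference is that the paper takes $\epsilon=m(x,y)/2$ rather than your $\epsilon=m(x,y)$; both satisfy $0<2\delta\le\epsilon\le1$ and $|x|,|y|\ge\epsilon$, so either works.
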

\begin{proof}
Firstly, by Lemma \ref{pou} there exists a finite collection of cutoffs, $\Phi^{(j)}$, $j=1,\dots,J$, for some $J$, such that
\begin{equation}
\label{eq:g_pou}
\gamma_{l,m} = \sum_{j=1}^J \gamma_{l,m}\big(\,\cdot\,;\Phi_{\delta,\epsilon}^{(j)}\big)
\end{equation}
holds everywhere for all choices of $0< 2\delta \le \epsilon$.

Fix any $x,y$ such that $0<|x-y| \le m(x,y)/2$. Then set $\delta = |x-y|/2$ and $\epsilon = m(x,y)/2$. Index by $j_k$, where $k=1,\dots, K$ for some $K \le J$, the $j$'s such that $P(\Phi^{(j)})^* \cap S(\Phi^{(j)})^* = \emptyset$. The $j_k$'s do not depend on the choice of $\delta$ and $\epsilon$. By Lemma \ref{particle_ineqs},
\begin{equation}
\gamma_{l,m}(x',y') = \sum_{k=1}^{K} \gamma_{l,m}\big(x',y';\Phi_{\delta,\epsilon}^{(j_k)}\big)
\end{equation}
holds for all $|x'-y'| \ge |x-y|/2$. Then by Lemma \ref{lem:glm}, for each $k=1, \dots, K$, we have
\begin{align*}
&|\partial_{x'+y'}^{\mu_1}\partial_{x'}^{\mu_2}\partial_{y'}^{\mu_3}\gamma_{l,m}(x',y')| \le \sum_{k=1}^K\big|\partial_{x'+y'}^{\mu_1}\partial_{x'}^{\mu_2}\partial_{y'}^{\mu_3}\gamma_{l,m}\big(x',y';\Phi_{\delta,\epsilon}^{(j_k)}\big)\big|\\ 
&\qquad\le Cm(x,y)^{-|\mu_1|-|\mu_2|-|\mu_3|}|x-y|^{\min\{0,3-|\mu_2|-|\mu_3|\}}\norm{\rho}^{1/2}_{L^{1}(B(x', 1))} \norm{\rho}^{1/2}_{L^{1}(B(y', 1))}
\end{align*}
for all $x', y'$ such that $|x-y|/2 < |x'-y'| \le |x-y|$ and $|x'|,|y'| \ge m(x,y)/2$. In particular, it holds for $x'=x$ and $y'=y$. The constant $C$ does not depend on the choice of $\delta$ and $\epsilon$, therefore the bound holds for all required $x$ and $y$.

\end{proof}

The proof of our main theorem is an immediate consequence of this proposition.
\begin{proof}[Proof of Theorem \ref{thm:5}.]
If $|\al|,|\beta| \ge 1$ then take any $l \le \al$ and $m \le \beta$ with $|l|=|m|=1$. We can then write
\begin{align*}
\partial_x^{\al} \partial_y^{\beta}\gamma = \partial_x^{\al-l} \partial_y^{\beta-m}\gamma_{l,m}.
\end{align*}
It is then straightforward to obtain the required bound by Proposition \ref{prop:glm} with $\mu_1 = 0$, $\mu_2 = \al - l$ and $\mu_3 = \beta - m$.

We now consider the case where either $\al$ or $\beta$ is zero. Without loss, assume $\beta = 0$. Let $l \le \al$ and $m \le \al-l$ obey $|l|=|m|=1$. We then have by (\ref{eq:2dd}),
%
\begin{align*}
\partial^{\al}_x\gamma &= \partial^{l+m}_{x+y}\partial_x^{\al-l-m}\gamma - \partial_x^{\al-l-m}\partial_y^{l+m}\gamma - \partial_x^{\al-m}\partial_y^m\gamma - \partial_x^{\al-l}\partial_y^l\gamma
\end{align*}
It suffices to bound each term separately. The final three terms have at least one derivative on each $x$ and $y$, hence we be treated as before. Next, the first term can be rewritten as $\partial^{l+m}_{x+y}\partial_x^{\al-l-m-r}\gamma_{r,0}$ for some $r \le \al-l-m$ with $|r|=1$. Use of Proposition \ref{prop:glm} with $\mu_1 = l+m$, $\mu_2 = \al-l-m-r$ and $\mu_3 = 0$ then gives the required bound.
\end{proof}

\subsection{Proof of Lemma \ref{lem:glm}}
The cutoffs introduced in Section \ref{chpt:3} facilitate taking simultaneous derivatives in $x$, $y$ and $x+y$. The strategy in proving Lemma \ref{lem:glm} will be to turn partial derivatives $\g_{l,m}(\,\cdot\,;\Phi)$ into cluster derivatives under the integral. Theorem \ref{thm:ddpsi} then gives the required pointwise bounds for these cluster derivatives, and in fact distinguishes between derivatives of different clusters

Partial derivatives in $x$ and $y$ will produce cluster derivatives for clusters $P,S,P^*$ and $S^*$, whereas derivatives in $x+y$ will produce cluster derivatives for the cluster $Q$. We wish to separate the contributions from $x$- and $y$-derivatives and from ($x+y$)-derivatives. To do this it will be convenient to group together the distances $\lambda_P$ etc. which will later appear when Theorem \ref{thm:ddpsi} is applied to the differentiated density matrix. This is what we define now as the functions $\lambda$ and $\pi$.

First, we fix a cutoff $\Phi = \Phi_{\delta,\epsilon}$ to be used throughout this section, and let $P,S,Q$ be the corresponding clusters. We will assume that $P^* \cap S^* = \emptyset$ and $0<2\delta \le \epsilon \le 1$.

For any $x,y \in \real^3$ and $\bold{\hat{x}} \in \real^{3N-3}$, we define
\begin{align}
\label{eq:lam_xyx}
\lambda(x,y,\bold{\hat{x}}) &= \min\{\lambda_{P}(x,\bold{\hat{x}}),\, \lambda_{S^*}(x,\bold{\hat{x}}),\, \lambda_{P^*}(y,\bold{\hat{x}}),\, \lambda_{S}(y,\bold{\hat{x}}) \},\\
\label{eq:mu_xyx}
\pi(x,y,\bold{\hat{x}}) &= \min\{\lambda_{Q}(x,\bold{\hat{x}}),\, \lambda_{Q}(y,\bold{\hat{x}}) \}.
\end{align}
%
%
Recall that $1 \in P, S, Q$ by definition. Using (\ref{dp_def}) we find that
\begin{multline}
\label{eq:pi_alt}
\pi(x,y,\bold{\hat{x}}) = \min\{ 1,\,\, |x|,\,\,|y|,\,\, |x_j| : j \in Q^*,\,\, 2^{-1/2}|x - x_k| : k \in Q^c, \\2^{-1/2}|y - x_k| : k \in Q^c,\,\, 2^{-1/2}|x_j - x_k| : j \in Q^*, k \in Q^c \}.
\end{multline}
Using that $P^* \cap S^* = \emptyset$, we similarly find that
\begin{multline}
\label{eq:lam_alt}
\lambda(x,y,\bold{\hat{x}}) = \min\{ 1,\,\, |x|,\,\,|y|,\,\, |x_j| : j \in P^* \text{ or } S^*,\,\, 2^{-1/2}|x - x_k| : k \in P^c,\\ 
 2^{-1/2}|y - x_k| : k \in S^c,\,\, 2^{-1/2}|x_j - x_k| : j \in P^*, k \in P^c \text{ or } j \in S^*, k \in S^c\}. 
\end{multline}


\begin{lem}
\label{lem:lam_b2}
For any $x,y \in \real^3$ with $|x|, |y| \ge \epsilon$,
\begin{enumerate}[label=(\roman*)]
\item there exists $C$, independent of $\delta,\epsilon$ and $x,y,\bold{\hat{x}}$, such that when $\Phi(x,y,\hbx) \ne 0$,
\begin{align}
\label{eq:pi_bnd}
\pi(x, y, \bold{\hat{x}}) &\ge C \epsilon,\\
\label{eq:lam_bnd}
\lambda(x, y, \bold{\hat{x}}) &\ge C \delta,
\end{align}
\item for any $b \ge 0$ be such that $b \ne 3$, there exists $C$, independent of $\delta,\epsilon$ and $x,y$, such that 
\begin{multline}
\label{eq:lam_bnd2}
\int_{\textnormal{supp}\,\Phi(x,y,\,\cdot\,)} \lambda(x, y, \bold{\hat{x}})^{-b} f_{\infty}(x,\bold{\hat{x}}) f_{\infty}(y,\bold{\hat{x}}) \,d\bold{\hat{x}} \\\le C \big(\epsilon^{-b} + \delta^{\min\{0, 3-b \}}\big) \norm{\rho}^{1/2}_{L^1(B(x,1))}\norm{\rho}^{1/2}_{L^1(B(y,1))}.
\end{multline}
\end{enumerate}
\end{lem}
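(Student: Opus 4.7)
The plan is to prove (i) by inspecting each entry in the explicit min-formulas (\ref{eq:pi_alt}) and (\ref{eq:lam_alt}), and to prove (ii) by the elementary bound $\lambda^{-b} \le \sum_i a_i^{-b}$, where $(a_i)$ ranges over these same entries, followed by integrating each term separately.

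For part (i), every entry of $\pi$ is bounded below by a constant multiple of $\epsilon$: $1 \ge \epsilon$ since $\epsilon \le 1$; $|x|, |y| \ge \epsilon$ by hypothesis; $|x_j| > \epsilon/2$ for $j \in Q^*$ by (\ref{eq:lam_b2}) of Lemma \ref{particle_ineqs}; and the separation terms $|x - x_k|, |y - x_k|, |x_j - x_k|$ with $j \in Q^*, k \in Q^c$ all exceed $(4N)^{-1}\epsilon$ by (\ref{eq:lam_b1}) of Lemma \ref{lem:support}. The analogous argument for $\lambda \ge C\delta$ handles the separation terms through (\ref{eq:lam_b3}) and (\ref{eq:lam_b4}), and treats $|x_j|$ for $j \in P^*$ (resp.\ $S^*$) by combining $|x - x_j| < \delta/2$ (resp.\ $|y - x_j| < \delta/2$) from Lemma \ref{particle_ineqs} with $|x|, |y| \ge \epsilon \ge 2\delta$, yielding $|x_j| \ge \epsilon/2 \ge \delta$.

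For part (ii), the decomposition $\lambda^{-b} \le \sum_i a_i^{-b}$ splits the integrand into two families. The entries $1$, $|x|^{-b}$, $|y|^{-b}$, and $|x_j|^{-b}$ for $j \in P^* \cup S^*$ are each pointwise bounded on $\mathrm{supp}\,\Phi(x,y,\,\cdot\,)$ by a constant multiple of $\epsilon^{-b}$, thanks to the $\epsilon$-lower bounds established in (i); they therefore contribute at most $C\epsilon^{-b}\norm{\rho}^{1/2}_{L^1(B(x,1))}\norm{\rho}^{1/2}_{L^1(B(y,1))}$ via (\ref{eq:ff_int}) of Proposition \ref{prop:ff}. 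The remaining entries, namely $|x - x_k|^{-b}$ with $k \in P^c$, $|y - x_k|^{-b}$ with $k \in S^c$, and $|x_j - x_k|^{-b}$ with $j \in P^*, k \in P^c$ or $j \in S^*, k \in S^c$, are precisely the configurations covered by Lemma \ref{lem:lam_b} (applied with one of $b_1, b_2, b_3$ equal to $b$ and the others vanishing), and each contributes $C(1 + \delta^{\min\{0, 3-b\}})\norm{\rho}^{1/2}_{L^1(B(x,1))}\norm{\rho}^{1/2}_{L^1(B(y,1))}$. Summing and absorbing $1$ into $\epsilon^{-b}$ (valid as $\epsilon \le 1$ and $b \ge 0$) gives the claimed estimate.

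The argument is essentially bookkeeping, since the non-trivial bounds are all delegated to preceding lemmas. The one step requiring a brief derivation rather than a direct citation is the lower bound $|x_j| \ge \epsilon/2$ for $j \in P^* \cup S^*$, which is also the only place where the scale separation $\epsilon \ge 2\delta$ plays an essential role; once it is in hand, no further obstacle arises.
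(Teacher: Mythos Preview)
Your proof is correct and follows essentially the same approach as the paper: for (i), inspect each entry in the explicit min-formulas and invoke Lemmas \ref{lem:support} and \ref{particle_ineqs}; for (ii), bound $\lambda^{-b}$ by the sum of entries raised to $-b$, absorb the $1,|x|,|y|,|x_j|$ terms into $\epsilon^{-b}$, and apply Proposition \ref{prop:ff} and Lemma \ref{lem:lam_b}. One small shortcut the paper takes that you could also use: since $P^*,S^*\subset Q^*$, the bound $|x_j|>\epsilon/2$ for $j\in P^*\cup S^*$ follows directly from (\ref{eq:lam_b2}) of Lemma \ref{particle_ineqs}, without needing to rederive it from the $\delta$-closeness.
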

\begin{proof}
We can prove (i) using the expressions (\ref{eq:pi_alt}) and (\ref{eq:lam_alt}). The bound follows from Lemma \ref{lem:support} along with (\ref{eq:lam_b2}) of Lemma \ref{particle_ineqs}, since $P,S \subset Q$. For (ii) we use (\ref{eq:lam_alt}) to write
\begin{align*}
\lambda(x,y,\hbx)^{-b} \le C\big(\epsilon^{-b} + \sum_{k \in P^c}|x-x_k|^{-b} + \sum_{k \in S^c}|y-x_k|^{-b} + \sum_{\substack{j \in P^*,\, k \in P^c\, \text{or}\\ j \in S^*,\, k \in S^c}} |x_j - x_k|^{-b}\big)
\end{align*}
for some $C$ depending on $b$. The required inequality then follows from (\ref{eq:ff_int}) of Proposition \ref{prop:ff} for the first term above and from Lemma \ref{lem:lam_b} for the remaining terms.
\end{proof}

\begin{cor}
\label{cor:lam_b}
For any $x,y \in \real^3$ with $|x|, |y| \ge \epsilon$,
\begin{multline}
\label{eq:lam_int2}
\int_{\real^{3N-3}} \lambda(x, y, \bold{\hat{x}})^{-2} f_{\infty}(x,\bold{\hat{x}}) f_{\infty}(y,\bold{\hat{x}}) |\nabla \Phi(x,y,\bold{\hat{x}})| \,d\bold{\hat{x}} \\\le C \epsilon^{-3} \norm{\rho}^{1/2}_{L^1(B(x,1))}\norm{\rho}^{1/2}_{L^1(B(y,1))}
\end{multline}
for some $C$, independent of $\delta, \epsilon$ and $x,y$.
\end{cor}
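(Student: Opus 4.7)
\textbf{Proof plan for Corollary \ref{cor:lam_b}.} The plan is to split $|\nabla \Phi|$ using Lemma \ref{lem:dpphi_prime} with $|\bal| = 1$ applied to the full cutoff $\Phi$, which yields the pointwise bound
\begin{equation*}
|\nabla \Phi(x,y,\bold{\hat{x}})| \le C\bigl(\epsilon^{-1} + \delta^{-1} M_{\delta}(x,y,\bold{\hat{x}})\bigr).
\end{equation*}
I would then split the integral on the left-hand side of (\ref{eq:lam_int2}) into two pieces accordingly, denoted $I_1$ (coming from $\epsilon^{-1}$) and $I_2$ (coming from $\delta^{-1}M_\delta$), and bound each by $C\epsilon^{-3}$ times the product of density norms.

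For $I_1$, the plan is to factor out $\epsilon^{-1}$ and apply the bound (\ref{eq:lam_bnd2}) of Lemma \ref{lem:lam_b2} with $b = 2$. Since $b = 2 \ne 3$ and $\min\{0, 3-2\} = 0$, this gives
\begin{equation*}
I_1 \le C\epsilon^{-1}\bigl(\epsilon^{-2} + 1\bigr)\|\rho\|^{1/2}_{L^1(B(x,1))}\|\rho\|^{1/2}_{L^1(B(y,1))} \le C\epsilon^{-3}\|\rho\|^{1/2}_{L^1(B(x,1))}\|\rho\|^{1/2}_{L^1(B(y,1))},
\end{equation*}
using $\epsilon \le 1$.

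For $I_2$, the idea is to replace the factor $\lambda^{-2}$ by $C\delta^{-2}$ on the support of $\nabla \Phi$. Indeed, Lemma \ref{lem:lam_b2}(i) gives $\lambda(x,y,\bold{\hat{x}}) \ge C\delta$ whenever $\Phi(x,y,\bold{\hat{x}}) \ne 0$, and since $\lambda$ is continuous (as a minimum of Lipschitz functions, by (\ref{lam_lip})), this lower bound extends to the closure $\operatorname{supp}\Phi$, which contains $\operatorname{supp}\nabla\Phi$. Thus
\begin{equation*}
I_2 \le C\delta^{-3} \int_{\real^{3N-3}} M_\delta(x,y,\bold{\hat{x}}) f_\infty(x,\bold{\hat{x}}) f_\infty(y,\bold{\hat{x}})\,d\bold{\hat{x}},
\end{equation*}
and applying (\ref{eq:mt_int}) of Proposition \ref{prop:ff} with $t = \delta$ produces a compensating factor of $\delta^3$, leaving a bound independent of $\delta$ which is in turn $\le C\epsilon^{-3}$ by $\epsilon \le 1$. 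Combining $I_1$ and $I_2$ yields (\ref{eq:lam_int2}).

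The only subtlety is the support observation for $I_2$; it is crucial because one cannot invoke (\ref{eq:lam_bnd2}) directly after multiplying by $M_\delta$, so one must exploit the fact that $M_\delta$ is itself a cutoff on a region where $\lambda$ has already been shown to be of size at least $\delta$.
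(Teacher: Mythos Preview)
Your proof is correct and follows essentially the same approach as the paper: bound $|\nabla\Phi|$ pointwise by $C(\epsilon^{-1}+\delta^{-1}M_\delta)$, then handle the $\epsilon^{-1}$ piece via (\ref{eq:lam_bnd2}) with $b=2$ and the $\delta^{-1}M_\delta$ piece via (\ref{eq:lam_bnd}) and (\ref{eq:mt_int}). The paper cites Lemma~\ref{lem:dphi} (with $\alpha=0$, $|\bsig|=1$) rather than Lemma~\ref{lem:dpphi_prime} for the gradient bound, but these yield the same estimate here; your explicit continuity remark for extending $\lambda\ge C\delta$ to $\operatorname{supp}\Phi$ is a detail the paper leaves implicit.
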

\begin{proof}
Using Lemma \ref{lem:dphi} and (\ref{eq:lam_bnd}), the integral can be bounded by some constant multiplying
\begin{align*}
\int_{\textnormal{supp}\,\Phi(x,y,\,\cdot\,)} \big(\epsilon^{-1}\lambda(x, y, \bold{\hat{x}})^{-2} + \delta^{-3}M_{\delta}(x,y,\bold{\hat{x}}) \big) f_{\infty}(x,\bold{\hat{x}}) f_{\infty}(y,\bold{\hat{x}}) \,d\bold{\hat{x}}.
\end{align*}
Expanding, the first term is bounded by (\ref{eq:lam_bnd2}) and second is bounded using (\ref{eq:mt_int}) of Proposition \ref{prop:ff}.

\end{proof}

We now collect certain results which will be used throughout this section. We first consider cluster derivatives of the cutoff $\Phi$ of the form (\ref{eq:dxbc}),(\ref{eq:dxybc}). Such derivatives can be expanded as partial derivatives. Therefore, taking some $\eta \in \naturals_0^3$ and $\boldsymbol{\nu} = (\nu_1, \nu_2) \in \naturals_0^6$, there exists $C$ by Lemma \ref{lem:dphi}, depending on $\eta, \bnu$ but independent of $\delta, \epsilon$, such that
\begin{equation}
\label{eq:bc_1}
|D^{\eta}_{x,y,Q}D^{\nu_1}_{x,P}D^{\nu_2}_{y,S}\Phi_{\delta,\epsilon}(x,y,\bold{\hat{x}})| \le C \epsilon^{-|\eta|}\big( \epsilon^{-|\bnu|} + \delta^{-|\bnu|} M_{\delta}(x,y,\bold{\hat{x}})\big)
\end{equation}
for all $x,y \in \real^3$ and $\bold{\hat{x}} \in \real^{3N-3}$.

Now take any $x,y \in \real^3$ with $|x|, |y| \ge \epsilon$, and suppose $\Phi(x,y,\bold{\hat{x}}) \ne 0$. Then, as a consequence of Lemma \ref{lem:lam_b2} we have that
both $\pi(x,y,\bold{\hat{x}})$ and $\lambda(x,y,\bold{\hat{x}})$ are positive. Therefore, by the definitions (\ref{eq:lam_xyx}), (\ref{eq:mu_xyx}) and (\ref{dp_def}), (\ref{lam_def}),
\begin{align*}
(x, \bold{\hat{x}}) \in  \Sigma_{Q}^c \cap \Sigma_P^c \cap \Sigma_{S^*}^c \quad \text{and}\quad (y,\bold{\hat{x}}) \in \Sigma_{Q}^c \cap \Sigma_{P^*}^c \cap \Sigma_{S}^c.
\end{align*}
This allows us to apply Theorem \ref{thm:ddpsi} for cluster derivatives of clusters $Q,P,S^*$ on $\psi$ at the point $(x,\bold{\hat{x}})$, and of $Q,P^*,S$ on $\psi$ at the point $(y,\bold{\hat{x}})$. Indeed, using the definitions (\ref{eq:lam_xyx}) and (\ref{eq:mu_xyx}), we have for every $\eta \in \naturals_0^3$, $\boldsymbol{\nu} = (\nu_1, \nu_2) \in \naturals_0^6$ some $C$ such that
\begin{align}
\label{eq:ddpsi_1}
\sum_{k=0,1}\big|D_Q^{\eta} D^{\bnu}_{\{P,S^*\}} \nabla^k \psi(x, \bold{\hat{x}})\big| &\le C \pi(x,y,\bold{\hat{x}})^{-|\eta|} \lambda(x, y, \bold{\hat{x}})^{-|\bnu|} f_{\infty}(x, \bold{\hat{x}}),\\
\label{eq:ddpsi_2}
\sum_{k=0,1}\big|D_Q^{\eta} D^{\bnu}_{\{P^*,S\}}  \nabla^k \psi(y, \bold{\hat{x}})\big| &\le C\pi(x,y,\bold{\hat{x}})^{-|\eta|} \lambda(x, y, \bold{\hat{x}})^{-|\bnu|} f_{\infty}(y, \bold{\hat{x}}),
\end{align}
where we introduce here the notation $D^{\bnu}_{\{P,S^* \}} = D^{\nu_1}_PD^{\nu_2}_{S^*}$ and similarly $D^{\bnu}_{\{P^*,S \}} = D^{\nu_1}_{P^*}D^{\nu_2}_{S}$.

\begin{lem}
\label{lem:dglm}
Let $\mu_1,\mu_2,\mu_3 \in \naturals_0^3$ be arbitrary and $l,m \in \naturals_0^3$ be such that $|l|,|m| \le 1$.
%
The derivative $\partial^{\mu_1}_{x+y}\partial^{\mu_2}_x \partial^{\mu_3}_y \gamma_{l,m}(x,y;\Phi)$ is equal to a linear combination of integrals of the form
\begin{multline}
\label{eq:dglm_int}
\int_{\real^{3N-3}} \big( D_Q^{\alpha_1} D_{\{P, S^*\}}^{(\alpha_2, \alpha_3)} \partial_{x}^{l} \psi(x,\bold{\hat{x}})\big) \big( D_Q^{\beta_1} D_{\{P^*, S\}}^{(\beta_2, \beta_3)} \partial_{y}^{m} \psi(y,\bold{\hat{x}})\big) \big(D_{x,y,Q}^{\sigma_1} D^{\sigma_2}_{x,P} D_{y,S}^{\sigma_3}\Phi(x,y,\bold{\hat{x}})\big)\, d\bold{\hat{x}}\\
\text{where } |\alpha_i| + |\beta_i| + |\sigma_i| = |\mu_i|,\, i=1,2,3
\end{multline}
for all $|x|, |y| \ge \epsilon$. In particular, the derivative exists across the diagonal $x=y$ for $|x|, |y| \ge \epsilon$. 
\end{lem}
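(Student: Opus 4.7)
The plan is to differentiate under the integral, convert each plain derivative ($\partial^{\mu_1}_{x+y}$, $\partial^{\mu_2}_x$, $\partial^{\mu_3}_y$) into the corresponding cluster derivative ($D^{\mu_1}_{x,y,Q}$, $D^{\mu_2}_{x,P}$, $D^{\mu_3}_{y,S}$) by integration by parts in $\hbx$, and then distribute each cluster derivative across the triple product $\partial_x^l\psi(x,\hbx)\overline{\partial_y^m\psi(y,\hbx)}\Phi$ via a generalised Leibniz rule. The first step is justified because on the support of $\Phi$ with $|x|,|y|\ge\epsilon$ the points $(x,\hbx)$ and $(y,\hbx)$ are forced away from $\Sigma_Q\cup\Sigma_P\cup\Sigma_{S^*}$ and $\Sigma_Q\cup\Sigma_{P^*}\cup\Sigma_S$ respectively (by Lemma \ref{lem:support} and Lemma \ref{lem:lam_b2}(i)), so the cluster derivatives of $\psi$ in question are smooth functions whose pointwise bounds from Theorem \ref{thm:ddpsi}, combined with the cutoff bounds of Lemma \ref{lem:dphi} and the integrability result Lemma \ref{lem:lam_b2}(ii), produce an $L^1(\hbx)$-dominant. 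Dominated convergence then places all three outer derivatives inside the integral sign.

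For the integration-by-parts step I would show, iteratively, that for a test integrand $F$ built from cluster derivatives of $\psi(x,\cdot)$, $\psi(y,\cdot)$ and $\Phi$, one has
\begin{equation*}
\int_{\R^{3N-3}} \partial_x^{\mu_2} F\,d\hbx \;=\; \int_{\R^{3N-3}} D_{x,P}^{\mu_2}F\,d\hbx,
\end{equation*}
and similarly with $\partial_y^{\mu_3}\leftrightarrow D_{y,S}^{\mu_3}$ and $\partial_{x+y}^{\mu_1}\leftrightarrow D_{x,y,Q}^{\mu_1}$. Expanding each $D^{\mu_i}_{\cdot}-\partial^{\mu_i}_{\cdot}$ as a composition of first-order operators, every resulting term carries at least one factor of $\partial_{x_j}$ for some integration variable $x_j$ ($j\in P^*$, $S^*$, or $Q^*$ respectively), so Fubini plus one-variable integration by parts in $x_j$ gives a vanishing boundary term. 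To handle the fact that $\Phi$ contains $\theta_\epsilon$ factors of non-compact support I would multiply the integrand by a radial truncation $\chi_R(\hbx)$ with $\chi_R\equiv 1$ on $B(0,R)$, for which the boundary terms vanish identically, and then pass to the limit $R\to\infty$ using the $L^1$-domination established above (with $\partial_{x_j}\chi_R\to 0$ in $L^1$ against the bounded integrand because $\psi,\nabla\psi\in L^2$). Commuting the three now-cluster derivatives past one another is harmless since they are sums of partial derivatives.

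Having justified that the outer derivative of $\g_{l,m}(x,y;\Phi)$ equals $\int D_{x,y,Q}^{\mu_1}D_{x,P}^{\mu_2}D_{y,S}^{\mu_3}(\partial_x^l\psi\,\overline{\partial_y^m\psi}\,\Phi)\,d\hbx$, I would invoke the generalised Leibniz rule: each of $D_{x,y,Q}^{e_k}$, $D_{x,P}^{e_k}$, $D_{y,S}^{e_k}$ is a first-order linear derivation, and compositions of derivations distribute across a product of three factors via the trinomial expansion. When applied to $\psi(x,\hbx)$, the $\partial_y$-component of $D_{x,y,Q}$ annihilates, so that operator reduces to $D_Q^{\alpha_1}$; likewise $D_{y,S}^{\alpha_3}$ reduces to $D_{S^*}^{\alpha_3}$. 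By symmetry $D_{x,y,Q}^{\beta_1}$ and $D_{x,P}^{\beta_2}$ on $\bar\psi(y,\hbx)$ become $D_Q^{\beta_1}$ and $D_{P^*}^{\beta_2}$. Collecting the terms with $\alpha_i+\beta_i+\sigma_i=\mu_i$ yields precisely (\ref{eq:dglm_int}). The existence of the derivative across the diagonal $x=y$ is then immediate: for $|x|,|y|\ge\epsilon$ the right-hand side of (\ref{eq:dglm_int}) depends continuously on $(x,y)$ through the integrand, which is pointwise continuous in $(x,y)$ and $L^1(\hbx)$-dominated by the same bounds used in the first step, so continuity (hence existence) of the derivative at $x=y$ follows from dominated convergence. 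The main technical point requiring care is the passage $R\to\infty$ in the truncation argument, which is the only obstacle to a purely algebraic proof.
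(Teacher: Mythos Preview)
Your overall strategy is sound and would reach the same formula, but it differs from the paper's proof, and there is one step that needs more care.

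The paper never integrates by parts. Instead it makes a translational change of variables $\hbx \mapsto \hbx + \hat{\mathbf a}(x,y)$, where $a_k = x$ for $k\in P^*$, $a_k = y$ for $k\in S^*$, and $a_k = 0$ otherwise. After this shift the chain rule gives, for instance, $\partial_x^{\eta}[\,\psi(x,\hbx+\hat{\mathbf a})\,] = (D_P^{\eta}\psi)(x,\hbx+\hat{\mathbf a})$ and $\partial_y^{\eta}[\,\psi(x,\hbx+\hat{\mathbf a})\,] = (D_{S^*}^{\eta}\psi)(x,\hbx+\hat{\mathbf a})$, so the $x$- and $y$-derivatives placed under the integral are \emph{already} cluster derivatives; Theorem~\ref{thm:ddpsi} then supplies the $L^1$ dominant directly. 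A second change of variables using $\hat{\mathbf b}$ with $b_k=(x+y)/2$ for $k\in Q^*$ handles the $(x+y)$-derivatives in the same way. The shifts are undone at the end, and no truncation or boundary argument is needed.

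The weak point in your version is the sentence ``Dominated convergence then places all three outer derivatives inside the integral sign'', where you cite the cluster-derivative bounds as the dominant. That is circular: before integration by parts, the integrand carries the \emph{plain} derivative $\partial_x^{\mu_2+l}\psi(x,\hbx)$, which is singular at $x=x_j$ for every $j\ge 2$, in particular for $j\in P^*$---and those points lie in $\supp\Phi$. For $|\mu_2|+|l|\ge 4$ the singularity $|x-x_j|^{3-|\mu_2|-|l|}$ is not locally integrable in $x_j$, so you cannot push all of $\partial_x^{\mu_2}$ inside in one step. The repair is to alternate: apply one first-order $\partial_x^{\eta}$, justify differentiation under the integral (the extra singularity is at worst $|x-x_j|^{-2}$, hence $L^1$ in $\R^3$), integrate by parts to convert that single $\partial_x^{\eta}$ to $D_{x,P}^{\eta}$, and repeat. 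This works, but it is exactly the bookkeeping that the paper's change-of-variables trick performs in a single stroke, since after the shift one never sees a bare $\partial_x$ acting on $\psi$ inside the integral.
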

\begin{proof}
For each choice of $x$ and $y$ we define a $x$- and $y$-dependent change of variables for the integral $\g_{l,m}(x,y;\Phi)$. To start, we define two vectors $\bold{\hat{a}} = (a_2, \dots, a_N), \bold{\hat{b}} = (b_2, \dots, b_N) \in \real^{3N-3}$ by

\begin{align}
a_k &=
\begin{cases}
x \quad & \text{ if } k \in P^*\\
y \quad & \text{ if } k \in S^*\\
0 \quad & \text{ if } k \in P^c \cap S^c,
\end{cases}
&
b_{k} &=
\begin{cases}
(x+y)/2 \quad & \text{ if } k \in Q^* \\
0 \quad & \text{ if } k \in Q^{c}.
\end{cases}
\end{align}
By a translational change of variables using $\bold{\hat{a}}$ we can write
\begin{equation}
\label{glm_phi}
\g_{l,m}(x,y; \Phi) = \int_{\real^{3N-3}} \partial_{x}^l \psi(x, \bold{\hat{x}} + \bold{\hat{a}}) \overline{\partial_y^m \psi(y, \bold{\hat{x}}+\bold{\hat{a}})} \Phi(x,y, \bold{\hat{x}}+\bold{\hat{a}})  \,d\bold{\hat{x}}.
\end{equation}
%
%
We will then apply differentiation under the integral. Beforehand, we show how such derivatives will act on each function within the integrand. As an illustration, we take a function $f$ defined on $\real^{3N}$ and any $\eta \in \naturals_0^3$ to see by the chain rule that
\begin{equation*}
  \begin{split}
\partial^{\eta}_{x+y}[f(x, \bold{\hat{x}}+\bold{\hat{b}})] &= D^{\eta}_Q f(x, \bold{\hat{x}}+\bold{\hat{b}}),\\
\partial^{\eta}_{x+y}[f(y, \bold{\hat{x}}+\bold{\hat{b}})] &= D^{\eta}_Q f(y, \bold{\hat{x}}+\bold{\hat{b}}),\\
\partial^{\eta}_x[f(x, \bold{\hat{x}}+\bold{\hat{a}})] &= D^{\eta}_P f(x, \bold{\hat{x}}+\bold{\hat{a}}),
  \end{split}
\qquad
  \begin{split}
\partial^{\eta}_x[f(y, \bold{\hat{x}}+\bold{\hat{a}})] &= D^{\eta}_{P^*} f(y, \bold{\hat{x}}+\bold{\hat{a}}),\\
\partial^{\eta}_y[f(x, \bold{\hat{x}}+\bold{\hat{a}})] &= D^{\eta}_{S^*} f(x, \bold{\hat{x}}+\bold{\hat{a}}),\\
\partial^{\eta}_y[f(y, \bold{\hat{x}}+\bold{\hat{a}})] &= D^{\eta}_S f(y, \bold{\hat{x}}+\bold{\hat{a}}).
  \end{split}
\end{equation*}
Analogous expressions arise when the cutoff $\Phi(x,y,\bold{\hat{x}})$ is differentiated which involve cluster derivatives of the form (\ref{eq:dxbc}), (\ref{eq:dxybc}).

We now differentiate (\ref{glm_phi}) in $x$ and $y$, and use the product rule to get that the function $\partial^{\mu_2}_x\partial^{\mu_3}_y \g_{l,m}(x,y; \Phi)$ is a linear combination of terms of the form
\begin{align*}
&I_{(\al_2,\al_3,\beta_2,\beta_3,\sigma_2,\sigma_3)}\\ 
&= \int_{\real^{3N-3}} \big( D_{\{P, S^*\}}^{(\alpha_2, \alpha_3)} \partial_{x}^{l} \psi(x,\bold{\hat{x}}+\bold{\hat{a}})\big) \big( D_{\{P^*, S\}}^{(\beta_2, \beta_3)} \partial_{y}^{m} \psi(y,\bold{\hat{x}}+\bold{\hat{a}})\big) D^{\sigma_2}_{x,P} D_{y,S}^{\sigma_3}\Phi(x,y,\bold{\hat{x}}+\bold{\hat{a}})\big)\, d\bold{\hat{x}},
\end{align*}
for some $\al_2+\beta_2+\sigma_2 = \mu_2$ and $\al_3+\beta_3+\sigma_3 = \mu_3$.

We will now apply a $(x+y)$-derivative to an integral of this form. However, first we apply a change of variables to replace $\bold{\hat{a}}$ with $\bold{\hat{b}}$ as the translation of the integration variable $\bold{\hat{x}}$. Then, in the same way as before, we obtain that $\partial^{\mu_1}_{x+y}I_{(\al_2,\al_3,\beta_2,\beta_3,\sigma_2,\sigma_3)}$ is a linear combination of terms of the form
\begin{multline*}
\int_{\real^{3N-3}} \big( D^{\al_1}_Q D_{\{P, S^*\}}^{(\alpha_2, \alpha_3)} \partial_{x}^{l} \psi(x,\bold{\hat{x}}+\bold{\hat{b}})\big) \big(  D^{\beta_1}_Q D_{\{P^*, S\}}^{(\beta_2, \beta_3)} \partial_{y}^{m} \psi(y,\bold{\hat{x}}+\bold{\hat{b}})\big)\,\cdot\\ 
 D^{\sigma_1}_{x,y,Q} D^{\sigma_2}_{x,P} D_{y,S}^{\sigma_3}\Phi(x,y,\bold{\hat{x}}+\bold{\hat{b}})\big)\, d\bold{\hat{x}}.
\end{multline*}
for some $\al_1+\beta_1+\sigma_1 = \mu_1$. Finally, another translational change of variables removes $\bold{\hat{b}}$ to give the expression (\ref{eq:dglm_int}).

\end{proof}

We can now prove Lemma \ref{lem:glm} by bounding the integrals produced in the previous lemma.

\begin{proof}[Proof of Lemma \ref{lem:glm}]
Consider $\mu_1,\mu_2,\mu_3$ satisfying either (i) or (ii). By Lemma \ref{lem:dglm} it suffices to prove the required bound for integrals of the form (\ref{eq:dglm_int}). Define $a = |\al_1|+|\beta_1|+|\sigma_1|$, $b = |\al_2|+|\al_3|+|\beta_2|+|\beta_3|$ and $c = |\sigma_2|+|\sigma_3|$. Notice that $a = |\mu_1|$ and $b+c = |\mu_2|+|\mu_3|$. Then, using (\ref{eq:bc_1})-(\ref{eq:ddpsi_2}) and (\ref{eq:pi_bnd}) it is possible to bound (\ref{eq:dglm_int}) in absolute value by some constant multiplying
\begin{align*}
&\epsilon^{-a-c}\int_{\text{supp}\, \Phi(x,y,\cdot)} \lambda(x,y,\bold{\hat{x}})^{-b} f_{\infty}(x,\bold{\hat{x}}) f_{\infty}(y,\bold{\hat{x}}) \, d\bold{\hat{x}}\\
&\qquad +\epsilon^{-a}\delta^{-c} \int_{\text{supp}\, \Phi(x,y,\cdot)} M_{\delta}(x,y,\bold{\hat{x}}) \lambda(x,y,\bold{\hat{x}})^{-b} f_{\infty}(x,\bold{\hat{x}}) f_{\infty}(y,\bold{\hat{x}}) \, d\bold{\hat{x}} =: I_1 + I_2.
\end{align*}

For $I_2$ we use (\ref{eq:lam_bnd}) to bound this by some constant multiplying
\begin{align*}
\epsilon^{-a}\delta^{-c-b} \int_{\real^{3N-3}} M_{\delta}(x,y,\bold{\hat{x}}) f_{\infty}(x,\bold{\hat{x}}) f_{\infty}(y,\bold{\hat{x}}) \, d\bold{\hat{x}}\le C\epsilon^{-a} \delta^{3-b-c} \norm{\rho}^{1/2}_{L^{1}(B(x, 1))} \norm{\rho}^{1/2}_{L^{1}(B(y, 1))}
\end{align*}
for some $C$, where we used (\ref{eq:mt_int}). This gives (\ref{eq:dglm_int2}) for $I_2$.

For $I_1$, we first consider when $b \ne 3$. By (\ref{eq:lam_bnd2}), this is bounded by some constant multiplying
\begin{align*}
(\epsilon^{-a-c-b} + \epsilon^{-a-c}\delta^{\min\{0,3-b\}})\norm{\rho}^{1/2}_{L^{1}(B(x, 1))} \norm{\rho}^{1/2}_{L^{1}(B(y, 1))}
\end{align*}
which is then bounded by (\ref{eq:dglm_int2}).

Now we consider $I_1$ for $b = 3$. If we are in case (i) then we must have $c \ge 1$ since $b + c = |\mu_1|+|\mu_2| \ne 3$. Here, we use (\ref{eq:lam_bnd}) followed by (\ref{eq:lam_bnd2}) to bound $I_1$ by some constant multiplying
\begin{multline*}
\epsilon^{-a-c} \delta^{-1}\int_{\text{supp}\, \Phi(x,y,\cdot)} \lambda(x,y,\bold{\hat{x}})^{-2} f_{\infty}(x,\bold{\hat{x}}) f_{\infty}(y,\bold{\hat{x}}) \, d\bold{\hat{x}} \\\le C\epsilon^{-a-c-2}\delta^{-1}\norm{\rho}^{1/2}_{L^{1}(B(x, 1))} \norm{\rho}^{1/2}_{L^{1}(B(y, 1))}
\end{multline*}
for some $C$, which is then bounded by (\ref{eq:dglm_int2}). Finally, if we are in case (ii) for $b=3$ then $I_1$ cannot be bounded appropriately. Instead, we use that the integral (\ref{eq:dglm_int}) that we are attempting to bound must have $\al_1 = \beta_1 = 0$ and $\sigma_i = 0$, $i=1,2,3$. This allows us to use Lemma \ref{lem:ab3}, which is stated and proven below, to bound (\ref{eq:dglm_int}) directly.

\end{proof}
%


\subsection{Proof of Lemma \ref{lem:ab3}}
As before, $\Phi = \Phi_{\delta,\epsilon}$ is an arbitrary cutoff of the form (\ref{eq:pphi_def}) with corresponding clusters $P=P(\Phi), S=S(\Phi)$ and $Q = Q(\Phi)$. We assume throughout that $P^* \cap S^* = \emptyset$ and $0<2\delta \le \epsilon \le 1$.

\begin{lem}
\label{lem:ab3}
For any $\bal = (\al_1,\al_2), \bbeta=(\beta_1,\beta_2) \in \naturals_0^6$ with $|\bal| + |\bbeta| = 3$, and any $l,m \in \naturals_0^3$ with $|l|,|m| \le 1$ there exists $C$, independent of $\delta,\epsilon$, such that
\begin{multline}
\label{eq:ab3}
\Big|\int_{\real^{3N-3}} \big(D_{\{P, S^*\}}^{\bal}\partial_x^l \psi(x,\bold{\hat{x}})\big) \big(D_{\{P^*, S\}}^{\bbeta}\partial_y^m \psi(y,\bold{\hat{x}})\big) \Phi_{\delta,\epsilon}(x,y,\bold{\hat{x}})\, d\bold{\hat{x}}\Big| \\\le C\epsilon^{-3}\norm{\rho}^{1/2}_{L^{1}(B(x, 1))} \norm{\rho}^{1/2}_{L^{1}(B(y, 1))}
\end{multline}
for all $|x|, |y| \ge \epsilon$ with $|x-y| \le 2\delta$.
\end{lem}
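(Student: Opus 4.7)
The proof splits according to $(|l|,|m|)$. In the cases $|l|+|m|\le 1$, Theorem \ref{thm:ddpsi} gives a pointwise integrand bound $\le C\lambda(x,y,\hbx)^{-b}f_{\infty}(x,\hbx)f_{\infty}(y,\hbx)$ with $b \le 2$, and Lemma \ref{lem:lam_b2}(ii) then yields the bound $\le C\epsilon^{-2}\norm{\rho}^{1/2}_{L^1(B(x,1))}\norm{\rho}^{1/2}_{L^1(B(y,1))} \le C\epsilon^{-3}\norm{\rho}^{1/2}\norm{\rho}^{1/2}$ using $\epsilon \le 1$. The delicate case is $|l|=|m|=1$, where the direct bound is $\lambda^{-3}$, precisely the excluded $b=3$ of Lemma \ref{lem:lam_b2}(ii).

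For $|l|=|m|=1$, apply the decomposition (\ref{eq:gp_def}) from Theorem \ref{thm:ddpsi} to both factors:
\[
D^{\bal}_{\{P, S^*\}}\partial_x^l \psi(x, \hbx) = G_1(x, \hbx) + \psi(x, \hbx)\, \partial_x^l D^{\bal}_{\{P, S^*\}}F_c(x, \hbx),
\]
and similarly for the $y$-factor. The integral then splits as $I = I_{GG} + I_{G\CF} + I_{\CF G} + I_{\CF\CF}$, with $\CF_j$ denoting the second (``bad'') summand. Fix $b \in (0,1)$. Since $\mu_{\bal}$ equals the largest of the $\lambda_{P_i}$'s with $\alpha_i \ne 0$, and each such $\lambda_{P_i} \ge \lambda(x,y,\hbx)$, the bound (\ref{eq:gp_bnd}) gives $|G_1(x,\hbx)| \le C\lambda(x,y,\hbx)^{b-|\bal|}f_{\infty}(x,\hbx)$, and analogously for $G_2$. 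Combined with $|\CF_j| \le C\lambda^{-|\bal|}f_{\infty}$ from Lemma \ref{der_gradF} applied to $F_c$, the integrands of $I_{GG}$, $I_{G\CF}$, $I_{\CF G}$ are bounded pointwise by $C\lambda^{2b-3}$ or $C\lambda^{b-3}$ times $f_{\infty}f_{\infty}$, both with exponent strictly less than $3$. Lemma \ref{lem:lam_b2}(ii) then gives the desired bound $\le C\epsilon^{-3}\norm{\rho}^{1/2}\norm{\rho}^{1/2}$ for these three pieces.

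The residual piece $I_{\CF\CF}$ has integrand $\psi(x,\hbx)\overline{\psi(y,\hbx)}\partial_x^l D^{\bal}_{\{P,S^*\}}F_c(x,\hbx)\,\partial_y^m D^{\bbeta}_{\{P^*,S\}}F_c(y,\hbx)\,\Phi$. Using Lemma \ref{der_F} and the hypothesis $P^* \cap S^* = \emptyset$, one verifies that $D^{\bal}_{\{P,S^*\}}F_c(\bx)$ reduces to a sum of partial derivatives $\partial^{\alpha_1}_{x_l}\partial^{\alpha_2}_{x_k}|x_l - x_k|$ over pairs $(l,k)$ with $l \in P,\, k \in S^*$ (the $|x_j|$ contributions drop since $\{1\}\cap S^*=\emptyset$). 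Integrating by parts in the $\hbx$-variables transfers the $D_{S^*}^{\alpha_2}$-operator (purely in $\hbx$) off $F_c(x,\hbx)$ onto the remaining product; boundary terms vanish by the compact $\hbx$-support of $\Phi$. The Leibniz expansion distributes the new derivatives across $\psi(x,\hbx)$, $\overline{\psi(y,\hbx)}$, the $y$-side $F_c$-factor, and $\Phi$. Terms landing on $\psi$-factors produce new cluster derivatives of $\psi$, which are re-decomposed via (\ref{eq:gp_def}) and bounded as above; those on $F_c(y,\hbx)$ remain controlled by Lemma \ref{der_gradF}; those on $\Phi$ contribute $\epsilon^{-1}$ or $\delta^{-1}$ factors via Lemma \ref{lem:dpphi_prime}. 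After the integration by parts, each resulting integrand has a pointwise bound of the form $C\lambda^{-b'}$ with $b' < 3$, possibly multiplied by $M_\delta(x,y,\hbx)$; these are integrable via Lemma \ref{lem:lam_b2}(ii) or Proposition \ref{prop:ff}. Collecting all contributions and using $\delta \le \epsilon/2$, the total is $\le C\epsilon^{-3}\norm{\rho}^{1/2}\norm{\rho}^{1/2}$.

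The principal obstacle is executing the integration by parts cleanly. Cluster derivatives $D^{\bal}_{\{P,S^*\}}$ mix $\partial_{x_1}$-components (which at $(x,\hbx)$ become $\partial_x$-derivatives, \emph{not} $\hbx$-partials and hence unavailable for IBP) with $\partial_{x_k}$-components for $k \in P^* \cup S^*$ (the genuine $\hbx$-partials). One must split the cluster derivative accordingly, moving only the $\hbx$-part while the $\partial_x$-pieces are pulled out of the $\hbx$-integral. Moreover, the iteration on newly generated $\psi$-derivatives must terminate, which it does because each integration by parts strictly reduces the derivative order on the $F_c$-factor while the total budget is $|\bal|+|\bbeta|=3$. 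One must then verify that the accumulated factors $\delta^{-k}$ combine with the $M_\delta$-integrals (of size $O(\delta^3)$ by (\ref{eq:mt_int})) to produce no power of $\delta^{-1}$ exceeding the allowance in the final $\epsilon^{-3}$ bound.
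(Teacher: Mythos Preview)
Your high-level strategy matches the paper's: decompose each gradient-carrying cluster derivative via (\ref{eq:gp_def}), dispatch the $G$-pieces using (\ref{eq:gp_bnd}) and Lemma~\ref{lem:lam_b2}(ii), and attack the residual $\psi\cdot D^{\bal}\nabla F_c$ contributions by integration by parts after reducing $F_c^{(ee)}$ to explicit $\partial_{x_j}^{\alpha}|x-x_j|$ terms. However, there are genuine gaps.

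First, your claim that $|l|+|m|\le 1$ yields $b\le 2$ directly is false in the sub-case $|\bal|=3$, $\bbeta=0$, $|l|=1$, $|m|=0$ (and its mirror): there (\ref{eq:ddpsi}) with $k=1$ gives only $\lambda^{-3}$ for the first factor while the second is just $\psi$. This sub-case still needs the decomposition, though once done it is easier than the $|l|=|m|=1$ case.

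Second, and more seriously, when $|l|=|m|=1$ but one of $\bal,\bbeta$ vanishes --- say $|\bal|=3$, $\bbeta=0$ --- you cannot ``decompose both factors'' via (\ref{eq:gp_def}) since that requires $|\bbeta|\ge 1$. Decomposing only the first factor leaves
\[
\int_{\real^{3N-3}} \big(D^{\bal}_{\{P,S^*\}}\partial_x^l F_c^{(ee)}\big)(x,\hbx)\,\psi(x,\hbx)\,\overline{\partial_y^m\psi(y,\hbx)}\,\Phi\,d\hbx,
\]
which via Lemma~\ref{lem:Fee} becomes a sum of $\int \partial_{x_j}^{\alpha}|x-x_j|\,\psi(x,\hbx)\,\overline{\partial_y^m\psi(y,\hbx)}\,\Phi$ with $|\alpha|=4$. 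Integration by parts in $x_j$ now lands a derivative on $\partial_y^m\psi(y,\hbx)$, producing a second derivative of $\psi$. Your plan to ``re-decompose via (\ref{eq:gp_def})'' treats this as $D_{\{j\}}^{e_r}\nabla\psi(y,\hbx)$, but the resulting bound involves $\lambda_{\{j\}}(y,\hbx)$, which picks up \emph{all} distances $|x_j-x_k|$ for $k\ne j$ --- and these are not bounded below on $\mathrm{supp}\,\Phi$ (e.g.\ when $f_{j,k}=\zeta_\delta$). The paper avoids this by writing $\partial_y^m\psi = e^F\partial_y^m\phi + \psi\,\partial_y^m F$ and, for the $e^F\partial_y^m\phi$ piece, exploiting the $C^{1,\theta}$ regularity of $\phi$: one subtracts $\partial_y^m\phi(y,x,\hbx_{1,j})$ (independent of $x_j$, hence invisible to the IBP) and bounds the difference by $C|x-x_j|^\theta f_\infty$. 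This is Lemma~\ref{lem:ff_holder}, and it is essential --- your scheme has no substitute for it.

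Third, your description of the $I_{\CF\CF}$ integration by parts is too schematic. The actual argument (Lemma~\ref{lem:ff_int}) requires, in the diagonal case $j=k$, an iterated IBP stripping all derivatives off $|y-x_j|$, then the identity $|y-x_j| = |x-x_j| + O(\delta)$, and a further symmetrisation IBP exploiting that the total order $|\alpha|+|\beta|=5$ is odd. Also, $\Phi$ does \emph{not} have compact $\hbx$-support; the IBP is justified by decay of $\psi$, not by cutoff support.
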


First, we recall from Section 2.1 that
\begin{align}
F = F_c-F_s = F^{(en)}_c + F^{(ee)}_c - F_s
\end{align}
where $F_s$ is smooth and where
\begin{align}
\label{def:fee}
F^{(en)}_c(\bold{x}) = -\frac{Z}{2}\sum_{1 \le j \le N}|x_j|,\qquad F^{(ee)}_c(\bx) = \frac{1}{4}\sum_{1\le j<k \le N} |x_j - x_k|.
\end{align}

The proof of Lemma \ref{lem:ab3}, involves examining the cluster derivatives of $\psi$ using the refined estimates in (\ref{eq:gp_bnd}) of Theorem \ref{thm:ddpsi}. These allows us to write such derivatives in terms of a ``bad'' term, $F_c$, and a ``good'' term, $G^{\bal}_{\bold{P}}$, which is of higher regularity near certain singularities. The contributions involving $F_c$ are handled explicitly using integration by parts.

%
%
%
Let $\bal = (\alpha_1, \alpha_2) \in \naturals_0^6$, $\bbeta = (\beta_1, \beta_2) \in \naturals_0^6$ and let $l,m \in \naturals_0^3$ obey $|l|=|m|=1$. In the following, the cluster derivatives in (\ref{eq:d_x1xj_1}) are understood to act with respect to the ordered variables $(x,x_2, \dots, x_N)$ and the cluster derivatives in (\ref{eq:d_x1xj_2}) are understood to act with respect to the ordered variables $(y,x_2, \dots, x_N)$. For later convenience, on the right-hand side of both formulae all derivatives in the $x$- or $y$-variable are rewritten to act on $x_j$. Now assume $|\bal|,|\bbeta| \ge 1$, then
\begin{align}
\label{eq:d_x1xj_1}
D_{\{P, S^* \}}^{\bal} \partial_{x}^l |x-x_j| &=
\begin{cases}
(-1)^{|\alpha_1|+1} \partial_{x_j}^{\alpha_1 + \alpha_2 + l} |x - x_j| &\text{ if } |\alpha_2| \ge 1 \text{ and } j \in S^*\\
0 &\text{ if }|\alpha_2| \ge 1 \text{ and } j \in S^c\\
(-1)^{|\alpha_1|+1} \partial_{x_j}^{\alpha_1 + l} |x - x_j| &\text{ if } \alpha_2 = 0 \text{ and } j \in P^c\\
0 &\text{ if } \alpha_2 = 0 \text{ and } j \in P^*
\end{cases}\\
\label{eq:d_x1xj_2}
D_{\{P^*, S \}}^{\bbeta} \partial_{y}^m |y-x_j| &=
\begin{cases}
(-1)^{|\beta_2|+1} \partial_{x_j}^{\beta_1 + \beta_2 + m} |y - x_j| &\text{ if } |\beta_1| \ge 1 \text{ and } j \in P^*\\
0 &\text{ if }|\beta_1| \ge 1 \text{ and } j \in P^c\\
(-1)^{|\beta_2|+1} \partial_{x_j}^{\beta_2 + m} |y - x_j| &\text{ if } \beta_1 = 0 \text{ and } j \in S^c\\
0 &\text{ if } \beta_1 = 0 \text{ and } j \in S^*.
\end{cases}
\end{align}
where we made use of $P^* \cap S^* = \emptyset$ and the identity $(\nabla_t + \nabla_s)|t-s| \equiv 0$ for $t,s \in \real^n$. By the definition of $F_c^{(ee)}$, which is rewritten above, we have therefore proven the following lemma.
\begin{lem}
\label{lem:Fee}
For any $|l|=|m|=1$ and $|\bal|, |\bbeta| \ge 1$ we have
\begin{align}
\label{eq:fee_1}
D_{\{P, S^* \}}^{\boldsymbol{\alpha}} \partial_x^l F^{(ee)}_{c}(x,\bold{\hat{x}}) =
\begin{cases}
\frac{(-1)^{|\alpha_1|+1}}{4}\sum_{j \in S^*}  \partial_{x_j}^{\alpha_1 + \alpha_2 + l} |x - x_j|\quad &\text{if } |\alpha_2| \ge 1\\
\frac{(-1)^{|\alpha_1|+1}}{4}\sum_{j \in P^c}  \partial_{x_j}^{\alpha_1 + l} |x - x_j| \quad &\text{if } \alpha_2 = 0,\\
\end{cases}
\end{align}
\begin{align}
\label{eq:fee_2}
D_{\{P^*, S\}}^{\bbeta} \partial_y^m F^{(ee)}_{c}(y,\bold{\hat{x}}) = 
\begin{cases}
\frac{(-1)^{|\beta_2|+1}}{4}\sum_{j \in P^*}  \partial_{x_j}^{\beta_1 + \beta_2 + m} |y - x_j| \quad &\text{if } |\beta_1| \ge 1\\
\frac{(-1)^{|\beta_2|+1}}{4}\sum_{j \in S^c}  \partial_{x_j}^{\beta_2 + m} |y - x_j|\quad &\text{if } \beta_1 = 0.
\end{cases}
\end{align}
\end{lem}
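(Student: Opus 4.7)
My plan is to derive the lemma by applying the closed-form identities (\ref{eq:d_x1xj_1}) and (\ref{eq:d_x1xj_2}) term by term to the pair decomposition of $F_c^{(ee)}$; after these identities are in hand, the whole argument is a short finite-sum manipulation.

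First, splitting off the $j=1$ pair terms in the definition (\ref{def:fee}) I would write
\[
F_c^{(ee)}(x,\hbx) = \frac{1}{4} \sum_{2 \le j \le N} |x - x_j| + \frac{1}{4} \sum_{2 \le j < k \le N} |x_j - x_k|,
\]
where the second sum depends only on $\hbx$ and is therefore annihilated by $\partial_x^l$ (since $|l|=1$). By linearity of the cluster derivative,
\[
D_{\{P,S^*\}}^{\bal}\partial_x^l F_c^{(ee)}(x,\hbx) = \frac{1}{4} \sum_{j=2}^N D_{\{P,S^*\}}^{\bal} \partial_x^l |x - x_j|,
\]
so the lemma reduces to inserting (\ref{eq:d_x1xj_1}) into each summand and collecting the non-vanishing contributions.

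When $\al_2 = 0$, the operator $D_{\{P,S^*\}}^{\bal}$ is just $D_P^{\al_1}$; by (\ref{eq:d_x1xj_1}) it annihilates $|x-x_j|$ exactly when $j \in P^*$ and returns $(-1)^{|\al_1|+1}\partial_{x_j}^{\al_1+l}|x-x_j|$ when $j \in P^c$, so summing over $j \in P^c$ yields the second line of (\ref{eq:fee_1}). When $|\al_2| \ge 1$, the factor $D_{S^*}^{\al_2}$ kills every term with $j \in S^c$; using $P^* \cap S^* = \emptyset$ to note that every surviving $j \in S^*$ automatically lies in $P^c$, (\ref{eq:d_x1xj_1}) supplies the value $(-1)^{|\al_1|+1}\partial_{x_j}^{\al_1+\al_2+l}|x-x_j|$ on each such term, producing the first line of (\ref{eq:fee_1}). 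The companion identity (\ref{eq:fee_2}) follows by the symmetric computation with $x,y$ swapped and $P,S$ interchanged; the sign prefactor is then controlled by the $P^*$-component $\beta_2$ of $D_{\{P^*,S\}}^{\bbeta}\partial_y^m$, explaining why $(-1)^{|\beta_2|+1}$ replaces $(-1)^{|\al_1|+1}$.

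The only substantive content is in the preceding identities (\ref{eq:d_x1xj_1})--(\ref{eq:d_x1xj_2}), and these are themselves a short exercise: the relation $(\partial_x+\partial_{x_j})|x-x_j|\equiv 0$ forces $D_P^{\al_1}|x-x_j|\equiv 0$ whenever $j\in P^*$ (since both $\partial_x$ and $\partial_{x_j}$ then appear among the first-order generators of $D_P$), and for the surviving indices the sign $(-1)^{|\al_1|+1}$ arises simply by converting all $x$-derivatives into $x_j$-derivatives via $\partial_x|x-x_j|=-\partial_{x_j}|x-x_j|$. No step presents a genuine obstacle beyond careful bookkeeping over the four cluster-membership cases for each index $j$.
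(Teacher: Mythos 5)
Your proposal is correct and follows essentially the paper's own route: the paper likewise obtains the lemma directly from the case analysis (\ref{eq:d_x1xj_1})--(\ref{eq:d_x1xj_2}), which it derives from the identity $(\nabla_t+\nabla_s)|t-s|\equiv 0$ together with $P^*\cap S^*=\emptyset$, and your argument just makes explicit the term-by-term summation over the pair decomposition of $F^{(ee)}_c$ (with the $\hbx$-only pairs killed by $\partial_x^l$). One trivial slip in your closing remark: in the $y$-case, $\beta_2$ is the $S$-component (not the $P^*$-component) of $\bbeta$, and it is exactly because $S$ is the cluster containing the distinguished variable $y$ that the sign $(-1)^{|\beta_2|+1}$ appears; your stated formula is nonetheless the correct one.
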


We now consider bounds to these derivatives. We will use $\lambda(x,y,\bold{\hat{x}})$ as the quantity defined in (\ref{eq:lam_xyx}) for the clusters $P$ and $S$. By (\ref{eq:lam_alt}) and see, for example, the comment preceeding Lemma \ref{dcutoff_factors}, we have for each $|\alpha| \ge 1$ some $C$ such that
\begin{align}
\label{eq:modx_lam}
\big|\partial_{x_j}^{\alpha}|x-x_j|\big| + \big|\partial_{x_k}^{\alpha}|y-x_k|\big| &\le C \lambda(x, y, \bold{\hat{x}})^{1-|\alpha|}
\end{align}
for all $2 \le j,k \le N$ if $|\alpha|=1$, and all $j \in P^c$, $k \in S^c$ if $|\alpha| \ge 2$. Therefore it is clear from Lemma \ref{lem:Fee} and $P^* \cap S^* = \emptyset$, that for each $\bnu \in \naturals_0^6$ there exists $C$ such that
\begin{align}
\label{eq:ddfc}
\big|D^{\bnu}_{\{P,S^*\}}\nabla_x F_c^{(ee)}(x, \bold{\hat{x}})\big| + \big|D^{\bnu}_{\{P^*,S\}}\nabla_y F_c^{(ee)}(y, \bold{\hat{x}})\big| \le C \lambda(x, y, \bold{\hat{x}})^{-|\bnu|}.
\end{align}
In addition, direct differentiation shows that for each $\bnu \in \naturals_0^6$ there is a $C$ such that
\begin{align}
\label{eq:fen}
\big|D_{\{P, S^*\}}^{\bnu}\nabla_x F^{(en)}_c(x,\bold{\hat{x}})\big| + \big|D_{\{P^*, S\}}^{\bnu}\nabla_y F^{(en)}_c(y,\bold{\hat{x}})\big| \le C\epsilon^{-|\bnu|}
\end{align}
whenever $|x|,|y| \ge \epsilon$.

We now proceed with the proof of the main result of this subsection.

\begin{proof}[Proof of Lemma \ref{lem:ab3}]

We first consider the more difficult case of $|l|=|m|=1$. By Theorem \ref{thm:ddpsi} with $b=1/2$ we can write
\begin{align}
\label{eq:gps_def1}
D_{\{P,S^*\}}^{\bal}\nabla \psi &= G_{\{P,S^*\}}^{\bal} + \psi\big( D_{\{P,S^*\}}^{\bal}\nabla F_c\big) \quad &&\text{for } |\bal| \ge 1,\\
\label{eq:gps_def2}
D_{\{P^*,S\}}^{\bbeta}\nabla \psi &= G_{\{P^*,S\}}^{\bbeta} + \psi\big( D_{\{P^*,S\}}^{\bbeta}\nabla F_c\big)  \quad &&\text{for } |\bbeta| \ge 1,
\end{align}
and in these cases we have the estimates
\begin{align}
\label{eq:gps_bnd1}
\big|G^{\bal}_{\{P,S^*\}}(x, \bold{\hat{x}})\big| &\le C \lambda(x, y, \bold{\hat{x}})^{1/2-|\bal|} f_{\infty}(x, \bold{\hat{x}}),\\
\label{eq:gps_bnd2}
\big|G^{\bbeta}_{\{P^*,S\}}(y, \bold{\hat{x}})\big| &\le C \lambda(x, y, \bold{\hat{x}})^{1/2-|\bbeta|} f_{\infty}(y, \bold{\hat{x}}).
\end{align}

With these preliminary estimates, we can prove the result in the case where $|\bal|,|\bbeta| \ge 1$. By (\ref{eq:gps_def1}), (\ref{eq:gps_def2}) and $F_c = F^{(en)}_c + F^{(ee)}_c$ we bound the integral (\ref{eq:ab3}) by
\begin{align*}
&\int_{\real^{3N-3}} \big|G_{\{P, S^* \}}^{\bal}(x, \bold{\hat{x}})\big| \big|D_{\{P^*, S\}}^{\bbeta}\nabla_y \psi(y,\bold{\hat{x}})\big| \Phi(x,y,\bold{\hat{x}})\, d\bold{\hat{x}}\\ 
&\qquad+ \int_{\real^{3N-3}} \big|D_{\{P, S^*\}}^{\bal}\nabla_x F_c(x,\bold{\hat{x}})\big| | \psi(x,\bold{\hat{x}})| \big|G_{\{P^*, S \}}^{\bbeta}(y, \bold{\hat{x}})\big| \Phi(x,y,\bold{\hat{x}})\, d\bold{\hat{x}}\\ 
&\qquad+ \int_{\real^{3N-3}} \big|D_{\{P, S^*\}}^{\bal}\nabla_x F^{(en)}_c(x,\bold{\hat{x}})\big| |\psi(x,\bold{\hat{x}})| \big|D_{\{P^*, S\}}^{\bbeta}\nabla_y F_c(y,\bold{\hat{x}})\big| |\psi(y,\bold{\hat{x}})| \Phi(x,y,\bold{\hat{x}})\, d\bold{\hat{x}}\\
&\qquad+ \int_{\real^{3N-3}} \big|D_{\{P, S^*\}}^{\bal}\nabla_x F^{(ee)}_c(x,\bold{\hat{x}})\big| |\psi(x,\bold{\hat{x}})|  \big|D_{\{P^*, S\}}^{\bbeta}\nabla_y F^{(en)}_c(y,\bold{\hat{x}})\big| |\psi(y,\bold{\hat{x}})| \Phi(x,y,\bold{\hat{x}})\, d\bold{\hat{x}}\\
&\qquad+ \Big|\int_{\real^{3N-3}} \big(D_{\{P, S^*\}}^{\bal}\partial_x^l F^{(ee)}_c(x,\bold{\hat{x}})\big)  \big(D_{\{P^*, S\}}^{\bbeta}\partial_y^m F^{(ee)}_c(y,\bold{\hat{x}})\big) \psi(x,\bold{\hat{x}}) \overline{\psi(y,\bold{\hat{x}})} \Phi(x,y,\bold{\hat{x}})\, d\bold{\hat{x}}\Big|.
\end{align*}
By Lemma \ref{lem:Fee} the final term above can be expanded by terms of the form (\ref{eq:ijk}), since $S^* \subset P^c$ and $P^* \subset S^c$. Lemma \ref{lem:ff_int} then gives the appropriate bound. We will now show that each of the remaining integrals above can be bounded by a constant multiplying an integral of the form
\begin{align}
\label{eq:eps_lam}
\epsilon^{-a} \int_{\real^{3N-3}} \lambda(x,y,\bold{\hat{x}})^{-b}  f_{\infty}(x,\bold{\hat{x}}) f_{\infty}(y,\bold{\hat{x}}) \Phi(x,y,\bold{\hat{x}}) \,d\bold{\hat{x}}.
\end{align}
for some $a,b \ge 0$, $b \le 5/2$ and $a + b \le 3$. This follows from using the estimates (\ref{eq:ddfc}), (\ref{eq:fen}), (\ref{eq:gps_bnd1}) and (\ref{eq:gps_bnd2}). 
We also use (\ref{eq:ddpsi_2}) to bound the cluster derivatives of $\psi(y,\bold{\hat{x}})$ in the first term. The required bound then follows by Lemma \ref{lem:lam_b2}.

Next we consider the case where $|\bal|=3$ and $\bbeta = 0$. The other case where $|\bbeta|=3$ and $\bal = 0$ is similar with obvious modifications. By (\ref{eq:gps_def1}) and $F_c = F^{(en)}_c + F^{(ee)}_c$ we bound the integral (\ref{eq:ab3}) by
\begin{align*}
&\int_{\real^{3N-3}} \big|G_{\{P, S^* \}}^{\bal}(x, \bold{\hat{x}})\big| |\nabla \psi(y,\bold{\hat{x}})| \Phi(x,y,\bold{\hat{x}})\, d\bold{\hat{x}}\\ 
&\qquad+ \int_{\real^{3N-3}} \big|D_{\{P, S^*\}}^{\bal}\partial_x^l F^{(en)}_c(x,\bold{\hat{x}})\big| |\psi(x,\bold{\hat{x}})| |\nabla \psi(y,\bold{\hat{x}})| \Phi(x,y,\bold{\hat{x}})\, d\bold{\hat{x}}\\
&\qquad+ \Big|\int_{\real^{3N-3}} \big(D_{\{P, S^*\}}^{\bal}\partial_x^l F^{(ee)}_c(x,\bold{\hat{x}})\big) \psi(x,\bold{\hat{x}}) \overline{\partial_y^m \psi(y,\bold{\hat{x}})} \Phi(x,y,\bold{\hat{x}})\, d\bold{\hat{x}}\Big| =: I_1+I_2+I_3.
\end{align*}
The integral $I_1$ is bounded using (\ref{eq:gps_bnd1}), from which we can then apply Lemma \ref{lem:lam_b2}. Bounding $I_2$ is a simple application of (\ref{eq:fen}) and Proposition \ref{prop:ff}. For $I_3$ we use the expression $\psi = e^F\phi$, where $\phi$ was introduced in (\ref{phi_def}). Since $F = F^{(ee)}_c + F_c^{(en)} - F_s$ we can write
\begin{align*}
\partial_y^m \psi(y,\bold{\hat{x}}) = e^{F(y,\bold{\hat{x}})} \partial_y^m\phi(y,\bold{\hat{x}}) + \psi(y,\bold{\hat{x}}) \partial_y^m F^{(ee)}_c(y,\bold{\hat{x}}) + \psi(y,\bold{\hat{x}}) \big(\partial_y^m F^{(en)}_c(y,\bold{\hat{x}}) - \partial_y^m F_s(y,\bold{\hat{x}})\big).
\end{align*}
We expand $I_3$ according to this expression. By Lemma \ref{lem:Fee}, the first term can be expanded in terms of the form (\ref{eq:ff_holder1}) and therefore can bounded by Lemma \ref{lem:ff_holder}. Also by Lemma \ref{lem:Fee}, the second term can be expanded in terms of the form (\ref{eq:ijk}) for $\beta = m$ and $|\alpha|=4$, and therefore can bounded by Lemma \ref{lem:ff_int}. It therefore suffices to bound
\begin{align*}
\int_{\real^{3N-3}}  \big(D_{\{P, S^*\}}^{\boldsymbol{\bal}} \partial_x^l F^{(ee)}_c(x,\bold{\hat{x}})\big) \big(\partial_y^m F^{(en)}_c(y,\bx) - \partial_y^m F_s(y,\bx)\big) \psi(x,\bold{\hat{x}}) \overline{\psi(y,\bx)} \Phi(x,y,\bold{\hat{x}}) \,d\bold{\hat{x}}.
\end{align*}
We expand the derivative of $F_c^{(ee)}$ using Lemma \ref{lem:Fee}. A general term in this expansion will have the form
\begin{align*}
\int_{\real^{3N-3}}  \partial_{x_j}^{\alpha+l}|x-x_j| \big(\partial_y^m F^{(en)}_c(y,\bx) - \partial_y^m F_s(y,\bx)\big) \psi(x,\bold{\hat{x}}) \overline{\psi(y,\bx)} \Phi(x,y,\bold{\hat{x}}) \,d\bold{\hat{x}}
\end{align*}
for some $j \in P^c$ and $|\alpha| = 3$. We write this integral as follows, using integration by parts to remove one derivative from $|x-x_j|$,
\begin{align*}
&-\int_{\real^{3N-3}}  \partial_{x_j}^{\alpha}|x-x_j| \big(\partial_y^m F^{(en)}_c(y,\bx) - \partial_y^m F_s(y,\bx)\big) \partial_{x_j}^l \big(\psi(x,\bold{\hat{x}}) \overline{\psi(y,\bx)} \big) \Phi(x,y,\bold{\hat{x}}) \,d\bold{\hat{x}}\\
&\qquad- \int_{\real^{3N-3}}  \partial_{x_j}^{\alpha}|x-x_j| \big(\partial_y^m F^{(en)}_c(y,\bx) - \partial_y^m F_s(y,\bx)\big) \psi(x,\bold{\hat{x}}) \overline{\psi(y,\bx)} \partial_{x_j}^l \Phi(x,y,\bold{\hat{x}}) \,d\bold{\hat{x}}\\
&\qquad +\int_{\real^{3N-3}} \partial_{x_j}^{\alpha}|x-x_j| \partial_{x_j}^l\partial_y^m F_s(y,\bx) \psi(x,\bold{\hat{x}}) \overline{\psi(y,\bx)} \Phi(x,y,\bold{\hat{x}}) \,d\bold{\hat{x}}
\end{align*}
since $\partial_{x_j}^l\partial_y^m F^{(en)}_c(y,\bx) \equiv 0$. We now use (\ref{eq:modx_lam}), that $\nabla F^{(en)}_c \in L^{\infty}(\real^{3N})$ and that $\nabla^k F_s \in L^{\infty}(\real^{3N})$ for any integer $k \ge 1$, see (\ref{f2_linf}). Therefore, these terms can be bounded by some constant multiplying
\begin{align*}
\int_{\real^{3N-3}} \lambda(x,y,\bold{\hat{x}})^{-2}  f_{\infty}(x,\bold{\hat{x}}) f_{\infty}(y,\bold{\hat{x}}) \big(\Phi(x,y,\bold{\hat{x}}) + |\nabla\Phi(x,y,\hbx)|\big) \,d\bold{\hat{x}}
\end{align*}
which is bounded appropriately by Lemma \ref{lem:lam_b2} and Corollary \ref{cor:lam_b}.

Finally, it remains to consider the case where one or both of $l$, $m$ are zero. The strategy is analogous but simpler. We remark that we can use the following improved version of (\ref{eq:ddpsi_1}) and (\ref{eq:ddpsi_2}) in the case where $\eta = 0$ and $k=0$. For any $\boldsymbol{\nu} = (\nu_1, \nu_2) \in \naturals_0^6$ with $|\bnu| \ge 1$ there exists some $C$ such that
\begin{align*}
\big|D^{\bnu}_{\{P,S^*\}} \psi(x, \bold{\hat{x}})\big| &\le C \lambda(x, y, \bold{\hat{x}})^{1-|\bnu|} f_{\infty}(x, \bold{\hat{x}}),\\
\big|D^{\bnu}_{\{P^*,S\}}  \psi(y, \bold{\hat{x}})\big| &\le C\lambda(x, y, \bold{\hat{x}})^{1-|\bnu|} f_{\infty}(y, \bold{\hat{x}}).
\end{align*}
These inequalities follow immediately from (\ref{eq:ddpsi}) of Theorem \ref{thm:ddpsi}.


\end{proof}

We now state and prove two the two auxiliary lemmas which were used in the preceeding proof.
\begin{lem}
\label{lem:ff_int}
Let $|\alpha|,|\beta| \ge 1$ be such that $|\alpha| + |\beta| \le 5$. Then for any pair $2 \le j,k \le N$ such that $j \in P^c$ if $|\alpha| \ge 2$ and $k \in S^c$ if $|\beta| \ge 2$ we have for
\begin{align}
\label{eq:ijk}
I_{j,k} := \int_{\real^{3N-3}} \partial_{x_j}^{\alpha}|x-x_j| \partial_{x_k}^{\beta} |y-x_k| \psi(x,\bold{\hat{x}}) \overline{\psi(y,\bold{\hat{x}})}  \Phi(x,y,\bold{\hat{x}}) \,d\bold{\hat{x}},
\end{align}
that there exists $C$, independent of $\delta$ and $\epsilon$, such that
\begin{align*}
|I_{j,k}| \le C\epsilon^{2-|\al|-|\beta|}\norm{\rho}^{1/2}_{L^{1}(B(x, 1))} \norm{\rho}^{1/2}_{L^{1}(B(y, 1))}
\end{align*}
for all $|x|,|y| \ge \epsilon$ and $|x-y| \le 2\delta$.
\end{lem}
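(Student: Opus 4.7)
The strategy is integration by parts to shift derivatives off of the singular Coulomb factors $|x-x_j|$ and $|y-x_k|$ onto the smoother factors $\psi(x,\hat\bx)$, $\overline{\psi(y,\hat\bx)}$, and $\Phi$. The IBP is valid on the support of $\Phi$ because the hypothesis $|\alpha|\ge 2 \Rightarrow j\in P^c$ together with (\ref{eq:lam_b3}) of Lemma \ref{lem:support} gives $|x-x_j|\ge(4N)^{-1}\delta>0$, making $|x-x_j|$ smooth there; the analogous bound for $|y-x_k|$ holds when $|\beta|\ge 2$. Boundary contributions vanish from the rapid decay of $\Phi$ in the cluster variables together with the $L^2$-decay of $\psi$ in the remaining variables.

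In the case $j\ne k$ the variables $x_j$ and $x_k$ are independent, and I would perform $|\alpha|-1$ integrations by parts in $x_j$ and $|\beta|-1$ in $x_k$, reducing both Coulomb factors to first-order derivatives (each bounded by $1$). The displaced $|\alpha|+|\beta|-2$ derivatives then distribute, via the Leibniz rule, over $\psi(x,\hat\bx)$, $\overline{\psi(y,\hat\bx)}$, and $\Phi$. A derivative of $\psi(x,\hat\bx)$ of order $a_j$ in $x_j$ and $a_k$ in $x_k$ is a cluster derivative for the cluster set $\{\{j\},\{k\}\}$, bounded by Theorem \ref{thm:ddpsi} as $C\lambda_{\{j\}}^{-a_j}\lambda_{\{k\}}^{-a_k}f_\infty(x,\hat\bx)$ (and analogously for $\psi(y,\hat\bx)$); derivatives of $\Phi$ are bounded by Lemma \ref{lem:dphi}, yielding $\epsilon^{-1}$ per derivative together with a subleading $\delta^{-1}M_\delta$ correction. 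The final integration uses Proposition \ref{prop:ff}: equation (\ref{eq:ff_int}) for the main $f_\infty f_\infty$ piece, (\ref{eq:mt_int}) for the $M_\delta$ terms (yielding negligible $O(\delta^{3-b}\epsilon^{-a})$ contributions with $a+b\le |\alpha|+|\beta|-2\le 3$), and (\ref{eq:gff_int}) for inverse-distance factors --- after a Young-type decomposition of $\lambda_{\{j\}}^{-a_j}$ into a sum of single-distance powers, each $G$ lying in $L^1(\real^3)$ thanks to the support structure of $\Phi$. Collecting the contributions gives the bound $C\epsilon^{-(|\alpha|+|\beta|-2)}\norm{\rho}^{1/2}_{L^1(B(x,1))}\norm{\rho}^{1/2}_{L^1(B(y,1))}$.

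The main obstacle is the case $j=k$, in which IBPs in $x_j=x_k$ can transfer derivatives between the two Coulomb factors, producing Leibniz terms in which one of them carries a higher-order derivative. To handle these, I would IBP alternately on the higher-order Coulomb factor at each step and enumerate the resulting Leibniz expansions; the hypothesis $|\alpha|+|\beta|\le 5$ bounds the total derivative count, so the procedure terminates after finitely many steps with each term consisting of low-order factors individually controllable by Theorem \ref{thm:ddpsi}, Lemma \ref{lem:dphi}, and Proposition \ref{prop:ff}. The combinatorial bookkeeping --- verifying that every Leibniz term recombines into the same $\epsilon^{-(|\alpha|+|\beta|-2)}$ estimate --- is the main technical burden of the argument, but no genuinely new estimates beyond those cited are required.
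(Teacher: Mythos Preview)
Your plan for the case $j\ne k$ is more elaborate than necessary but roughly workable; the paper does a \emph{single} integration by parts (taking one derivative off the higher-order Coulomb factor) and then bounds the remaining Coulomb factors directly via (\ref{eq:modx_lam}), so that only $\lambda(x,y,\hat\bx)^{-2}$ appears, which is controlled by Lemma~\ref{lem:lam_b2} and Corollary~\ref{cor:lam_b}. Your route of reducing both Coulomb factors to first order puts up to three derivatives onto $\psi$, and the bound you quote from Theorem~\ref{thm:ddpsi} omits the saving factor $\lambda_{\boldsymbol\alpha}$; without it you get $\lambda_{\{j\}}^{-3}$, hence terms like $|x_j-x_m|^{-3}$ which are \emph{not} in $L^1(\real^3)$, so (\ref{eq:gff_int}) does not apply. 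This is fixable by invoking the full bound (\ref{eq:ddpsi}), but you then also face cross-terms of the form $\lambda_{\{j\}}^{-a}\,M_\delta$ for which Proposition~\ref{prop:ff} does not directly give a bound; additional splitting is needed that you have not described.

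The genuine gap is in the case $j=k$. Your proposed ``alternate IBP on the higher-order factor'' does \emph{not} terminate: integrating $\partial_{x_j}^{\alpha}|x-x_j|\,\partial_{x_j}^{\beta}|y-x_j|$ by parts in $x_j$ always produces, among the Leibniz terms, the same product with one derivative shuffled from one Coulomb factor to the other, so the total order on the Coulomb pair stays at $|\alpha|+|\beta|=5$ forever. The paper breaks this loop with two ideas you do not mention. First, it moves \emph{all} $\beta$-derivatives off $|y-x_j|$, leaving $\partial_{x_j}^{\alpha+\beta}|x-x_j|\cdot|y-x_j|$, and then exploits the hypothesis $|x-y|\le 2\delta$ via the triangle inequality $|y-x_j| = |x-x_j| + O(\delta)$; the $O(\delta)$ part contributes $\delta\lambda^{-4}$, which is handled by (\ref{eq:lam_bnd2}). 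Second, for the remaining term $\partial_{x_j}^{\sigma}|x-x_j|\cdot|x-x_j|$ with $|\sigma|=5$, repeated IBP now shuffles between two copies of the \emph{same} function $|x-x_j|$, and after five steps one recovers the original integral with sign $(-1)^5=-1$; this yields the self-cancellation identity $I = -I + (\text{good terms})$, hence $I = \tfrac12(\text{good terms})$. The oddness of $5$ is essential here. Without these two ingredients the $j=k$ case does not close.
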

\begin{proof}
We suppose that $|\alpha| + |\beta| = 5$ otherwise the result follows immediately from (\ref{eq:modx_lam}) and Lemma \ref{lem:lam_b2}.

We first prove the case where $j \ne k$, where the integration by parts is particularly simple. Suppose without loss that $|\alpha| \ge |\beta|$. Then take any $l \le \al$ with $|l|=1$ and use integration by parts to obtain
\begin{align}
\label{eq:ijk2}
I_{j,k} &= -\int_{\real^{3N-3}} \partial_{x_j}^{\al-l}|x-x_j| \partial_{x_k}^{\beta} |y-x_k| \partial_{x_j}^{l} \big( \psi(x,\bold{\hat{x}}) \psi(y,\bold{\hat{x}})  \Phi(x,y,\bold{\hat{x}})\big) \,d\bold{\hat{x}}.
\end{align}
This can then be bounded in absolute value as required using (\ref{eq:modx_lam}) followed by either Lemma \ref{lem:lam_b2} or Corollary \ref{cor:lam_b}. This completes the proof where $j \ne k$.


For the remainder of the proof we consider the case where $j = k$. First, suppose $j \in P^c \cap S^c$. 
In the procedure that follows, we will use integration by parts to remove all derivatives from $|y-x_j|$ in $I_{j,j}$. 

Since $|\beta| \ge 1$ we can find some multiindex $\beta_1 \le \beta$ with $|\beta_1| = 1$. Integration by parts then gives
\begin{align}
\nonumber
I_{j,j} = &-\int_{\real^{3N-3}} \partial_{x_j}^{\al+\beta_1} |x-x_j| \partial_{x_j}^{\beta-\beta_1} |y-x_j| \psi(x,\bold{\hat{x}}) \psi(y,\bold{\hat{x}}) \Phi(x,y,\bold{\hat{x}}) \,d\bold{\hat{x}}\\
\label{eq:ikk_ibp}
&-\int_{\real^{3N-3}} \partial_{x_j}^{\al} |x-x_j| \partial_{x_j}^{\beta - \beta_1} |y-x_j| \partial_{x_j}^{\beta_1} \big(\psi(x,\bold{\hat{x}}) \psi(y,\bold{\hat{x}}) \Phi(x,y,\bold{\hat{x}})\big) \,d\bold{\hat{x}}.
\end{align}
We leave untouched the second integral above. For the first, if $|\beta| - |\beta_1| \ge 1$ we can remove another first-order derivative from $|y-x_j|$ by the same procedure. That is, using integration by parts to give two new terms as in (\ref{eq:ikk_ibp}). We retain the term where the derivative falls on $\psi(x,\bold{\hat{x}}) \psi(y,\bold{\hat{x}}) \Phi(x,y,\bold{\hat{x}})$. Whereas on the term where the derivative falls on $|x-x_j|$ we repeat the procedure, so long as there remains at least one derivative on $|y-x_j|$. Through this process, we obtain a formula for $I_{j,j}$. To express this formula, we first set $T=|\beta|$ and write $\beta = \sum_{s=1}^{T}\beta_s$ for some collection $|\beta_s| = 1$ where $1 \le s \le T$. Furthermore, we define
\begin{align*}
\beta_{<i} =
\begin{cases}
0 &\text{ if } i=1\\
\beta_1 &\text{ if } i=2\\
\beta_1 + \dots + \beta_{i-1} &\text{ if } i \ge 3
\end{cases}
\qquad \beta_{>i} =
\begin{cases}
0 &\text{ if } i=T\\
\beta_{T} &\text{ if } i=T-1\\
\beta_{i+1} + \dots + \beta_{T} &\text{ if } i \le T-2.
\end{cases}
\end{align*}
Then,
\begin{align}
\label{eq:ikk_iter}
I_{j,j} = (-1)^{T}\int_{\real^{3N-3}} \partial_{x_j}^{\alpha+\beta} |x-x_j| |y-x_j| \psi(x,\bold{\hat{x}}) \psi(y,\bold{\hat{x}}) \Phi(x,y,\bold{\hat{x}}) \,d\bold{\hat{x}}
+ \sum_{i=1}^{T}  (-1)^i I^{(i)}_{j,j}
\end{align}
where
\begin{equation*}
I^{(i)}_{j,j} = \int_{\real^{3N-3}} \partial_{x_j}^{\al+\beta_{<i}} |x-x_j| \partial_{x_j}^{\beta_{>i}} |y-x_j|
\partial_{x_j}^{\beta_i} \big(\psi(x,\bold{\hat{x}}) \psi(y,\bold{\hat{x}}) \Phi(x,y,\bold{\hat{x}})\big) \,d\bold{\hat{x}}.
\end{equation*}
We can bound $|I^{(i)}_{j,j}|$ for $1 \le i \le T-1$ in the same way as (\ref{eq:ijk2}) since $j \in P^c \cap S^c$. It remains to bound $I^{(T)}_{j,j}$, along with the first integral in formula (\ref{eq:ikk_iter}).

We begin by expanding $|y - x_j| = |x - x_j| + \big(|y-x_j| - |x-x_j| \big)$ and noticing that
\begin{equation}
\label{eq:delta_triangle}
\big||y-x_j| - |x- x_j| \big| \le |x-y| \le 2\delta.
\end{equation}
It is then straightforward to show (for example see the comment preceeding Lemma \ref{dcutoff_factors}) that for some $C,C'$,
\begin{align}
\label{eq:itkk}
\big|I^{(T)}_{j,j}\big| &\le C\int_{\real^{3N-3}} |x-x_j|^{-3} \big(2\delta + |x-x_j| \big)
\big|\partial_{x_j}^{\beta_T} \big(\psi(x,\bold{\hat{x}}) \psi(y,\bold{\hat{x}}) \Phi(x,y,\bold{\hat{x}})\big)\big| \,d\bold{\hat{x}}\\
\nonumber
&\le C'\int_{\real^{3N-3}} \lambda(x,y,\bold{\hat{x}})^{-2}  f_{\infty}(x,\bold{\hat{x}}) f_{\infty}(y,\bold{\hat{x}}) \big(\Phi(x,y,\bold{\hat{x}}) + |\nabla\Phi(x,y,\hbx)|\big) \,d\bold{\hat{x}}
\end{align}
where in the second step we used formula (\ref{eq:lam_alt}), since $j \in P^c$, along with (\ref{eq:lam_bnd}). This is then bounded as required by Lemma \ref{lem:lam_b2} and Corollary \ref{cor:lam_b}.
     
We now bound the first integral in (\ref{eq:ikk_iter}). Once again we use $|y - x_j| = |x - x_j| + \big(|y-x_j| - |x-x_j| \big)$, (\ref{eq:delta_triangle}) and (\ref{eq:lam_alt}) to show it suffices to bound the two expressions
\begin{align}
\label{eq:ikk_int1}
&\delta \int_{\real^{3N-3}} \lambda(x,y,\bold{\hat{x}})^{-4} f_{\infty}(x,\bold{\hat{x}}) f_{\infty}(y,\bold{\hat{x}}) \Phi(x,y,\bold{\hat{x}}) \,d\bold{\hat{x}},\\
\label{eq:ikk_int2}
&\int_{\real^{3N-3}} \partial_{x_j}^{\al+\beta} |x-x_j| |x-x_j| \psi(x,\bold{\hat{x}}) \psi(y,\bold{\hat{x}}) \Phi(x,y,\bold{\hat{x}}) \,d\bold{\hat{x}}.
\end{align}
The first is readily bounded as required using (\ref{eq:lam_bnd2}) of Lemma \ref{lem:lam_b2} with $b=4$.

It remains to bound (\ref{eq:ikk_int2}). We simplify the calculation by denoting $\sigma = \al + \beta$ and writing $\sigma = \sigma_1 + \dots + \sigma_5$ for some $|\sigma_s|=1$, $s=1, \dots, 5$. Define (in the same way as $\beta_{<i}$ and $\beta_{>i}$ above)
\begin{align*}
\sigma_{<i} =
\begin{cases}
0 &\text{ if } i=1\\
\sigma_1 &\text{ if } i=2\\
\sigma_1 + \dots + \sigma_{i-1} &\text{ if } 3 \le i \le 5,
\end{cases}
\qquad \sigma_{>i} =
\begin{cases}
0 &\text{ if } i=5\\
\sigma_5 &\text{ if } i=4\\
\sigma_{i+1} + \dots + \sigma_{5} &\text{ if } 1 \le i \le 3.
\end{cases}
\end{align*}
We now apply the same method used above, that is, we transfer successive first order derivatives via integration by parts. At each step we leave as a remainder the term where the derivative falls on $\psi(x,\bold{\hat{x}}) \psi(y,\bold{\hat{x}}) \Phi(x,y,\bold{\hat{x}})$. 
Since $|\sigma|=5$ is odd, the result after this procedure has occured five times is that (\ref{eq:ikk_int2}) is precisely equal to minus the same integral plus remainder terms. This explains the $\frac{1}{2}$-factor in the following formula,
\begin{multline*}
\int_{\real^{3N-3}} \partial_{x_j}^{\sigma} |x-x_j| |x-x_j| \psi(x,\bold{\hat{x}}) \psi(y,\bold{\hat{x}}) \Phi(x,y,\bold{\hat{x}}) \,d\bold{\hat{x}} \\
= \frac{1}{2}\sum_{i=1}^{5} (-1)^i \int_{\real^{3N-3}} \partial_{x_j}^{\sigma_{>i}} |x-x_j| \partial_{x_j}^{\sigma_{<i}} |x-x_j| \partial_{x_j}^{\sigma_i}\big(\psi(x,\bold{\hat{x}}) \psi(y,\bold{\hat{x}}) \Phi(x,y,\bold{\hat{x}})\big) \,d\bold{\hat{x}}.
\end{multline*}
By (\ref{eq:modx_lam}), each of these integrals can be bounded by the right-hand side of (\ref{eq:itkk}) for a new constant $C'$. This completes the proof in the case where $j \in P^c \cap S^c$.

Finally, we must bound (\ref{eq:ijk}) for when $j=k$ but where $j \in P^*$ or $j \in S^*$. We assume that $j \in P^*$, the case of $j \in S^*$ is similar. Hence we need only consider $|\alpha|=1$. To bound the integral, we first apply integration by parts to obtain
\begin{align*}
I_{j,j} &= -\int_{\real^{3N-3}} |x-x_j|\, \partial_{x_j}^{\beta+\alpha} |y-x_j|\, \psi(x,\bold{\hat{x}}) \psi(y,\bold{\hat{x}})  \Phi(x,y,\bold{\hat{x}}) \,d\bold{\hat{x}}
\\ 
&\qquad-\int_{\real^{3N-3}} |x-x_j|\, \partial_{x_j}^{\beta} |y-x_j|\, \partial_{x_j}^{\al}\big(\psi(x,\bold{\hat{x}}) \psi(y,\bold{\hat{x}})  \Phi(x,y,\bold{\hat{x}})\big) \,d\bold{\hat{x}}.
\end{align*}
By Lemma \ref{particle_ineqs} we have $|x-x_j| < \delta/2$ when $\Phi(x,y,\bold{\hat{x}}) \ne 0$. Along with (\ref{eq:modx_lam}) and (\ref{eq:lam_bnd}) we then get
\begin{align*}
|I_{j,j}| &\le C\delta \int_{\real^{3N-3}} \lambda(x,y,\bold{\hat{x}})^{-4} f_{\infty}(x,\bold{\hat{x}}) f_{\infty}(y,\bold{\hat{x}}) \Phi(x,y,\bold{\hat{x}}) \,d\bold{\hat{x}}\\
&\qquad+ C\int_{\real^{3N-3}} \lambda(x,y,\bold{\hat{x}})^{-2} f_{\infty}(x,\bold{\hat{x}}) f_{\infty}(y,\bold{\hat{x}}) \big(\Phi(x,y,\bold{\hat{x}}) + |\nabla\Phi(x,y,\bold{\hat{x}})|\big) \,d\bold{\hat{x}}
\end{align*}
for some $C$. The two terms are just (\ref{eq:ikk_int1}) and the right-hand side of (\ref{eq:itkk}) respectively, and hence can be suitably bounded. This completes all required cases.

\end{proof}

\begin{lem}
\label{lem:ff_holder}
Let $|\alpha| = 4$, $j \in P^c$ and $k \in S^c$. Then the integrals
\begin{align}
\label{eq:ff_holder1}
\int_{\real^{3N-3}} \partial_{x_j}^{\alpha}|x-x_j| \psi(x,\bold{\hat{x}})  e^{F(y,\bold{\hat{x}})} \overline{\nabla_y \phi(y,\bold{\hat{x}})}\Phi(x,y,\bold{\hat{x}}) \,d\bold{\hat{x}},
\end{align}
\begin{align}
\label{eq:ff_holder2}
\int_{\real^{3N-3}} e^{F(x,\bold{\hat{x}})} \nabla_x \phi(x,\bold{\hat{x}}) \partial_{x_k}^{\alpha}|y-x_k| \overline{\psi(y,\bold{\hat{x}})} \Phi(x,y,\bold{\hat{x}}) \,d\bold{\hat{x}}
\end{align}
can be bounded as in (\ref{eq:ab3}).
\end{lem}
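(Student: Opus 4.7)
The plan is to exploit the H\"older regularity of $\nabla\phi$ to beat the logarithmic divergence that a naive pointwise bound would produce. By Theorem~\ref{thm:c1} applied to (\ref{phi_pde_eq}), $\nabla\phi \in C^{0,\theta}(\real^{3N})$ for every $\theta \in [0,1)$, with the H\"older seminorm on any ball controlled by $f_{\infty}$ through (\ref{eq:phi_er})--(\ref{eq:phi_psi}). This is strictly stronger than the mere local boundedness of $\nabla\psi$ and is precisely what is needed to tame the fourth-derivative kernel $\partial_{x_j}^{\al}|x-x_j|\sim |x-x_j|^{-3}$, which in three dimensions is only barely (logarithmically) non-integrable in~$x_j$.

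First I would dispose of the subcase $j\in Q^c$ of (\ref{eq:ff_holder1}): here Lemma~\ref{lem:support} gives $|x-x_j|>(4N)^{-1}\epsilon$ on $\supp\Phi$, so the pointwise bound $\big|\partial_{x_j}^{\al}|x-x_j|\big|\le C\epsilon^{-3}$ combined with (\ref{eq:ff_int}) of Proposition~\ref{prop:ff} gives the desired estimate immediately. In the remaining subcase $j\in Q^{*}$, Lemma~\ref{particle_ineqs} (together with $|x-y|\le 2\delta\le\epsilon$) yields $|x-x_j|<C\epsilon$ on $\supp\Phi$, so $x_j$ is effectively confined to an annulus $(4N)^{-1}\delta<|x-x_j|<C\epsilon$, on which the pointwise bound on the kernel yields only $\log(\epsilon/\delta)$.

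For this remaining subcase I would decompose
\begin{equation*}
\nabla_y\phi(y,\hbx)=\nabla_y\phi(y,\hbx|_{x_j\mapsto x})+R_j(y,\hbx),
\end{equation*}
where $\hbx|_{x_j\mapsto x}$ denotes $\hbx$ with its $j$-th coordinate replaced by $x$, so that $|R_j(y,\hbx)|\le C|x-x_j|^{\theta}f_{\infty}(y,\hbx)$ by H\"older continuity (both evaluation points lie in a common ball of bounded radius, since $|x-x_j|<C\epsilon\le C$). The contribution of $R_j$ is then handled by (\ref{eq:gff_int}) of Proposition~\ref{prop:ff} with $G(z)=|z|^{\theta-3}\mathds{1}_{\{|z|\le 2\}}\in L^{1}(\real^{3})$ (integrable because $\theta>0$), giving a bound of order $\|\rho\|^{1/2}_{L^{1}(B(x,1))}\|\rho\|^{1/2}_{L^{1}(B(y,1))}\le C\epsilon^{-3}\|\rho\|^{1/2}\|\rho\|^{1/2}$. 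For the main piece, $\nabla_y\phi(y,\hbx|_{x_j\mapsto x})$ is $x_j$-independent, so by Fubini I pull it outside the $x_j$-integration and integrate by parts once in $x_j$; boundary terms at infinity vanish via an approximation argument using $\psi\in L^{2}$. Using that $\psi(x,\cdot)$ is Lipschitz, $F,\nabla F\in L^{\infty}$ from (\ref{f_linf}), and Lemma~\ref{lem:dphi} for $\partial_{x_j}^{e}\Phi$, the derivative $\partial_{x_j}^{e}(\psi\,e^{F}\Phi)$ is pointwise bounded on $\supp\Phi$ by $Cf_{\infty}(x,\hbx)(|\Phi|+|\nabla_{x_j}\Phi|)$. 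The resulting kernel satisfies $\big|\partial_{x_j}^{\al-e}|x-x_j|\big|\le C|x-x_j|^{-2}\le C\lambda(x,y,\hbx)^{-2}$ by (\ref{eq:lam_alt}), so the integral is controlled by combining Lemma~\ref{lem:lam_b2} with $b=2$, Corollary~\ref{cor:lam_b}, and (\ref{eq:mt_int}) for the $\delta^{-1}M_{\delta}$ contributions, ultimately yielding the required $C\epsilon^{-3}\|\rho\|^{1/2}_{L^{1}(B(x,1))}\|\rho\|^{1/2}_{L^{1}(B(y,1))}$ bound.

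The main obstacle I anticipate is the bookkeeping of the H\"older constant and the radii in the $f_{\infty}$ factors: since the H\"older estimate on $R_j$ forces a comparison of values of $\nabla\phi$ at two points that may differ by up to $C\epsilon$, the relevant constant from elliptic regularity involves $\|\phi\|_{L^{\infty}}$ on a ball of slightly enlarged radius, and this enlargement must be absorbed into the constants in Proposition~\ref{prop:ff}. Once $\theta\in(0,1)$ is fixed once and for all, this is a routine adjustment. The second estimate (\ref{eq:ff_holder2}) then follows from the entirely symmetric argument in which the r\^oles of $(x,P,\nabla_y\phi)$ and $(y,S,\nabla_x\phi)$ are interchanged.
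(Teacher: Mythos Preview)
Your proposal is correct and follows essentially the same approach as the paper: freeze the $x_j$-coordinate of $\nabla_y\phi$ at $x$, use the $C^{0,\theta}$ regularity of $\nabla\phi$ to control the remainder, and integrate by parts once on the frozen piece. The only organisational difference is how the ``far'' region is handled: the paper introduces a fixed-scale cutoff pair $\chi_1,\chi_2$ at $|x-x_j|\sim 1/4$ (so that the H\"older ball always has radius $1/4$, avoiding exactly the radius bookkeeping you flag as an obstacle), whereas you split by cluster membership $j\in Q^c$ versus $j\in Q^*$. Your case analysis works because $j\in Q^c$ already gives $|x-x_j|>c\epsilon$ on $\supp\Phi$, while $j\in Q^*$ forces $|x-x_j|<C\epsilon\le C$ there, so the H\"older comparison is on a ball of uniformly bounded radius; the paper's fixed cutoff achieves the same separation without reference to $Q$. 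For the Hölder remainder you invoke (\ref{eq:gff_int}) with $G(z)=|z|^{\theta-3}\mathds{1}_{\{|z|\le 2\}}$, while the paper bounds $|x-x_j|^{\theta-3}\le C\lambda^{\theta-3}$ and appeals to Lemma~\ref{lem:lam_b2}; both routes are equivalent. Your worry about boundary terms in the integration by parts is unnecessary in your $j\in Q^*$ case, since $\Phi$ already confines $x_j$ to a bounded set there.
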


\begin{proof}
We prove the bound for (\ref{eq:ff_holder1}), the case of (\ref{eq:ff_holder2}) is similar. We use the cutoff function $\xi \in C^{\infty}_c(\real)$, introduced in (\ref{eq:chi_def}), to define
\begin{align*}
\chi_1(z) := \xi(8|z|), \qquad \chi_2(z) := 1- \chi_1(z)
\end{align*}
$z \in \real^3$. Then $\chi_1(z) \ne 0$ implies $|z| < 1/4$ and $\chi_2(z) \ne 0$ implies $|z| > 1/8$.

We write the integral (\ref{eq:ff_holder1}) as
\begin{align}
\label{eq:ff_split2}
&\int_{\real^{3N-3}} \big(\chi_1(x-x_j) + \chi_2(x-x_j)\big) \partial_{x_k}^{\al}|x-x_j| \psi(x,\bold{\hat{x}}) e^{F(y,\bold{\hat{x}})} \overline{\partial_y^m \phi(y,\bold{\hat{x}})} \Phi(x,y,\bold{\hat{x}}) \,d\bold{\hat{x}}.
\end{align}
Expanding this and using, for example, the comment before Lemma \ref{dcutoff_factors}, the integral involving $\chi_2$ can be bounded in absolute value by some constant multiplying
\begin{align*}
\int_{\real^{3N-6}} \int_{\{x_j : |x-x_j|>1/8\}} |x-x_j|^{-3} |\psi(x,\bold{\hat{x}}) e^{F(y,\bold{\hat{x}})} \nabla \phi(y,\bold{\hat{x}})| \Phi(x,y,\bold{\hat{x}}) \,dx_j d\bold{\hat{x}}_{1,j}.
\end{align*}
We then use (\ref{eq:phi_er}), (\ref{eq:phi_psi}) and that $F$ is uniformly bounded on $\real^{3N}$ to bound this integral by some constant multiplying (\ref{eq:ff_int}) of Proposition \ref{prop:ff} with, say, $R=1/2$.

We now consider the $\chi_1$ term of (\ref{eq:ff_split2}). Firstly, we recall the notation introduced in (\ref{eq:bx1})-(\ref{eq:bx4}), namely we can write
\begin{equation*}
(y,x,\bold{\hat{x}}_{1,j}) = (y,x_2, \dots, x_{j-1}, x, x_{j+1}, \dots, x_N).
\end{equation*}
Take any $\theta \in (0,1)$. We know that $\phi \in C^{1,\theta}(\real^{3N})$. In particular, using (\ref{eq:phi_er}) and (\ref{eq:phi_psi}) there exists a constant $C$ such that when $|x-x_j| < 1/4$,
\begin{gather}
\label{eq:phi_linf}
|\partial_y^m \phi(y,x,\bold{\hat{x}}_{1,j})| \le \norm{\nabla \phi}_{L^{\infty}(B((y,\bold{\hat{x}}), 1/4))} \le Cf_{\infty}(y,\bold{\hat{x}}),\\
\label{eq:dphi_holder}
\big|\partial_y^m \phi(y,\bold{\hat{x}}) - \partial_y^m \phi(y,x,\bold{\hat{x}}_{1,j}) \big| \le |x-x_j|^{\theta}[\nabla\phi]_{\theta, B((y,\bold{\hat{x}}), 1/4)} \le C|x-x_j|^{\theta}f_{\infty}(y,\bold{\hat{x}}).
\end{gather}
The constant $C$ depends on $\theta$ but is independent of $x,y$ and $\bold{\hat{x}}$.

We now write the $\chi_1$ term in (\ref{eq:ff_split2}) as the sum of the following two integrals,
\begin{gather}
\label{eq:ff_split2a}
\int_{\R^{3N-3}} \chi_1(x-x_j) \partial_{x_j}^{\al}|x-x_j| \psi(x,\bold{\hat{x}}) e^{F(y,\bold{\hat{x}})} \overline{\partial_y^m \phi(y,x,\bold{\hat{x}}_{1,j})} \Phi(x,y,\bold{\hat{x}}) \, d\bold{\hat{x}},\\
\nonumber
\int_{\R^{3N-3}} \chi_1(x-x_j) \partial_{x_j}^{\al}|x-x_j| \psi(x,\bold{\hat{x}}) e^{F(y,\bold{\hat{x}})} \overline{\big(\partial_y^m \phi(y,\bold{\hat{x}}) - \partial_y^m \phi(y,x,\bold{\hat{x}}_{1,j}) \big)} \Phi(x,y,\bold{\hat{x}}) \, d\bold{\hat{x}}.
\end{gather}
The second integral can be bounded by some constant multiplying
\begin{align*}
\int_{\R^{3N-3}} \lambda(x,y,\bold{\hat{x}})^{-3+\theta} f_{\infty}(x,\bold{\hat{x}}) f_{\infty}(y,\bold{\hat{x}}) \Phi(x,y,\bold{\hat{x}}) \, d\bold{\hat{x}}
\end{align*}
using (\ref{eq:dphi_holder}) and that $F$ is uniformly bounded. Lemma \ref{lem:lam_b2} can then be used to obtain the required bound.

It suffices to bound (\ref{eq:ff_split2a}). Integration by parts in the variable $x_j$ is used in (\ref{eq:ff_split2a}) to remove a single derivative from $\partial_{x_j}^{\al}|x-x_j|$. 
Take any $l \le \al$ with $|l|=1$. Integral (\ref{eq:ff_split2a}) can therefore be rewritten as
\begin{multline*}
\int_{\R^{3N-3}} \partial^{l} \chi_1(x-x_j)\, \partial_{x_j}^{\al-l}|x-x_j| \, \psi(x,\bold{\hat{x}})\, e^{F(y,\bold{\hat{x}})}\, \overline{\partial_y^m \phi(y,x,\bold{\hat{x}}_{1,j})} \,\Phi(x,y,\bold{\hat{x}}) \,d\bold{\hat{x}}\\
-\int_{\R^{3N-3}} \chi_1(x-x_j)\, \partial_{x_j}^{\al-l}|x-x_j| \,\overline{\partial_y^m \phi(y,x,\bold{\hat{x}}_{1,j})}\, \partial_{x_j}^{l} \big(e^{F(y,\bold{\hat{x}})} \psi(x,\bold{\hat{x}}) \Phi(x,y,\bold{\hat{x}}) \big) \,d\bold{\hat{x}}.
\end{multline*}
We can use (\ref{eq:modx_lam}), (\ref{eq:phi_linf}) and the fact that both $F$ and $\nabla F$ are uniformly bounded, to bound this in absolute value by some constant multiplying
\begin{align*}
\int_{\real^{3N-3}} \lambda(x,y,\bold{\hat{x}})^{-2} f_{\infty}(x,\bold{\hat{x}}) f_{\infty}(y,\bold{\hat{x}}) \big(\Phi(x,y,\bold{\hat{x}})+|\nabla \Phi(x,y,\bold{\hat{x}})|\big) \,d\bold{\hat{x}}.
\end{align*}
The relevant bound then follows by Lemma \ref{lem:lam_b2} and Corollary \ref{cor:lam_b}.

\end{proof}

\appendix

\section{Second derivatives of $\phi$}
\label{chpt:app}
In \cite{fsho_c11} it was shown that $\phi$, as defined in (\ref{phi_def}), has improved smoothness upon multiplication by a certain exponential factor, depending only on $N$ and $Z$. It is the aim of this section to use these results to give pointwise bounds to the second derivatives of $\phi$ itself, as required in the proof of Theorem \ref{thm:ddpsi}.

The above authors introduce the following functions
\begin{align*}
\tilde{F} &= -\frac{Z}{2}\sum_{j=1}^N \xi(|x_j|)|x_j| + \frac{1}{4}\sum_{1\le j<k \le N}\xi(|x_j-x_k|)|x_j - x_k|,\\
\tilde{G} &=  C_0 Z \sum_{1\le j<k\le N} \xi(|x_j|)\xi(|x_k|)(x_j \cdot x_k) \log\big(|x_j|^2 + |x_k|^2 \big),
\end{align*}
where $\xi \in C^{\infty}_c(\real)$ is defined in (\ref{eq:chi_def}) and $C_0 = (2-\pi)/12\pi$. The authors obtain $e^{-\tilde{F}-\tilde{G}}\psi \in W_{loc}^{2,\infty}(\real^{3N})$, \cite[Theorem 1.5]{fsho_c11}. We note that this is an improvement upon the well-known fact that $\phi \in C^{1,\theta}(\real^{3N})$ for all $\theta \in (0,1)$, see Section \ref{chpt:2}.

For our purposes, it will be convenient to restate their results using a slightly different form of the exponential factor. We define
\begin{align*}
\phi' = e^{-F-G}\psi
\end{align*}
where $F = F_c-F_s$ was defined in (\ref{f_def}) and
\begin{align*}
G &= C_0 Z \sum_{j<k}(x_j \cdot x_k) \log\big(|x_j|^2 + |x_k|^2 \big) - C_0 Z \sum_{j<k}(x_j \cdot x_k) \log\big(|x_j|^2 + |x_k|^2 +1 \big)\\
&=: G_c - G_s.
\end{align*}


We can then write $\phi' = e^{H} e^{-\tilde{F}-\tilde{G}}\psi$ for
\begin{align*}
H= -F_c + F_s + \tilde{F} - G_c + G_s + \tilde{G}.
\end{align*}
It can be verified, with the help of (\ref{f2_linf}) and (\ref{f_linf}) and direct calculation, that $H$ is smooth and $\partial^{\al}H \in L^{\infty}(\real^{3N})$ for all $|\al| \le 2$, $\al \in \naturals_0^{3N}$. Therefore, we can restate \cite[Theorem 1.5]{fsho_c11} as follows.
\begin{thm}[S. Fournais, M. and T. Hoffmann-Ostenhof, T.\O. S\o rensen]
\label{thm:fs}
For all $0<r<R$ we have a constant $C$, depending on $r$ and $R$, such that
\begin{equation}
\norm{\phi'}_{W^{2,\infty}(B(\bold{x}, r))} \le C\norm{\phi'}_{L^{\infty}(B(\bold{x}, R))}
\end{equation}
for all $\bold{x} \in \real^{3N}$. The constant does not depend on $\bx$.
\end{thm}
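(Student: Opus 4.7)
The plan is to deduce Theorem \ref{thm:fs} from \cite[Theorem 1.5]{fsho_c11} via the factorisation $\phi' = e^{H} u$ with $u := e^{-\tilde F - \tilde G}\psi$, which was established in the paragraph preceding the statement. My first step is to extract from the cited theorem the uniform-in-$\bx$ quantitative estimate
\begin{equation*}
\norm{u}_{W^{2,\infty}(B(\bx,r))} \le C_1\, \norm{u}_{L^\infty(B(\bx,R))}, \qquad \bx \in \real^{3N}.
\end{equation*}
This uniformity is built into the argument of \cite{fsho_c11}: after removing the exponential factor, $u$ is a weak solution of an elliptic equation whose principal part is $-\Delta$ and whose lower-order coefficients are uniformly bounded on all of $\real^{3N}$, since $\tilde F$ and $\tilde G$ are engineered precisely to cancel the singular parts of the potential and of $\nabla(F+G)$. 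Interior $W^{2,p}$ Calder\'on--Zygmund estimates together with the additional regularisation of \cite{fsho_c11} then yield constants that depend only on $N$, $Z$, $E$ and the radii $r,R$.

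The second step is a routine Leibniz rule computation. The function $H = -F_c + F_s + \tilde F - G_c + G_s + \tilde G$ satisfies $H \in C^\infty(\real^{3N})$ together with $\partial^\alpha H \in L^\infty(\real^{3N})$ for every $|\alpha|\le 2$, and in particular $H$ itself is bounded. By the Fa\`a di Bruno formula, $\partial^\beta(e^H)$ is a polynomial in $\{e^H, \partial^\gamma H : |\gamma|\le |\beta|\}$, so $\partial^\beta(e^H) \in L^\infty(\real^{3N})$ for every $|\beta|\le 2$. Expanding
\begin{equation*}
\partial^\alpha \phi' = \sum_{\beta \le \alpha} \binom{\alpha}{\beta} \partial^\beta(e^H)\, \partial^{\alpha-\beta} u, \qquad |\alpha|\le 2,
\end{equation*}
and taking $L^\infty$-norms over $B(\bx,r)$ produces a constant $C_2$, depending only on $N, Z$, with $\norm{\phi'}_{W^{2,\infty}(B(\bx,r))} \le C_2\, \norm{u}_{W^{2,\infty}(B(\bx,r))}$. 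Since $u = e^{-H}\phi'$ and $e^{-H} \in L^\infty(\real^{3N})$, one also has the pointwise bound $\norm{u}_{L^\infty(B(\bx,R))} \le \norm{e^{-H}}_\infty \norm{\phi'}_{L^\infty(B(\bx,R))}$. Chaining the three inequalities delivers the desired estimate with a constant depending only on $r, R, N, Z, E$.

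The main obstacle is bookkeeping in the first step: \cite[Theorem 1.5]{fsho_c11} is stated qualitatively as $e^{-\tilde F-\tilde G}\psi \in W^{2,\infty}_{\loc}(\real^{3N})$, with no explicit dependence of the local constants on the base point. However, inspection of the proof shows that the cancellation of the singular contributions in the elliptic equation for $u$ is genuinely pointwise, so the constants depend only on $N$, $Z$, $E$ and the ball radii; once this uniform form is made explicit, no further analytic input beyond the Leibniz expansion above is required.
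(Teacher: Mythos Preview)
Your proposal is correct and follows essentially the same approach as the paper: the paper simply notes that $\phi' = e^{H}(e^{-\tilde F - \tilde G}\psi)$ with $H$ smooth and $\partial^{\alpha}H \in L^{\infty}(\real^{3N})$ for $|\alpha|\le 2$, and declares that this allows \cite[Theorem 1.5]{fsho_c11} to be restated as Theorem \ref{thm:fs}. You have merely made explicit the Leibniz-rule bookkeeping and the uniformity-in-$\bx$ of the constants that the paper leaves to the reader.
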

\begin{remark}
We use the equivalence of the spaces $C^{1,1}(\overline{B})$ and $W^{2,\infty}(B)$ in the case of an open ball $B$, see for example \cite{evans_gariepy}. 
\end{remark}

The following lemma can be verified by straightforward calculations. Here, $e_i$ represents the $i$-th unit basis vector in $\real^3$, $i=1,2,3$.
\begin{lem}
\label{lem:ddg}
$G, \nabla G \in L^{\infty}(\real^{3N})$. Also, for all $j,k = 1, \dots N$ and $r,s \in \{1,2,3\}$ there exists some $C$ such that
\begin{align*}
|\partial^{e_r}_{x_j} \partial^{e_s}_{x_k} G(\bx)| \le 
\begin{cases}
C\big(1 - \log\big(\min\{1,\, |x_j|^2+|x_k|^2\}\big)\big) &\text{ if } j \ne k \text{ and } r=s\\
C &\text{ otherwise}.
\end{cases}
\end{align*}
\end{lem}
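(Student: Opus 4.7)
The proof will reduce to a routine but careful analysis of a single scalar function. My plan is to write $G$ in the more transparent form
\[
G(\bx) = C_0 Z \sum_{j<k} (x_j \cdot x_k)\, f(|x_j|^2 + |x_k|^2), \qquad f(t) := \log\frac{t}{t+1} = -\log\Big(1 + \frac{1}{t}\Big),
\]
which absorbs the difference $G_c - G_s$ into one smoother function. The key analytic ingredients will be the elementary pointwise bounds
\[
|f(t)| \le C\big(1 - \log\min\{1,t\}\big), \qquad f'(t) = \frac{1}{t(t+1)}, \qquad -f''(t) = \frac{2t+1}{t^2(t+1)^2},
\]
all following from direct computation together with the estimate $\log(1+1/t) \le \log 2 - \log t$ valid for $t \le 1$.

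Boundedness of $G$ and $\nabla G$ will then follow by combining these formulas with Cauchy--Schwarz, $|x_j \cdot x_k| \le \tfrac{1}{2}(|x_j|^2 + |x_k|^2) =: t/2$, together with the trivial inequalities $|x_j|, |x_k| \le \sqrt{t}$. Each summand $(x_j \cdot x_k) f(t)$ is dominated by $\tfrac{t}{2}|f(t)|$, which is controlled since $t|\log t| \to 0$ at the origin and $t \cdot |f(t)| \le 1$ as $t \to \infty$. The first derivatives produce terms of the form $(x_k)_r f(t)$ and $2(x_j \cdot x_k)(x_j)_r f'(t)$, each estimated similarly.

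For the second derivatives, the essential observation is that when $j \ne k$ only the single summand involving the pair $(j,k)$ contributes to $\partial^{e_r}_{x_j} \partial^{e_s}_{x_k} G$. Writing $u = x_j$, $v = x_k$, and $t = |u|^2 + |v|^2$, a direct calculation yields
\[
\partial^{e_s}_v \partial^{e_r}_u \big[(u \cdot v) f(t)\big] = \delta_{rs}\, f(t) + 2(u_r u_s + v_r v_s)\, f'(t) + 4 (u \cdot v)\, u_r v_s\, f''(t).
\]
The last three terms are bounded by absolute constants: $|u_r u_s f'(t)| \le t \cdot |f'(t)| \le 1$, and $|(u \cdot v) u_r v_s f''(t)| \le \tfrac{t^2}{4} \cdot (2t+1)/(t^2(t+1)^2) \le 1$. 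The only potentially unbounded contribution is $\delta_{rs} f(t)$, which vanishes when $r \ne s$ and is controlled by $C(1 - \log\min\{1, |x_j|^2 + |x_k|^2\})$ when $r = s$, matching the first branch of the case analysis. The case $j = k$ is handled by the analogous formula for $\partial^{e_s}_u \partial^{e_r}_u$, whose expansion contains only terms of the form $u_r v_s f'$, $(u \cdot v) \delta_{rs} f'$, and $(u \cdot v) u_r u_s f''$: none involves $f(t)$ itself, so each is bounded by a constant via the same estimates.

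The main obstacle is essentially bookkeeping: one must identify, among the four terms in the mixed partial derivative, that the logarithmic singularity arises solely from the $\delta_{rs}$ coefficient of $f(t)$, whose cancellation in the off-diagonal case $r \ne s$ is precisely what distinguishes the two branches of the stated bound.
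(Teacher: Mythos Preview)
Your proposal is correct and is precisely the ``straightforward calculations'' the paper alludes to but does not spell out; the paper gives no proof of this lemma beyond that phrase, so there is nothing to compare against. Your rewriting $G = C_0 Z\sum_{j<k}(x_j\cdot x_k)f(|x_j|^2+|x_k|^2)$ with $f(t)=\log\frac{t}{t+1}$ is exactly the right simplification, and your identification of the $\delta_{rs}f(t)$ term as the sole source of the logarithmic singularity (present only when $j\ne k$ and $r=s$) is the heart of the matter.
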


Since $\phi$ does not contain the additional exponential factor involving $G$, we do not expect it's second order derivatives to be bounded. However, we obtain the following corollary of Theorem \ref{thm:fs} to obtain pointwise bounds showing the second order derivatives of $\phi$ have only logarithmic singularities. We recall that $\nu_P$ was defined in (\ref{eq:nup}).
\begin{cor}
\label{cor:d2phi}
For all $0<r<R<1$ there exists $C$, depending on $r$ and $R$, such that for any non-empty cluster $P$ and any $\eta \in \naturals_0^3$ with $|\eta|=1$,
\begin{align*}
\norm{D_P^{\eta} \nabla \phi}_{L^{\infty}(B(\bold{x}, r \nu_P(\bold{x})))} \le C(1-\log\nu_P(\bx)) \norm{\phi}_{L^{\infty}(B(\bold{x}, R))}
\end{align*}
for all $\bold{x}$ with $\nu_P(\bx) > 0$.
\end{cor}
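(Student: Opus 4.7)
The plan is to exploit the factorisation $\phi = e^{G}\phi'$, where $\phi' = e^{-G}\phi$ is the function introduced earlier in the appendix, which enjoys the improved regularity $W^{2,\infty}_{\loc}(\real^{3N})$ given by Theorem \ref{thm:fs}. The logarithmic factor $1-\log\nu_P(\bx)$ will arise entirely from the logarithmic singularity of the mixed second derivatives of $G$ recorded in Lemma \ref{lem:ddg}.

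First, fix $0<r<R<1$ and choose an intermediate radius $r' \in (r,R)$. Since $G \in L^\infty(\real^{3N})$ by Lemma \ref{lem:ddg}, both $e^G$ and $e^{-G}$ are bounded, so $\norm{\phi'}_{L^\infty(B(\bx,R))} \le C\norm{\phi}_{L^\infty(B(\bx,R))}$. Applying Theorem \ref{thm:fs} on radii $r'$ and $R$ then yields
\begin{equation*}
\norm{\phi'}_{W^{2,\infty}(B(\bx,r'))} \le C\norm{\phi}_{L^\infty(B(\bx,R))}.
\end{equation*}
Because $\nu_P(\bx)\le 1$, we have $B(\bx,r\nu_P(\bx)) \subset B(\bx,r')$, so the same bound applies on the smaller ball.

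Next, expand via the Leibniz rule: for $j\in P$, $k\in\{1,\dots,N\}$, $|\eta|=1$ and $s\in\{1,2,3\}$,
\begin{multline*}
\partial_{x_j}^{\eta}\partial_{x_k}^{e_s}\phi = e^{G}\Bigl[\partial_{x_j}^{\eta}\partial_{x_k}^{e_s}\phi' + (\partial_{x_j}^{\eta}G)(\partial_{x_k}^{e_s}\phi') + (\partial_{x_k}^{e_s}G)(\partial_{x_j}^{\eta}\phi') \\ + \bigl((\partial_{x_j}^{\eta}\partial_{x_k}^{e_s}G) + (\partial_{x_j}^{\eta}G)(\partial_{x_k}^{e_s}G)\bigr)\phi'\Bigr].
\end{multline*}
By Lemma \ref{lem:ddg}, $G,\nabla G \in L^\infty(\real^{3N})$, and combined with the $W^{2,\infty}$-bound for $\phi'$, every term on the right-hand side except $e^G\phi'\,(\partial_{x_j}^{\eta}\partial_{x_k}^{e_s}G)$ is bounded pointwise on $B(\bx,r\nu_P(\bx))$ by a constant multiple of $\norm{\phi}_{L^\infty(B(\bx,R))}$.

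For the remaining term, Lemma \ref{lem:ddg} delivers a logarithmic bound only in the case $j\neq k$, $\eta=e_s$. Here, since $j\in P$, for any $\by \in B(\bx,r\nu_P(\bx))$ the estimate $|y_j|\ge |x_j|-|\by-\bx|\ge (1-r)\nu_P(\bx)$ gives
\begin{equation*}
\min\{1,|y_j|^2+|y_k|^2\}\ge (1-r)^2\nu_P(\bx)^2,
\end{equation*}
whence $|\partial_{x_j}^{\eta}\partial_{x_k}^{e_s}G(\by)|\le C(1-\log\nu_P(\bx))$. Summing over the finitely many indices $j \in P$, $k$, $s$ appearing in $D_P^{\eta}\nabla\phi$ then yields the claimed bound. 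The only point requiring care is that $\phi$ is not classically twice differentiable, but the Leibniz expansion is valid in the weak sense since $\phi'\in W^{2,\infty}_{\loc}$ and $G$ is smooth away from the collision set with locally integrable first and second distributional derivatives; no substantial obstacle arises beyond this routine verification.
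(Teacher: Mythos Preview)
Your proof is correct and follows essentially the same approach as the paper: both use the factorisation $\phi = e^{G}\phi'$, apply Theorem~\ref{thm:fs} to control $\phi'$ in $W^{2,\infty}$ on the ball, and then expand via the Leibniz rule, with the logarithmic loss coming solely from the mixed second derivative of $G$ controlled through Lemma~\ref{lem:ddg} and the bound $|y_j|\ge (1-r)\nu_P(\bx)$ for $j\in P$. The only cosmetic differences are that the paper writes the Leibniz expansion directly for $D_P^{\eta}\nabla\phi$ rather than component-wise, and does not introduce an intermediate radius $r'$ (it simply uses $r\nu_P(\bx)\le r$).
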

\begin{rem}
By the definitions, it is immediate that $\nu_P \ge \lambda_P$ and hence, in particular, the above inequality holds on $\Sigma^c_P$.
\end{rem}
\begin{proof}

By the definition of cluster derivatives, (\ref{eq:cd}), it follows from Lemma \ref{lem:ddg} that there is some $C$ such that
\begin{align}
\label{eq:ddg}
|D_P^{\eta}\nabla G(\bx)| \le C(1 - \log(\nu_P(\bx)))
\end{align}
for all $\bx$ with $\nu_P(\bx) > 0$.

By (\ref{lam_lip}), we get that $(1-r)\nu_P(\bold{x}) \le \nu_P(\bold{y})$ for all $\bold{y} \in B(\bold{x}, r\nu_P(\bold{x}))$. Therefore, for $C$ as in (\ref{eq:ddg}), we have for all $\bold{x}$ with $\nu_P(\bx)>0$,
\begin{align}
\label{eq:ddg2}
\norm{D_P^{\eta} \nabla G}_{L^{\infty}(B(\bold{x}, r\nu_P(\bold{x})))} &\le C(1 - \log(1-r) - \log(\nu_P(\bx)))\\
&\le C'(1 - \log(\nu_P(\bx)))
\end{align}
for some $C'$ depending on $r$.

We recall from the definition, (\ref{phi_def}), that $\phi = e^{-F}\psi$. Therefore $\phi = e^G\phi'$. Then, we have $\nabla \phi = e^G \phi' \nabla G + e^G \nabla \phi'$. And therefore the following formula holds,
\begin{align*}
D_P^{\eta} \nabla \phi = \big(D_P^{\eta} \nabla G + D_P^{\eta} G\, \nabla G\big)e^G\phi' + e^G D_P^{\eta} \phi'\, \nabla G +  e^G D_P^{\eta}G\, \nabla \phi' +  e^G D_P^{\eta}\nabla \phi'.
\end{align*}
Taking the norm and using that $G,\nabla G \in L^{\infty}$ by Lemma \ref{lem:ddg} we can then obtain $C$ such that
\begin{align}
\label{eq:ddphi_b1}
\norm{D_P^{\eta} \nabla \phi}_{L^{\infty}(B(\bold{x}, r\nu_P(\bold{x})))} &\le C(1-\log(\nu_P(\bx))) \norm{\phi'}_{W^{2,\infty}(B(\bold{x}, r\nu_P(\bold{x})))}.
\end{align}
Using that $\nu_P \le 1$, we then apply Theorem \ref{thm:fs} followed by a use of the equality $\phi = e^G\phi'$ to obtain the required result.

\end{proof}

\vskip 0.5cm

\textbf{Acknowledgments.} The author would like to thank A. V. Sobolev for helpful discussions in all matters of the current work.

\bibliographystyle{unsrt}
\bibliography{refs}

\begin{thebibliography}{10}

\bibitem{rs2}
M.~Reed and B.~Simon.
\newblock {\em Methods of Modern Mathematical Physics II: Fourier Analysis,
  Self-Adjointness}.
\newblock Academic Press, Inc., 1975.

\bibitem{kato}
T.~Kato.
\newblock On the eigenfunctions of many-particle systems in quantum mechanics.
\newblock {\em Comm. Pure Appl. Math.}, 10:151--177, 1957.

\bibitem{HOS_81}
M.~Hoffmann-Ostenhof and R.~Seiler.
\newblock Cusp conditions for eigenfunctions of n-electron systems.
\newblock {\em Phys. Rev. A.}, 23(1), 1981.

\bibitem{densities_atoms}
M.~Hoffmann-Ostenhof, T.~Hoffmann-Ostenhof, and T.{\O}. S{\o}rensen.
\newblock {Electron Wavefunctions and Densities for Atoms}.
\newblock {\em Ann. Henri. Poincare}, 2:77--100, 2001.

\bibitem{yse_02}
H.~Yserentant.
\newblock {On the regularity of the electronic Schr\"{o}dinger equation in
  Hilbert spaces of mixed derivatives}.
\newblock {\em Numer. Math.}, 98:731--759, 2004.

\bibitem{analytic_rep}
S.~Fournais, M.~Hoffmann-Ostenhof, T.~Hoffmann-Ostenhof, and T.{\O}.
  S{\o}rensen.
\newblock {Analytic Structure of Many-Body Coulombic Wave Functions}.
\newblock {\em Comm. Math. Phys.}, 289:291--310, 2009.

\bibitem{ammann12}
B.~Ammann, C.~Carvalho, and V.~Nistor.
\newblock {Regularity for Eigenfunctions of Schr\"{o}dinger Operators}.
\newblock {\em Lett. Math. Phys.}, 101:49--84, 2012.

\bibitem{coulomb_estimates}
S.~Fournais and T.{\O}. S{\o}rensen.
\newblock {Estimates on derivatives of Coulombic wave functions and their
  electron densities}.
\newblock {\em J. reine angew. Math.}, 2021(775):1--38, 2021.

\bibitem{hearn_sob2}
P.~Hearnshaw and A.V. Sobolev.
\newblock {The diagonal behaviour of the one-particle Coulombic density
  matrix}.
\newblock {\em Probability and Mathematical Physics}, 4(4):935--964, 2023.

\bibitem{cio20}
J.~Cioslowski.
\newblock Off-diagonal derivative discontinuities in the reduced density
  matrices of electronic systems.
\newblock {\em J. Chem. Phys}, 153(154108), 2020.

\bibitem{cio22}
J.~Cioslowski.
\newblock Reverse engineering in quantum chemistry: How to reveal the
  fifth-order off-diagonal cusp in the one-electron reduced density matrix
  without actually calculating it.
\newblock {\em Int J Quantum Chem}, 122(8), 2022.
\newblock e26651.

\bibitem{hearn_sob}
P.~Hearnshaw and A.V. Sobolev.
\newblock {Analyticity of the One-Particle Density Matrix}.
\newblock {\em Ann. Henri. Poincare}, 23:707--738, 2022.

\bibitem{jecko22}
T.~Jecko.
\newblock On the analyticity of electronic reduced densities for molecules.
\newblock {\em J. Math. Phys.}, 63(1), 2022.
\newblock 013509.

\bibitem{density_analytic}
S.~Fournais, M.~Hoffmann-Ostenhof, T.~Hoffmann-Ostenhof, and T.{\O}.
  S{\o}rensen.
\newblock Analyticity of the density of electronic wavefunctions.
\newblock {\em Ark. Mat.}, 42:87--106, 2004.

\bibitem{jecko10}
T.~Jecko.
\newblock {A New Proof of the Analyticity of the Electronic Density of
  Molecules}.
\newblock {\em Lett. Math. Phys.}, (93):73--83, 2010.

\bibitem{sob_estimates}
A.V. Sobolev.
\newblock Eigenvalue estimates for the one-particle density matrix.
\newblock {\em J. Spectr. Theory}, 12(2):857--875, 2022.

\bibitem{sob_asymptotics}
A.V. Sobolev.
\newblock Eigenvalue asymptotics for the one-particle density matrix.
\newblock {\em Duke Math. J.}, 171(17):3481--3513, 2022.

\bibitem{sob_kinetic}
A.V. Sobolev.
\newblock Eigenvalue asymptotics for the one-particle kinetic energy density
  operator.
\newblock {\em Journal of Functional Analysis}, 283(8), 2022.

\bibitem{jecko23}
T.~Jecko and C.~No\^{u}s.
\newblock Limited regularity of a specific electronic reduced density matrix
  for molecules.
\newblock 2023.
\newblock arXiv, 2309.06318.

\bibitem{jecko24}
T.~Jecko and C.~No\^{u}s.
\newblock {Regularity of the (N-1)-particle electronic reduced density matrix
  for molecules with fixed nuclei and N electrons}.
\newblock 2024.
\newblock arXiv, 2407.03706.

\bibitem{density_smooth}
S.~Fournais, M.~Hoffmann-Ostenhof, T.~Hoffmann-Ostenhof, and T.{\O}.
  S{\o}rensen.
\newblock The electron density is smooth away from the nuclei.
\newblock {\em Comm. Math. Phys.}, 228(3):401--415, 2002.

\bibitem{fsho_c11}
S.~Fournais, M.~Hoffmann-Ostenhof, T.~Hoffmann-Ostenhof, and T.{\O}.
  S{\o}rensen.
\newblock {Sharp Regularity Results for Coulombic Many-Electron Wave
  Functions}.
\newblock {\em Comm. Math. Phys.}, 255:183--227, 2005.

\bibitem{evans_gariepy}
L.C. Evans and R.F. Gariepy.
\newblock {\em Measure Theory and Fine Properties of Functions}.
\newblock CRC Press, 1st edition, 1992.

\end{thebibliography}

\end{document}